\documentclass[a4paper,UKenglish,cleveref, autoref, thm-restate]{lipics-v2021}

\usepackage{amsmath,amsthm,nicefrac}
\usepackage{amsfonts, amstext}
\usepackage{cases}
 \usepackage{libertine}
 \usepackage[libertine]{newtxmath}

\usepackage[font={small,it}]{caption}

\usepackage{setspace}
\usepackage{thm-restate}

\usepackage{epsfig}
\usepackage{amsthm}

\usepackage{mathrsfs}
\usepackage{xspace}
\usepackage{soul}
\usepackage{latexsym}
\usepackage{bbm}

\usepackage{framed}

\usepackage[dvipsnames]{xcolor}
\definecolor{DarkGray}{rgb}{0.66, 0.66, 0.66}
\definecolor{DarkPowderBlue}{rgb}{0.0, 0.2, 0.6}
\definecolor{fluorescentyellow}{rgb}{0.8, 1.0, 0.0}
\definecolor{cerulean}{rgb}{0.0, 0.48, 0.65}
\definecolor{bleudefrance}{rgb}{0.19, 0.55, 0.91}

\usepackage[ruled,vlined,linesnumbered,algonl]{algorithm2e}
\SetEndCharOfAlgoLine{}
\SetKwComment{Comment}{\footnotesize$\triangleright$\ }{}

\SetCommentSty{mycommfont}

\usepackage{thmtools,thm-restate}

\allowdisplaybreaks

\sethlcolor{fluorescentyellow}

\newcommand{\initOneLiners}{%
    \setlength{\itemsep}{0pt}
    \setlength{\parsep }{0pt}
    \setlength{\topsep }{0pt}
}

\newtheorem{assumption}[theorem]{Assumption}

\theoremstyle{definition}
\newtheorem{defn}[theorem]{Definition}

\makeatletter 
\renewcommand{\theinvariant}{(I\@arabic\c@invariant)}
\makeatother

\newcommand{\ent}{\mathrm{H}}
\newcommand{\topk}{{\sc Top}-$k$\xspace}
\newcommand{\lru}{{\sc LRU}\xspace}

\newcommand{\hsub}{\mathscr{H}\xspace}
\newcommand{\lfu}{{\sc LFU}\xspace}

\newcommand{\page}{{\textsf{Paging}}\xspace}

\newcommand{\eps}{\varepsilon}

\newcommand{\opt}{{\textsf{opt}}\xspace}

\newcommand{\EE}{\mathbb{E}}

\newcommand{\TT}{\mathbb{T}}

\newcommand{\poly}{\operatorname{poly}}

\newcommand{\cD}{{\mathcal{D}}}

\newcommand{\E}{\mathcal{E}}

\newcommand{\nf}{\nicefrac}

\newcommand{\tailp}{p(\bT)}

\newcommand{\bH}{{\mathbb H}}
\newcommand{\bT}{{\mathbb T}}

\newcommand{\junk}[1]{}
\newcommand{\eat}[1]{}

\renewcommand{\setminus}{\smallsetminus}

\newif\ifhideproofs

\ifhideproofs
\usepackage{environ}
\NewEnviron{hide}{}

\fi

\usepackage{xr}
\usepackage{mathtools}

\DeclareMathOperator*{\probability}{\Pr}
\newcommand\prob[1]{\probability\left[ #1 \right]}

\DeclareMathOperator*{\expectation}{\mathbb{E}}
\newcommand{\expect}{\expectation\expectarg}
\DeclarePairedDelimiterX{\expectarg}[1]{[}{]}{%
	\ifnum\currentgrouptype=16 \else\begingroup\fi
	\activatebar#1
	\ifnum\currentgrouptype=16 \else\endgroup\fi
}

\DeclarePairedDelimiterX{\nicesetarg}[1]{\{}{\}}{%
	\ifnum\currentgrouptype=16 \else\begingroup\fi
	\activatebar#1
	\ifnum\currentgrouptype=16 \else\endgroup\fi
}

\newcommand{\innermid}{\nonscript\;\delimsize\vert\nonscript\;}
\newcommand{\activatebar}{%
	\begingroup\lccode`\~=`\|
	\lowercase{\endgroup\let~}\innermid 
	\mathcode`|=\string"8000
}

\title{Stochastic Caching via Subset Entropy}

\author{Ravi Kumar}{Google Research, Mountain View CA, USA \and \url{https://sites.google.com/site/ravik53/}}{ravi.k53@gmail.com}{https://orcid.org/0000-0002-2203-2586}{}

\author{Roie Levin}{Rutgers University, Piscataway NJ, USA \and \url{https://roielevin.com/}}{roie.levin@rutgers.edu}{https://orcid.org/0000-0003-2907-7186}{}

\author{Joseph (Seffi) Naor}{Technion -- Israel Institute of Technology, Haifa, Israel \and \url{https://naor.cswp.cs.technion.ac.il/}}{naor@cs.technion.ac.il}{https://orcid.org/0009-0009-4234-9466}{Supported in part by ISF grant 3001/24 and United States – Israel
BSF grant 2022418.}

\author{Debmalya Panigrahi}{Duke University, Durham  NC, USA \and \url{https://www.debmalyapanigrahi.org/}}{debmalya@cs.duke.edu}{https://orcid.org/0000-0003-1799-6660}{Supported in part by NSF grants CCF-2329230 and CCF-1955703.}

\category{APPROX}

\authorrunning{R.\ Kumar, R.\ Levin, J.\ Naor, and D.\ Panigrahi} 

\Copyright{Ravi Kumar, Roie Levin, Joseph Naor, and Debmalya Panigrahi} 

\ccsdesc[500]{Theory of computation~Online algorithms} 

\keywords{Online Algorithms, Beyond Worst-Case Analysis, Caching, Paging, Entropy} 

\category{} 

\relatedversion{} 

\nolinenumbers 

\EventLongTitle{29th International Conference on Approximation Algorithms for Combinatorial Optimization Problems (APPROX 2026)}
\EventShortTitle{APPROX 2026}
\EventAcronym{APPROX}
\EventYear{2026}
\ArticleNo{1}

\begin{document}

\maketitle

\begin{abstract}
    A classic approach to beyond worst-case algorithm design is to impose stochastic assumptions on the input. However, a limiting feature of such stochastic analyses is that, by the min-max principle, performance on worst-case distributions mirrors that of randomized algorithms on worst-case inputs. In other words, the same  shortcoming of worst-case analysis---its inability to distinguish between ``easy'' and ``hard' instances---reappears as an inability to distinguish between ``easy'' and ``hard'' distributions. This raises a natural question: {\em Can we characterize ``easy'' input distributions with useful beyond worst-case bounds?}
    
    A canonical example is the {\em stochastic caching} problem (Aho, Denning, and Ullman, 1971).  When the page requests are drawn i.i.d. from the uniform distribution, the best achievable competitive ratio is $O(\log k)$, matching the performance of the best randomized algorithm on worst-case instances (Fiat, Karp, Luby, McGeoch, Sleator, and Young, 1991).  However, when the input distribution has less entropy, intuition suggests that we should be able to do better by exploiting the information provided by the distribution. In this paper, we formalize this intuition by defining a new information-theoretic parameter of probability distributions called {\em subset entropy}. We then use this parameter to give a fine-grained characterization of the competitive ratio of stochastic caching, including a new analysis for the well-known {\rm LRU} algorithm on stochastic inputs.

    While our technical results are for the caching problem, we believe the broader principle---parameterizing algorithmic performance by an entropy measure of the input---is of independent interest and might apply to other online and stochastic optimization problems.  Indeed, for many other fundamental problems such as (comparison-based) sorting, online matching, online load balancing, etc., etc., the hardest stochastic instances involve high-entropy distributions. We hope our work is a step toward a broader theory of fine-grained algorithmic performance for this class of problems.  
\end{abstract}

\pagenumbering{gobble}

\clearpage

\pagenumbering{arabic}

\section{Introduction}

Stochastic optimization is a well-known and extensively-studied framework for beyond worst-case analysis of algorithmic problems. The basic premise is to make an assumption about the input being drawn from a probability distribution, rather than being generated adversarially, thereby enabling algorithms that exploit structural properties of the distribution.  However, a common shortfall of this strategy is that algorithmic performance on the worst-case input distribution often mirrors that of the best randomized algorithm for worst-case inputs. This equivalence, a consequence of the min-max principle (Yao~\cite{Yao77}), is often useful for proving randomized lower bounds for algorithms but, conversely, also lower bounds the performance of the best stochastic algorithms. This is rather limiting because it brings back the main deficiency of worst-case analysis---its inability to separate ``easy'' from ``hard'' instances---as being unable to differentiate between ``easy'' and ``hard'' distributions. This raises the natural question: {\em can we obtain a finer-grained characterization of input distributions that yields beyond worst-case bounds for ``easy'' distributions?}

\par\noindent{\bf The Caching Problem.} A case in point is the classical online \emph{caching} (or paging) problem. In this problem, a sequence of page requests arrives online from a universe of $n$ pages. The goal is to serve each request using a cache that can only hold $k$ ($\ll n$) pages at a time while minimizing the total number of page swaps.
Caching is used in almost every extant computer system. It is also a focal point for modern developments in online algorithms such as the
online primal-dual method, projections, and mirror descent~\cite{BNsurvey, BCN14, BCLLM18}.
The problem is considered well-understood in online adversarial models: 
in their seminal work, Sleator and Tarjan \cite{ST85}
showed that the \emph{least-recently used} (\lru) policy is $k$-competitive, and that any deterministic online algorithm is at least $k$-competitive. 
When randomization is allowed, Fiat et al. \cite{F+} showed an elegant and improved $O(\log k)$-competitive algorithm. Unfortunately, this bound is asymptotically tight for randomized algorithms even when page requests are sampled i.i.d.\ from a known distribution, since no algorithm can do better on the uniform distribution over $n=k+1$ pages \cite{F+}.

Yet, our understanding of the caching problem is wanting even in the simplest ``beyond worst-case'' setting, where each request is drawn independently from a fixed distribution. As noted, the uniform distribution \emph{is} the worst-case distribution---but this is also the least compelling case for caching. If requests are uniform, there is no reason for a caching algorithm to prefer one page over another, and any caching algorithm is as good as any other. In contrast, if page requests are skewed toward a set of fewer than $k$ pages, smart caching can be effective, as one can hope to identify the frequently requested pages and prioritize evicting others. 

This leads to our main questions:  
\textsl{Which distributions admit better-than-worst-case competitive ratios? And which online caching algorithms attain tighter bounds for distributions that are caching-friendly?}    

\subsection{Our Results}

We study the {\em stochastic caching} problem, where we assume the input sequence of requests is drawn independently from a fixed distribution. The {\em competitive ratio} of a stochastic caching algorithm is the ratio of the expected cost of the algorithm and the expected cost of an optimal offline solution, where both expectations are taken over the input distribution (and the algorithm, if randomized). 

Our first contribution is a new measure called {\em subset entropy} to quantify the competitive ratio of a stochastic caching instance. A reasonable first guess for such a  measure would be the \emph{entropy} of the input distribution; indeed, this would align in the case of the  uniform distribution over $n=k+1$ pages, where the entropy is $\Theta(\log k)$. 
However, if the distribution places $1-\eps$ mass on one page, and $\eps/(n-1)$ on each of the remaining pages, then the entropy is near zero, but this is essentially the same stochastic caching instance as before, but with a cache of size $k-1$, and with the $n-1$ pages of low probability in the original instance.

This motivates our definition of subset entropy for a distribution $\cD$: it is the maximum, taken over all subsets in the support of a distribution $\cD$, of the entropy of $\cD$ conditioned to the subset. In the case of stochastic caching, we only need to consider subsets of size at most $k$ in the support; we call this {\em $k$-subset entropy} and denote it $\hsub(\cD, k)$, or $\hsub$ for short when $\cD$ and $k$ are clear from context. In \Cref{known distribution}, we show how to achieve competitive ratio $O(\hsub)$ for every input distribution $\cD$.\footnote{
There do exist distributions for which $\hsub$ is large and yet the distribution admits smaller competitive ratio. E.g., the uniform distribution with $n = 10k$ pages has $\hsub = O(\log k)$ but any algorithm is constant-competitive. To our knowledge this is true of most parameterized results, since it is rare to identify a parameter that guarantees \emph{both} upper and lower bounds.}
\begin{theorem}[Informal]
\label{thm:main1}
    There is a caching algorithm that for a given distribution $\cD$ and a sequence of page requests drawn i.i.d.\ from $\cD$, achieves a competitive ratio of $O(\hsub)$.
\end{theorem}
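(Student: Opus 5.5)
We prove the statement with a two-tier algorithm. Sort the pages by probability, $p_1 \ge p_2 \ge \dots \ge p_n$. For a threshold index $j<k$ (chosen below), we permanently pin the \emph{heavy} pages $1,\dots,j$ in the cache. The remaining $k-j$ cache slots are managed on the \emph{light} pages $\{j+1,\dots,n\}$ by the randomized marking algorithm of Fiat et al.~\cite{F+}. The conditioned instance is the request stream restricted to the light pages; by independence it is again i.i.d., with distribution $\cD$ conditioned on $\{j+1,\dots,n\}$ and cache size $k-j$. The proof has three steps.

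\textbf{Step 1: lower-bound \opt by phases.} Partition the request sequence into phases, where each phase is a maximal window containing requests to $k+1$ distinct pages. Every offline solution pays at least one fault per phase, so $\EE[\opt] \gtrsim \EE[\#\text{phases}]$. We would also record the analogous bound for the light subsequence with cache size $k-j$. Since \opt restricted to light pages can use at most $k$ slots, we need a comparison between \opt on the full instance and \opt on the light instance with $k-j$ slots. Choosing $j$ so that the heavy pages are almost never evicted by \opt should give this comparison up to a constant factor.

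\textbf{Step 2: bound the marking cost by subset entropy.} Within a phase of the light instance, marking pays about $\sum_i 1/i$ over the ``new'' pages. This is at most $O(\log(k-j))$, but we want a refinement. Whether the $i$-th distinct page seen in a phase is missing from the cache depends on how spread out the probability mass is. The plan is to show that the expected marking cost per phase is $O(\ent(\cD \mid S))$, where $S$ is the set of the (at most $k$) most likely light pages that appear within a typical phase. One route is a coupon-collector style argument: the expected number of faults is governed by how many distinct pages arrive with comparable probabilities. That count is in turn bounded by $2^{O(\ent(\cD\mid S))}$, so the harmonic sum is $O(\ent(\cD\mid S)) \le O(\hsub)$. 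Concretely, we would bucket pages by probability scale ($p \in [2^{-\ell-1},2^{-\ell})$) within $S$. We would then show that the fault cost per phase is bounded by $O(1+\sum_\ell q_\ell \log(\cdot))$, where $q_\ell$ is the mass of bucket $\ell$, and that this quantity is dominated by $\ent(\cD\mid S)$.

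\textbf{Step 3: choose $j$ and handle the heavy pages.} We pick $j$ as the largest index such that the pages $1..j$ are requested with high enough probability that any page among them recurs before $k+1$ distinct light pages appear. Then pinning them costs \opt essentially nothing extra, and the residual light distribution has no dominant atom, which is what Step 2 needs. The costs of both tiers are added, and the total is compared against Step 1.

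\textbf{Main obstacle.} The hard part is Step 2: relating the harmonic fault sum of marking to the conditional entropy of some $k$-subset. The difficulty is that phases have random length and random composition. We would first attempt the bucket argument combined with concavity of entropy, and fall back on a direct potential-function argument if that fails.
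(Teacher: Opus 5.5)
\paragraph{The accounting in Steps 1--2 does not close, and the lemma you call the main obstacle targets the wrong quantity.} Comparing ``marking pays $O(\hsub)$ per phase'' with ``\opt pays $\Omega(1)$ per phase'' fails whenever phases contain many new pages. Take $\cD$ uniform on $10k$ pages, where your rule gives $j=0$. A phase then has about $0.9k$ new pages, so marking pays $\Theta(k)$ per phase, while counting phases only certifies $\EE[\opt]=\Omega(T/k)$. At a minimum you need $\opt\ge\frac12\sum_\phi m_\phi$, where $m_\phi$ is the number of new pages in phase $\phi$, together with a bound per new page.

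Even per new page, marking's cost is not controlled by the entropy of the light distribution. Each phase opens with a miss that evicts a uniformly random one of the $s=k-j$ unmarked old pages. Because the requests are independent of the algorithm's coins, the evicted old pages stay uniformly distributed among the unmarked ones. So the $r$-th distinct old page requested misses with probability at least $1/(s-r+1)$, and a phase with $m$ new pages incurs at least $H_s-H_m$ such faults in expectation, for every $\cD$. For example, take $p_i\propto 2^{-i}$ (so $\hsub=O(1)$) with $s=k$. Phases have $O(1)$ new pages and \opt pays $O(1)$ per phase, yet marking pays about $\ln k$ per phase. Hence no bucketing or concavity argument in terms of $\ent(\cD_S)$ can prove Step 2 as stated. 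What your scheme actually needs is for Step 3 to force $k-j=2^{O(\hsub)}$, plausibly by pinning every page with $p_i\ge C\hsub\cdot\tailp$. Once that holds, the standard $2H_{k-j}$ marking bound already suffices and the entropy refinement is moot.

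\paragraph{The real difficulty is the Step 1 comparison, which you only assert.} That step compares \opt on the full instance with $k$ slots against \opt on the light instance with $k-j$ slots. The offline optimum sees the realization and will drop a pinned page precisely during that page's atypically long inter-request gaps. So ``heavy pages are almost never evicted by \opt'' needs control over all gaps of all pinned pages at once. The cost of that union bound depends on how many pages share a probability scale, which is where a $\Theta(\hsub)$ factor must come from.

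The paper avoids marking altogether. It runs \topk, whose cost is $2\tailp$ per request, and spends all its effort showing $\EE[\opt]=\Omega(\tailp/\hsub)$ per request. The steps are:
\begin{enumerate}
\item Fit a stationary dual to the knapsack-cover LP. It puts mass $p_{\sigma_j}-p_{\sigma_{j+1}}$ on the prefixes $[k]\cup\{\sigma_1,\dots,\sigma_j\}$ of the tail pages whose current interval has length at most $3/p_i$, and its expected objective is at least $2\tailp/3$.
\item Scale this dual down by $3\ln(2\eta)$, where $\eta=\min\{k,n_{\max}\}$.
\item Zero it at every time lying in an interval whose constraint is overloaded by more than that factor. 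For a head page this essentially means an interval longer than $3\ln(2\eta)/p_k$.
\item Group the head pages into dyadic classes of size at most $2\eta$. A union bound over these classes shows the zeroing event has probability at most $1/4$, and this is exactly where the $\Theta(\hsub)$ loss enters.
\end{enumerate}
The missing ingredient in your proposal is a lower bound on \opt of this kind: of order $\tailp/\hsub$ per request, and robust to the optimum's exploitation of long gaps.
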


Note that $\hsub$ is at most the worst case bound $O(\log k)$, but it is also much sharper for many distributions. In fact, we show in \Cref{sec:prelim} that the only distributions for which $\hsub = \Theta(\log k)$ are those that are nearly uniform, namely those for which some subset of $\Omega(\poly(k))$ pages have probabilities within a constant factor of each other.
To our knowledge, the notion of subset entropy is novel and would be interesting to explore in other contexts as well. For example, does it encapsulate interesting information-theoretic properties of a distribution? We show one property in \Cref{sec:prelim}: informally, we show that it captures the cardinality of the largest (approximately) uniform distribution embedded in the overall distribution.  A different question: is $\hsub$ (exactly) computable efficiently?  
We answer this in the affirmative in \Cref{sec:subset-algo},  
given explicit access to the distribution.  We hope this new parameter will be studied from other angles in the future.

The algorithm we use to establish \Cref{thm:main1} is natural: since $\cD$ is known, we keep the top $k$-1 pages by probability in cache permanently and use the last cache slot to serve all other requests;\footnote{For ease of analysis, we define the \topk algorithm slightly differently from the optimal online strategy; see \Cref{known distribution}.} we call this algorithm \topk. Indeed, in an important work from 1971, Aho, Delling, and Ullman~\cite{AhoDU71} showed that this is the {\em optimal} online caching strategy in the i.i.d.\ setting.\footnote{In the stochastic setting, an optimal online solution is well-defined. It can be obtained from a Markov decision process via a dynamic program which, in general, can have an exponential-sized state space.} Though the algorithm is unsurprising, the analysis is new. We show that \topk is $O(\hsub)$-competitive against the  \emph{offline} optimum. This quantifies the precise role of randomness in the input sequence on the competitive ratio, namely the difference between knowing the input \emph{distribution} versus the actual \emph{realization} of the input sequence. 

The previous \topk algorithm requires knowing the input distribution (we call this the {\em known} i.i.d.\ setting).  Now, we consider the {\em unknown} i.i.d.\ setting where the caching algorithm no longer has access to the input distribution. Intuitively, we would still like to keep the top $k$ pages by probability in the cache. Ordering pages by decreasing probability is equivalent to ordering them by increasing expected length of the interval between consecutive requests. Since we do not have access to the expected interval lengths, we use as a surrogate the length of the interval since the last request for every page---i.e., keep in cache the $k$ most recently requested pages.  This is none other than the well-known \lru algorithm, and our second result shows that it is $O(\hsub^3)$-competitive even in the unknown i.i.d.\ setting.
\begin{theorem}
\label{thm:main2}
    When given a sequence of page requests drawn i.i.d.\ from an unknown distribution $\cD$, the \lru caching algorithm achieves a competitive ratio of $O(\hsub^3)$.
\end{theorem}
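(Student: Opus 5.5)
We prove \Cref{thm:main2} by comparing \lru to the same two benchmarks used for \topk in \Cref{thm:main1}. The upper side is the expected cost of \lru per request. The lower side is a per-request lower bound on the offline optimum, expressed through the probability mass outside the top pages and through $\hsub$.

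\textbf{Step 1: the cost of \lru.} Sort pages by probability, $p_1\ge p_2\ge\cdots$. In steady state, a request for page $i$ faults under \lru exactly when at least $k$ distinct other pages were requested since the last request to $i$. So the fault probability per request is
\[
\sum_i p_i \Pr[\text{at least } k \text{ distinct pages appear in a Geometric}(p_i) \text{ gap}].
\]
We split pages into \emph{heavy} ones, $i\le k'$, and \emph{light} ones, for a threshold $k'=k/\Theta(\hsub)$ or similar. Light pages contribute at most their total mass $p_{>k'}$. For a heavy page, the gap has length about $1/p_i$. Within such a window the expected number of distinct pages is $\sum_j \bigl(1-(1-p_j)^{1/p_i}\bigr)\le \sum_j \min(1,p_j/p_i)$. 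A concentration bound for the number of distinct pages, which is a sum of negatively associated indicators, then bounds the probability that this count reaches $k$.

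\textbf{Step 2: lower bounds on the optimum.} Any algorithm, offline included, pays at least about $p_{>k}$ per request, up to constants and the structure of the top pages. Every page outside the cache contents must be fetched when requested, and a Belady-type argument over phases shows that the optimum pays $\Omega(1/\text{phase length})$. The lower bounds established for \Cref{thm:main1} relate \opt to the \topk cost by a factor of $\hsub$; we will use them in the form
\[
\opt \gtrsim \frac{p_{\ge k}}{\hsub}.
\]
It then suffices to show that the per-request cost of \lru is $O(\hsub^2)\, p_{\ge k}$, up to an additive term that is itself within $O(\hsub^3)$ of \opt.

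\textbf{Step 3: the main obstacle.} The hard part is bounding the heavy-page faults. Fix a heavy page $i$. The number of distinct pages seen in a window of $1/p_i$ requests is roughly $|\{j: p_j\ge p_i\}| + \sum_{j:p_j<p_i} p_j/p_i$. For this to reach $k$, the mass below $p_i$ must be large relative to $p_i$. We use the structural characterization of $\hsub$ from \Cref{sec:prelim}: a subset of $m$ pages with probabilities within a constant factor has conditional entropy about $\log m$, so $\hsub$ bounds such $m$ by $2^{O(\hsub)}$. We bucket the pages by probability into powers of two and count how many heavy pages lie in buckets whose window would reach $k$ distinct pages. Each bucket is approximately uniform, so its size is at most $2^{O(\hsub)}$. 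Charging the fault mass of page $i$, which is at most $p_i$, against the tail mass $p_{\ge k}$ should cost a factor $\hsub$ for the number of relevant buckets and another factor $\hsub$ from the concentration slack. Combined with the factor $\hsub$ in Step 2, this gives $O(\hsub^3)$. If the direct charging fails, the fallback is to show that the expected \lru cost on heavy pages is at most $O(\hsub)$ times the \topk cost with cache size $k/2$. The resource-augmentation loss would then be absorbed by the subset entropy, since halving the cache changes $\hsub$ by at most a constant.
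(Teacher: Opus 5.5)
There is a genuine gap. Your two endpoints are the right ones. \lru's per-request cost is $\sum_i p_i q_i$, with $q_i$ the probability that a gap of page $i$ contains $k$ distinct other pages, and \Cref{lem:opt_lb} reduces the theorem to showing $\sum_i p_iq_i=O(\hsub^2\,\tailp)$.

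\textbf{Problems in Steps 1 and 2 and the fallback.} The lower bound must be stated with $\tailp=p_{>k}$, not $p_{\ge k}$. Take $k$ pages of probability $(1-\eps)/k$ plus tail mass $\eps\to 0$: \topk pays $2\eps$, far below $p_{\ge k}/\hsub\approx 1/(k\log k)$. Nor does every algorithm pay ``about $p_{>k}$'': on the uniform distribution over $k+1$ pages the optimum is $\Theta(\tailp/\log k)$, which is exactly where the $\hsub$ loss comes from. The same example breaks Step~1. With $k'=k/\Theta(\hsub)<k$, charging light pages their full mass charges head pages $k'+1,\dots,k$ at full mass, which is $\Theta(1)$ here while $\tailp=\eps$; only pages outside the top $k$ can be charged this way. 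The fallback against \topk with cache $k/2$ fails for the same reason. That benchmark pays about $2p_{>k/2}=\Theta(1)$, while the optimum with cache $k$ pays at most $2\eps$. That $\hsub$ changes by only a constant when $k$ is halved is beside the point; what changes is the tail mass.

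\textbf{The main gap: Step~3.} It posits the final $\hsub^2$ (one factor for ``relevant buckets'', one for ``concentration slack'') without a mechanism, and two ideas are missing.

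First, you need a way to charge head faults to tail requests. Since the head has exactly $k$ pages, suppose $X_\ell$ pages of a level $L_\ell$ are absent from the first $k$ distinct pages of a fresh sequence. Then at least $X_\ell$ tail requests occur before $L_\ell$ is fully collected, i.e.\ $X_\ell\le Z_\ell$. Wald's identity and a coupon-collector bound on the approximately uniform level then give $\sum_{i\in L_\ell}p_iq_i\le 4\hsub\,\tailp$, with the $\hsub$ coming from the harmonic number $H_{n_\ell}$. Concentration of the distinct count in a typical window of length $1/p_i$ does not capture this. A head fault is driven by an atypically long geometric gap that contains tail requests and essentially all other head pages.

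Second, the head need not have $O(\hsub)$ levels. For example, $p_i=2^{-i}$ for $i\le k$ has $\hsub=O(1)$ but $k$ levels and $\tailp\approx 2^{-k}$. So the flat per-level bound only yields $O(\log(1/\tailp)\cdot\hsub\,\tailp)$. The paper splits the head into three ranges:
\begin{itemize}
\item Levels below $\tailp/2^{\Theta(\hsub)}$ are charged by mass, since each level has $2^{O(\hsub)}$ pages by \Cref{lem:dominate}.
\item The $O(\hsub)$ levels between that threshold and $\Theta(\hsub\,\tailp)$ use the flat bound.
\item Levels above need a bound decaying like $2^{-(\ell_1-\ell)}$. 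This comes from $\Pr[Z_\ell\ge t]\le (O(2^{\ell}\hsub\,\tailp))^t$ for $t\ge 2$, together with the observation that a fault with $Z_\ell=1$ forces the lower-probability page $r_1$ to be requested before $L_\ell$ is complete.
\end{itemize}
Nothing in your proposal plays the role of this last step, and without it the sum over high-probability levels does not close.
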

As intuition would predict, \lru approximately learns which pages in cache are most likely to be requested furthest in future and evicts these first. How well it does so depends on how predictable is this information, which in turn is captured by the subset entropy of $\cD$. Since $\hsub = O(\log k)$, this also shows that \lru is $O(\log^3 k)$-competitive for unknown i.i.d.\ inputs. To our knowledge, this is the first analysis of \lru in a stochastic setting that yields an $o(k)$-competitive factor.

A more precise (although also not as lightweight) alternative to \lru is the \emph{least-frequently used} (\lfu) algorithm that always evicts the page with the least number of requests so far. Our third result shows that LFU is $O(\hsub)$-competitive for i.i.d.\ inputs. This implies that up to constant factors, the known and unknown i.i.d.\ settings have the same competitive ratio. 

\begin{restatable}{theorem}{lfuthm}
\label{thm:main3}
    When given a sequence of page requests drawn i.i.d.\ from an unknown distribution $\cD$, the LFU caching algorithm achieves a competitive ratio of $O(\hsub)$.    
\end{restatable}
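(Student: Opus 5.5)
Our plan is to reduce the analysis of \lfu to the analysis of \topk from \Cref{thm:main1}. Order pages so that $p_1 \ge p_2 \ge \dots \ge p_n$, and let $T=\{1,\dots,k-1\}$ be the set that \topk keeps permanently. The intuition is that after enough requests the empirical frequency order used by \lfu agrees with the true probability order, up to pages whose probabilities are within a constant factor of one another. Confusing such pages costs little.

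First we split time into a \emph{learning phase} and a \emph{stable phase}. We define the stable event for a request at time $t$ as follows: every page $i$ with $p_i \ge 2p_k$ has more empirical counts than every page $j$ with $p_j \le p_k/2$. The number of such counts is a sum of i.i.d.\ indicators. By a Chernoff bound together with a union bound over the relevant pairs, for $t \gtrsim \log(k)/p_k$ the event fails with probability decaying geometrically in $t p_k$. We then charge the learning-phase cost. Before time $O(1/p_k)$, \lfu pays at most one fault per request, so its cost is $O(1/p_k)$ in expectation, up to the failure tail. We must compare this with \opt's cost over the same window. On a horizon of length $\Theta(1/p_k)$, at least $k+1$ distinct pages arrive with constant probability, or else only pages in the top-$k$ region are being requested. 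For the constant-factor gap we would instead lower bound \opt via the \Cref{thm:main1} bound on \topk, which relates \opt to \topk through $\hsub$. We expect this warm-up accounting to be the main obstacle. Our first attempt will be to compare directly with the \opt lower bound used in the proof of \Cref{thm:main1}: that bound should already charge \opt for distinct-page arrivals in windows of length about $1/p_k$. The learning phase then adds only $O(1)$ such windows, and hence $O(\hsub)\cdot\opt$ plus lower-order terms.

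In the stable phase, \lfu's cache contains all pages with $p_i\ge 2p_k$. Its remaining slots hold pages from the band $B=\{i: p_k/2< p_i< 2p_k\}$, or pages whose counts are currently largest, which are also in $B$ or above. A fault for \lfu happens either on a request to a page outside $B\cup T$, which \topk pays too, or on a request to a band page not currently held. To bound the band faults, we observe that the stable phase behaves like \topk on a modified instance. The heavy pages are fixed, and the band pages are served by a cache of $k-|\{p_i\ge 2p_k\}|$ slots. Since band probabilities are within a factor $4$ of each other, the rate of band faults per request is $O(\sum_{i\in B} p_i)$. This is exactly the quantity the subset-entropy lower bound for \opt controls: restricting $\cD$ to the subset formed by $B$ together with the tail yields a nearly uniform conditional distribution, whose entropy is at most $\hsub$. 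The analysis from \Cref{thm:main1} then gives $\opt \ge \Omega(\text{band fault rate}/\hsub)$ per unit time.

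Finally, we sum the three contributions: learning-phase cost, tail requests (at most \topk's cost, which is $O(\hsub)\cdot\opt$), and band faults (also $O(\hsub)\cdot\opt$). Taking expectations over the horizon gives $O(\hsub)$-competitiveness. The Chernoff failure probabilities contribute only an additive geometric series of lower order.
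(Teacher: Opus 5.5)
\textbf{There is a genuine gap in the learning phase, which is where the content of the theorem lies.}

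You measure that phase in time (about $\log k/p_k$ steps) and charge \lfu one fault per step. But $1/p_k$ is controlled by none of $k$, $\hsub$, or \opt. Take $k-2$ pages of probability about $1/k$, one page of probability $2\delta$, one of probability $\delta$, and a constant number of tail pages of probability $\delta/2$. Your stable event genuinely needs $\Theta(1/\delta)$ steps, so you charge $\Omega(1/\delta)$. Over that horizon \topk, and hence \opt, pays only $O(k)$ in expectation; now let $\delta\to 0$. A window-by-window comparison with \Cref{lem:opt_lb} does not rescue this, since that lemma gives only $\Omega(\tailp/(p_k\hsub))$ per window of length $1/p_k$.

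An additive term of order $1/p_k$ is exactly the slow burn-in the theorem is meant to avoid, and small $\hsub$ makes it worse. Small $\hsub$ forces the head to span many dyadic levels, so $p_k$ is tiny (e.g.\ $n_{\max}=O(1)$ forces $p_k\le 2^{-\Omega(k)}$).

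The paper instead counts per head page, in units of that page's own requests. Set $\gamma=16\log\min(k,n_{\max})=O(\hsub)$. A biased random walk shows that once page $i$ has $\gamma$ requests, a page or \emph{group} of pages of total mass at most $p_i/2^h$ ever overtakes it with probability at most $2^{-h\gamma/8}$ (\Cref{cl:rw-set}). The many small pages are grouped into at most $k$ groups (or levels of size at most $n_{\max}$). This replaces your pairwise union bound, which ranges over unboundedly many $j$ and cannot give $O(\hsub)$. The result is $\EE[R_i]\le 2\gamma$ for all head pages outside a set of mass $O(\tailp)$. So not-yet-separated head pages cause $O(k\hsub)$ misses in total, independently of $p_k$ and $T$.

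\textbf{The stable-phase step fails as well.} You bound band faults only by the trivial $O(\sum_{i\in B}p_i)$, and then assert that \opt pays $\Omega(\sum_{i\in B}p_i/\hsub)$ per step. That is false. $B$ can consist of $\Theta(k)$ head pages of total mass near $1$ (a nearly uniform head with a tiny tail), while \opt pays $O(\tailp)$ per step. \Cref{lem:opt_lb} lower bounds \opt only through tail mass, and $B$ together with the tail need not be nearly uniform anyway. Your stable event also does not imply that every page with $p_i\ge 2p_k$ is cached. It says nothing about band pages (including tail pages of probability near $p_k$), which can outnumber the free slots.

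The missing idea is to charge each post-burn-in head miss to a cached tail page. If a separated head page $i$ is outside the cache, some page of $\bT$ occupies a slot with count at least that of $i$, so its probability exceeds $p_i/2$. Matching the missing separated head pages injectively to such tail pages bounds their total miss rate by $2\tailp$. Combined with the per-page burn-in, this gives \lfu expected cost $O(k\hsub+T\tailp)$. That is $O(\hsub)$ times \opt by \Cref{lem:opt_lb}, with the additive $O(k\hsub)$ absorbed by the initialization cost.
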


This theorem is not surprising in an asymptotic sense, since LFU eventually converges to the \topk algorithm. But a priori, the rate of convergence could be arbitrarily slow depending on the difference between the probability values of pages requests. We show that the additive loss of the LFU algorithm vis-\`a-vis the \topk algorithm is  $O(k \hsub)$, and is therefore absorbed in the multiplicative factor in \Cref{thm:main3}.\footnote{$k$ is required for all algorithms as an initialization cost from an empty cache.} In other words, LFU matches the performance of \topk \emph{even when it has not finished learning the input distribution}. 
We view this as an interesting insight into learning-augmented algorithms: it demonstrates the difference between statistically learning a distribution and learning it well enough to devise an effective algorithmic strategy.

\subsection{Techniques and Overview}

We start in \Cref{sec:prelim} with our characterization of $k$-subset entropy in terms of the largest (up to size $k$) approximately uniform distribution embedded in $\cD$.  Indeed, by definition, any conditional distribution in $\cD$ gives a lower bound on subset entropy.  To show the converse, we use a factor-revealing LP to bound the maximum value of subset entropy when every approximately uniform distribution is size-constrained.

In \Cref{known distribution}, we show that the \topk algorithm has a competitive ratio of $O(\hsub)$, thereby establishing \Cref{thm:main1}. The $k$ pages with the largest values of $p_i$ are permanently in the cache in the \topk algorithm; we call them the \emph{head} of the distribution and call the remaining pages the \emph{tail}.  The expected cost of the \topk algorithm is immediate: it is the total probability in the tail, denoted $\tailp$. But how do we lower bound the expected cost of an offline optimal solution? The usual technique is to use dual fitting, i.e., write primal and dual LPs and then construct a feasible dual solution of sufficiently large value. Here, since the instance itself is random, so are the coefficients of the primal and dual LPs, and what we really have is a \emph{distribution} over LPs. How do we argue about the expected value of the optimal solution of such a distribution? (Contrast this to randomized dual fitting~\cite{DevanurJK13} where the LPs are fixed but the primal and dual solutions are random.)

We start by writing the ``expected'' primal-dual LPs whose coefficients are the expected values of the (random) coefficients over the distributions of the primal and dual LPs. For this LP, we construct a feasible dual solution of value $\tailp$ that satisfies these ``expected constraints''. Interestingly, this dual solution to the ``expected'' LP turns out to be quite different from those used in standard primal-dual analysis of caching algorithms (e.g., in \cite{BBN08}). However, we want the dual solution to be feasible for the actual realized constraints, not the expected constraints. Constraints for which the realized dual coefficients are ``close'' to their expected values (in a specific sense that we define later) are already approximately satisfied by the expected dual. The main challenge is to handle the ``tail constraints'', namely those that realize very differently from their expected form. These realizations can happen in many different ways and tracking their feasibility precisely is onerous. Instead, we carefully zero out dual variables to make sure that the tail constraints are also satisfied. This creates a distribution over dual solutions, since the set of tail constraints, and therefore, zeroed out variables, depends on the particular realization. Nevertheless, crucially, for each realized dual instance, the corresponding dual is \emph{always} feasible. The key is to show that since each constraint is a tail constraint with small probability, we do not lose too much in the objective by zeroing out dual variables. In fact, we precisely lose a factor of $O(\hsub)$. Hence this process gives a randomly defined dual solution that is deterministically feasible for the corresponding random instance, and the expected value of the dual objective over these random solutions is $\Omega\left(\tailp / \hsub\right)$. 

In \Cref{sec:unknown2}, we study the \lru algorithm and establish \Cref{thm:main2}. Recall that \lru  is an empirical implementation of the \topk algorithm which evicts pages based on the \emph{expected} length time between subsequent requests. Thus it is tempting to imagine that the ideal scenario for \lru is when the empirical length of a request interval exactly matches its expected value for every page. Unfortunately, this intuition is misleading. Consider a uniform input distribution $\cD$ over $n = k+1$ pages. If an instantiation creates a round-robin request sequence, then the empirical length of a request interval is exactly $k+1$, i.e., its expected value. However, \lru is $k$-competitive for this input sequence! Therefore, we need to use the symmetry-breaking properties of a random sequence in our analysis. 
Suppose a page $i$ is requested at time $t$. If page $i$ is not in the cache at time $t$, then by the properties of \lru, page $i$ is not among the last $k$ distinct pages requested before time $t$. What is the probability of this event? 
By reversing time, this is the same as the probability that page $i$ is not going to be among the next $k$ distinct pages to be requested after time $t$. 
Now, define a random variable for every page denoting the time till the first request of the page. Although these times are dependent, to first approximation, we can imagine that each page $i$ is running an independent Poisson process with rate $p_i$, i.e., we can approximate the length of the request intervals by exponential variables. Then, we use the properties of Poisson processes to bound the probability that at least $k$ other processes fired before process $i$, eventually proving \Cref{thm:main2}. 

In \Cref{sec:lfu}, we study the LFU algorithm and prove \Cref{thm:main3}. We note that LFU might suffer more cache misses than \topk because of inversions in the ranked order of pages by empirical frequency counts vis-\`a-vis their true probabilities. We bound the effect of these inversions in two parts. First, we show that inversions due to pairs that have probabilities within (say) a factor of 2 only incur an expected cost $O(\tailp)$. It is crucial that these inversions do not cost us in the additive term because such inversions might persist for an arbitrarily long time (e.g., if two pages differ in probability by $\eps$). It remains to bound the cost due to inversions where one page is twice as likely as the other. We show that after $O(\hsub)$ requests for a page, its count will forever be higher (with good probability) than the count of any page with less than half its probability. This incurs an additive loss of $O(k\hsub)$ on the pages in the head, which is ultimately absorbed in the multiplicative factor.

\par\noindent{\bf The Bigger Picture.} 
Having discussed our technical results for stochastic caching, we now return to the broader perspective: obtaining finer-grained analyses for stochastic optimization problems. We argue that identifying a distributional quantity, such as entropy, and analyzing algorithms with respect to this quantity, offers a promising direction for beyond worst-case analysis in stochastic settings.  Indeed, for many fundamental problems---such as comparison-based sorting, online matching, online load balancing, and related tasks---the
hardest stochastic instances correspond to high-entropy (often uniform) distributions. Intuitively, this is consistent with the view
that distributions with higher entropy are less informative than those with lower entropy.  We view our work as a first step toward a theory, based in entropy and its variants, that provides a refined analysis of algorithmic performance across stochastic versions of this rich class of problems.

\subsection{Related Work}
In a classic paper, Belady \cite{belady} introduced the optimal ``furthest in the future'' rule for evicting pages from a cache in the offline setting. Soon after, stochastic versions of the online caching problem were considered and adaptations of Belady's rule to online distributional settings were proposed and analyzed~\cite{AhoDU71,FranaszekW74}. Over the next few decades, a rich array of input models were introduced for online caching; some highlights include the work of Lund et al. \cite{LPR99} and Pabbaraju and Vakilian \cite{PV25}, who obtained approximations against the best online strategy even in partial information settings, the Markov paging model of Karlin et al. \cite{KarlinPR00}, where the input sequence is a Markov chain, and the access graph setting of Borodin et al. \cite{BorodinIRS95} (see also \cite[Ch. 5]{BY98}), where the request sequence is generated by a simple walk on an ``access graph'' defined on the pages. Somewhat related to our work is that of Pandurangan and Upfal \cite{PU07}, who studied the relationship between the performance of the best online caching algorithm and the entropy of a request sequence generated by a stochastic process. In particular, they showed that entropy is not sufficient to capture the performance of online caching algorithms.

We note that many further variants of online caching have been proposed, e.g., when pages have non-uniform weights \cite{BBN07, I02, Y16a}, non-uniform sizes \cite{AAK99, BBN08, I97}, 
as well as non-standard caching models such as elastic caches~\cite{GuptaK0P19}, caching with time windows~\cite{GKP20}, and caching
with dynamic weights~\cite{EvenMR18}, interleaved caching~\cite{KPSV20}, and caching with machine learning predictions~\cite{LykourisV18, BansalCKPV22}. In contrast to the i.i.d. setting that we study in this paper, many models try to capture \emph{locality of reference}, i.e., the phenomenon that pages requested recently tend to be re-requested~\cite{KoutsPapa00,Young00,Becchetti04,PanagSouza06}. The validity of locality of reference varies by application domain, e.g., is more applicable for centralized settings like operating systems on computing devices compared to distributed and resource-shared settings like data centers where the requests come from a large number of independent processes. Nevertheless, we view the extensive literature on stochastic paging---including our work---and that on models capturing locality of reference as complementary, since they capture different aspects of a fundamental problem. Indeed, this complementarity can sometimes be leveraged to obtain richer models that try to incorporate both phenomenon, such as Markov paging~\cite{KarlinPR00}. 

In our work, the input model is the classic i.i.d.\ setting, but we are interested in characterizing input distributions that are amenable to caching and obtaining better-than-worst-case bounds for such inputs. In this sense, it aligns with the recent focus on ``beyond worst-case'' algorithm design, and in particular, with the emerging field of learning-augmented algorithms, e.g., \cite{site:ALPS}. The main difference is that while the typical goal in the latter context is to use additional information about the input and automatically adapt the performance bounds to the {\em quality} of this information, our goal is to adapt  performance bounds to the quality of the input distribution itself without using additional information about the input. We point the reader to an excellent survey of ``beyond worst-case'' caching models in \cite[Ch. 24]{R2020} for a detailed discussion of related models.

\section{Problem Definition and Background}
\label{sec:prelim}
\subsection{The Stochastic Caching Problem}

There is a universe of $n$ pages indexed $[n] = \{1, \ldots, n\}$, a cache of size $k \le n$, a request sequence of length $T$ indexed by $[T]$, and a distribution $\cD$ on $[n]$. At each time $t \in [T]$, the algorithm gets a page request $r_t\in [n]$, where $r_t$ is drawn independently from $\cD$. The algorithm maintains at most $k$ pages in the cache at all times with the constraint that page $r_t$ must be in the cache at time $t$. Swapping in/out a page in the cache incurs unit cost; the goal is to minimize the total cost, i.e., the number of page swaps. We call this the {\em stochastic caching} problem.

The distribution $\cD$ is defined by probabilities $p_1, \ldots, p_n$, where $p_i > 0$ is the probability that page $i\in [n]$ is requested at any given time. Clearly, $\sum_{i} p_i = 1$.  We also denote $p(S) = \sum_{i \in S} p_i$ for a subset $S \subseteq [n]$. Wlog, we index pages in decreasing order of probabilities, i.e., $p_1 \ge p_2 \ge \cdots \ge p_n$, breaking ties arbitrarily.  The top $k$ pages are called the \emph{head} of the distribution $\cD$ and denoted $\bH := [k]$; the remaining pages $\bT := [n]\setminus [k]$ comprise the \emph{tail} of the distribution. The total probability mass in the tail is denoted $\tailp := \sum_{i=k+1}^n p_i = 1 - p(\bH)$.  In this paper, we consider two cases: when $\cD$ is \emph{known} and when $\cD$ is \emph{unknown} to the algorithm.

\subsection{Subset Entropy}

For a distribution $\cD$ with support $U$ and for a subset $S \subseteq U$, let $\cD_S$ denote the distribution $\cD$ conditioned on $S$, i.e., $\cD_S(u) = \frac{\cD(u)}{\sum_{u\in S} \cD(u)}$ for $u\in S$.  Let $\ent(\cdot)$ denote the \emph{Shannon entropy} function:
$
    \ent(\cD_S) := - \sum_{u\in S} \cD_S(u) \cdot \log \cD_S(u).
$
(All logarithms throughout the paper are base $2$ unless otherwise stated.)

We define {\em $k$-subset entropy} of a distribution $\cD$ as follows:
\begin{defn}[$k$-Subset Entropy]
For a discrete distribution $\cD: U \rightarrow [0, 1]$ and any positive integer $k\le |U|$, the {\em $k$-subset entropy} of $\cD$ is defined as the following:
\begin{align}
    \hsub(\cD, k) = \max_{S\subseteq U: |S|\le k} \ent(\cD_S).\label{eq:subsetH}
\end{align}
\end{defn}

While $k$-subset entropy suffices in the context of caching, an interesting information-theoretic counterpart is the unrestricted version that maximizes entropy of conditional distributions over {\em all} subsets of $U$. We call this {\em subset entropy}, and it is the special case of $k$-subset entropy when $k = n$.

Next, we prove an important fact about $k$-subset entropy, which we rely on in the rest of the paper. 

\begin{restatable}[$k$-Subset Entropy Lemma]{lemma}{subentlem}
\label{lem:dominate}
    Suppose we have a distribution $\cD: U\rightarrow [0, 1]$. Let $S_\ell = \{u\in U: 1/2^{\ell+1} < \cD(u) \le 1/2^{\ell})\}$ for any $\ell \ge 1$ and let $n_{\max}: = \max_\ell |S_\ell|$. Then, 
    \[
        \hsub(\cD, k) = \Theta(\log (\min \{k, n_{\max} \})).
    \]    
\end{restatable}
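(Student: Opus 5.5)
The lower bound comes from exhibiting a single near-uniform subset. The upper bound comes from the chain rule for entropy, splitting $\ent(\cD_S)$ into an entropy \emph{within} dyadic classes and an entropy \emph{across} dyadic classes. Throughout, write $m:=\min\{k,n_{\max}\}$. I read the statement as $\hsub(\cD,k)=\Theta(1+\log m)$, so that the degenerate case $m=1$ is covered by the additive constant.

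\emph{Lower bound.} Pick $\ell$ with $|S_\ell|=n_{\max}$ and any $S\subseteq S_\ell$ with $|S|=m\le k$. All masses in $S$ lie in $(2^{-\ell-1},2^{-\ell}]$, so every conditional probability satisfies $\cD_S(u)\le 2^{-\ell}/(m\,2^{-\ell-1})=2/m$. Shannon entropy is at least min-entropy, so $\ent(\cD_S)\ge \log(m/2)$. This gives $\hsub(\cD,k)=\Omega(\log m)$ once $m$ exceeds a constant; small $m$ is absorbed in the constant.

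\emph{Upper bound.} Fix $S$ with $|S|\le k$, let $c=1/\cD(S)$ and $q=\cD_S$.

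1. Partition $S$ into the classes $S\cap S_\ell$ for $\ell\ge 1$, plus a class $\ell=0$ holding the at most one element of mass exceeding $1/2$. Each class has at most $\min\{|S|,n_{\max}\}\le m$ elements, except that the $\ell=0$ class has one element; in all cases at most $\max\{m,1\}$.

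2. Let $W$ be the random class index of $u\sim q$, with weights $w_\ell$. The chain rule gives
\[
\ent(q)=\ent(W)+\sum_\ell w_\ell\,\ent(q\mid \text{class }\ell)\le \ent(W)+\log\max\{m,1\}.
\]

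3. Let $\ell_0$ be the smallest class index present in $S$. Some element has $q\ge c\,2^{-\ell_0-1}$, and since $q\le 1$ this gives $c\,2^{-\ell_0}\le 2$. Hence, for $j\ge 0$,
\[
w_{\ell_0+j}\le m\,c\,2^{-\ell_0-j}\le 2m\,2^{-j}.
\]
(For $\ell_0=0$ the bound is only cleaner.) So after shifting, $W$ is a distribution on $\{0,1,2,\dots\}$ with $w_j\le\min\{1,2m2^{-j}\}$.

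4. Bound $\ent(W)$ using these constraints. Set $J=\lceil\log(4m)\rceil$. The indices $j\le J$ contribute at most $\log(J+1)+O(1)=O(\log\log m+1)$, by bounding the entropy of the restriction to $J+1$ atoms together with the binary split between $j\le J$ and $j>J$. For $j>J$ we have $w_j\le 2^{-(j-J)-1}<1/e$. Since $x\log(1/x)$ is increasing on $(0,1/e)$, each such term is at most $2^{-i-1}(i+1)$ with $i=j-J$, and these sum to $O(1)$. Thus $\ent(W)=O(1+\log\log m)$, which is certainly $O(1+\log m)$.

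Combining steps 2 and 4 gives $\ent(\cD_S)=O(1+\log m)$ for every admissible $S$, hence $\hsub(\cD,k)=O(1+\log m)$.

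The main obstacle is step 4: controlling the entropy across classes. A priori $S$ could spread over unboundedly many dyadic classes. The key is the anchoring observation in step 3, which uses the largest element of $S$ to force class weights to decay geometrically past roughly $\log m$ classes. I expect most of the care to go into this anchoring and into the termwise tail estimate, with the $\ell=0$ class needing only a separate remark.
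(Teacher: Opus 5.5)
Your proof is correct. The lower bound is the paper's; you make its ``near-uniform, hence $\Theta(\log|S|)$'' step precise via min-entropy. Your upper bound, however, takes a genuinely different route.

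The paper re-levels the conditional distribution. It proves that any $\cD'$ with at most $N$ atoms in each dyadic level of $\cD'$ itself has $\ent(\cD')=O(\log N)$, and applies this to $\cD'=\cD_S$. Since the levels are then absolute for $\cD'$, this replaces your anchoring step~3. The price is $N\le\min\{2n_{\max},k\}$, because one level of $\cD_S$ can straddle two adjacent levels of $\cD$. The paper then uses the crude termwise bound $\ent(\cD')\le\sum_\ell n_\ell(\ell+1)2^{-\ell}$ and maximizes over the counts $n_\ell$ with a factor-revealing LP (constraints $n_\ell\le N$ plus total mass). This is certified by the dual $\alpha=\log N$, $\beta_\ell=(\ell+1)2^{-\ell}$ for $\ell>\log N-1$. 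The LP stands in for your chain-rule split and your estimate of $\ent(W)$. (Incidentally, the mass constraint there should read $\sum_\ell n_\ell 2^{-\ell}<2$ rather than $\le 1$, which only changes constants.)

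Keeping the original levels of $\cD$, as you do, bounds the within-class entropy by $\log m$ with no factor-$2$ loss. The chain rule then isolates the across-class term, which you show is only $O(1+\log\log m)$. So you get $\ent(\cD_S)\le\log m+O(1+\log\log m)$, matching your lower bound $\log(m/2)$ to first order. The LP certificate yields only $O(\log N)$ with a leading constant above one. In exchange, the paper's route is shorter and more mechanical, since one LP absorbs all the across-level bookkeeping.

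Your reading of the statement as $\Theta(1+\log m)$ is also warranted. For $\cD=(0.7,0.3)$ one has $n_{\max}=1$ but $\hsub(\cD,2)>0$, and the paper's dual certificate likewise has value $\Theta(1)$ rather than $0$ when $N=1$.
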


One direction is immediate: clearly $\hsub(\cD, k) \geq \log (\min(k,n_{\max}))$ since the set $S$ in the definition \eqref{eq:subsetH} could be chosen as (a subset of size $\leq k$ of) the $S_\ell$ for which $n_{\max} = |S_\ell|$, and by the definition of Shannon entropy, $H(D_{S_\ell}) = \Theta(\log |S_{\ell}|)$. Surprisingly, the converse holds: if $\hsub(\cD, k) = h$, then there is a set $S\subseteq U, |S| \ge 2^{\Theta(h)}$, over which the conditional distribution $\cD_S$ is uniform up to constants. Intuitively, this means that the $k$-subset entropy of a distribution is dictated by the largest (up to $k$) approximately uniform distribution embedded in it. The proof (\Cref{sec:prelim_proofs}) is via dual fitting of a factor revealing LP: we show that even if we ignore the original distribution $\cD$ and choose the probabilities of the pages in the maximizing set $S$ completely freely, if we are constrained to choosing at most $n_{\max}$ pages within a factor $2$ of each other, we cannot give $S$ a conditional entropy greater than $O(\log n_{\max})$.

In \Cref{sec:subset-algo}, we also show that the maximizing set in the Definition \eqref{eq:subsetH} is always a subinterval of $U$, when arranged in sorted order of 
 probabilities. This gives an efficient algorithm for computing $k$-subset entropy \emph{exactly}. Nevertheless, the approximate characterization above is convenient to work with in our applications.

We use henceforth the shorthand $\hsub$ to denote the $k$-subset entropy $\hsub(\cD, k)$, since it is implicit that the distribution $\cD$ is the input distribution and the parameter $k$ is the cache size.

\section{Stochastic Caching for a Known Distribution}
\label{known distribution}
We begin with the known distribution setting. Recall our simple algorithm \topk: keep the pages of the head $\bH$ in cache; when a page from the tail $\bT$ is requested, swap out an arbitrary page in cache for the requested page, then immediately swap the same page back. The main result of this section, which establishes \Cref{thm:main1}, is the following:

\begin{theorem}
\label{thm:ratio}
    The competitive ratio of the algorithm \topk for the stochastic caching problem for input distribution $\cD$ is $O(\hsub)$.
\end{theorem}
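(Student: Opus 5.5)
The cost of \topk is exactly $T\cdot\tailp$ in expectation (two swaps per tail request, up to a factor 2, plus at most $k$ for initialization). So the task is to show that the expected offline optimum is $\Omega(T\,\tailp/\hsub)$, up to the additive $k$ term. I would prove this by lower bounding the offline optimum on each realization with a dual-feasible solution of the standard caching LP, and then taking expectations.

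Rather than a full LP, I would first try a combinatorial proxy that the dual encodes. Partition time into phases in the style of Belady's farthest-in-future analysis: a phase ends when $k+1$ distinct pages have been requested since it began. Any offline algorithm incurs at least one miss per phase, so $\EE[\textsf{opt}] \gtrsim \EE[\#\text{phases}]$. It therefore suffices to show that a phase has expected length $O(\hsub/\tailp)$. By Wald-type arguments this then gives $\Omega(T\tailp/\hsub)$ phases.

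The expected phase length is controlled as follows. A phase must end once $k+1$ distinct pages appear. Consider the set $A$ consisting of the $k+1$ largest-probability pages, i.e.\ $\bH$ together with page $k+1$; its conditional structure is governed by the dyadic classes $S_\ell$ of \Cref{lem:dominate}. The time to see all of $\bH\cup\{\text{some tail page}\}$ is at most the time to see one tail request plus a coupon-collector time for the head pages not yet seen. The coupon-collector time for a class $S_\ell$ with probabilities around $2^{-\ell}$ is about $2^{\ell}\log|S_\ell| \le 2^\ell \log n_{\max} = O(2^\ell\hsub)$. The danger is head pages with tiny probability, where $2^\ell \gg 1/\tailp$.

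The fix is to not require all head pages. A phase ends after $k+1$ distinct pages of any kind. Let $m$ be the number of head pages with $p_i \ge \tailp/k$ (roughly). The remaining head pages have probability comparable to or less than tail pages. Since $p_{k+1}\le p_k$ and the tail has mass $\tailp$, I would pick a threshold $\tau$ and a target set $B$ of $k+1$ pages, all with $p_i\ge\tau$ or constituting a block of tail mass, such that collecting $k+1$ distinct elements happens in time $O(\hsub/\tailp)$.

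For the upper bound on collection time, I would use the Poisson approximation: the time to collect $k+1$ distinct pages from a set is dominated by the maximum of exponentials. Grouping by dyadic class, each class contributes $O(2^\ell \log\min(k,|S_\ell|))$, and the classes sum geometrically up to the threshold $2^\ell\approx 1/\tailp$. Pages below the threshold are handled collectively: the tail-plus-small-head mass is at least $\tailp$, and among pages of probability $\le\tau$ each dyadic class has at most $n_{\max}$ pages. One therefore needs only enough distinct small pages to fill the remaining slots. For this I would show via the same dyadic counting that sampling $O(\hsub/\tailp)$ requests from the small region yields enough distinct pages, since repeats are rare when each class is large relative to the number of draws, and \Cref{lem:dominate} bounds the collision-prone classes.

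The main obstacle is this last step. Balancing head pages of moderately small probability against tail pages, so that the $\log$ factor is exactly $\hsub$ rather than $\log k$, requires a careful choice of threshold. If the phase argument proves too lossy, I would fall back on the paper's suggested route: write the expected caching LP and construct a dual of value $\tailp$ per step. I would then zero out duals on atypical realizations, using the dyadic bound of \Cref{lem:dominate} to show the loss is an $O(\hsub)$ factor.
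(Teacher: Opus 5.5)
\textbf{Gap in the main route.} The step ``it suffices to show that a phase has expected length $O(\hsub/\tailp)$'' is false in general, and no choice of threshold can repair it. Each request adds at most one new distinct page, so every phase has length at least $k+1$. A one-miss-per-phase certificate therefore never exceeds $T/(k+1)$, which is far below the target $\Omega(T\,\tailp/\hsub)$ whenever $\tailp \gg \hsub/k$.

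Concretely, put probability $3/(4k)$ on each of $k$ head pages and spread the remaining mass $1/4$ uniformly over $N \gg T^2$ tail pages. Then $\hsub=\log k$ and \topk pays about $T/2$. Almost every tail request is to a page never requested before, so \opt pays about $T/4$. Yet your phase bound certifies only $T/(k+1)$ misses, so it would prove a ratio of $\Theta(k)$ rather than $O(\log k)$.

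The missing idea is that a spread-out tail forces roughly one miss per tail request, i.e.\ about $L\cdot\tailp$ misses in a phase of length $L$. A certificate crediting one miss per phase cannot capture this. The small-head-page issue you flag as the main obstacle is secondary, and you also leave it unresolved. A phase-based proof would need a certificate that credits many misses per phase (such as the classical count of pages new to each phase), plus a proof that its expectation is $\Omega(T\,\tailp/\hsub)$. That is where all the difficulty lies.

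\textbf{The fallback.} It names the paper's route but omits exactly the ingredients that make it work. At each time $t$, let $G(t)$ be the set of tail pages whose current inter-request interval has length at most $3/p_i$, listed as $\sigma^t_1,\sigma^t_2,\dots$ in decreasing order of probability. The paper puts dual mass only on the nested sets $[k]\cup\{\sigma^t_1,\dots,\sigma^t_j\}$, with weights $p_{\sigma^t_j}-p_{\sigma^t_{j+1}}$. Discarding the tail pages with long intervals makes every tail constraint hold deterministically up to a factor $3$. The step still earns $\sum_{i\in G(t)}p_i$, which is at least $\tfrac23\tailp$ in expectation by Markov.

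Only head constraints can then fail. Let $\eta=\min\{k,n_{\max}\}$. A head page carries load at most $p_k$ per step, so it is overloaded by more than $3\ln(2\eta)$ only if its interval exceeds $3\ln(2\eta)/p_k$. This event has probability about $(2\eta)^{-3p_i/p_k}$.

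Now group head pages into dyadic classes relative to $p_k$. Each class meets at most two of the levels of \Cref{lem:dominate}, so it has size at most $2\eta$. A union bound then makes the per-step failure probability at most about $1/4$. The paper zeroes the entire step on failure and scales the rest down by $3\ln(2\eta)=O(\hsub)$. This gives a dual feasible for every realization. Since the per-step objective never exceeds $\tailp$, its expected value is $\Omega(\tailp/\hsub)$ per step. The class-size bound $2\eta$ in that union bound, rather than $k$, is precisely where $\hsub$ replaces $\log k$.
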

To establish \Cref{thm:ratio}, we first bound the expected cost of the \topk algorithm.
\begin{lemma}
\label{lem:topk}
    The expected cost of the \topk algorithm for each request is $2\tailp$.
\end{lemma}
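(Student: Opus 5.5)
The plan is a direct per-request accounting that uses only the definition of \topk and the independence of requests. We fix a time $t$ and condition on the state of the cache just before $r_t$ is revealed. The invariant we want is that, between requests, the cache contains exactly the head $\bH=[k]$. This holds after the initial loading (whose one-time cost of $k$ we treat as initialization and do not charge per request; otherwise we charge it to the first requests or note it is additive). It is preserved by every step: on a head request nothing changes, and on a tail request the algorithm swaps some head page $j$ out for $r_t$ and then immediately swaps $j$ back in.

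Given the invariant, the cost of request $t$ is a deterministic function of $r_t$ alone. If $r_t\in\bH$, the page is already cached and the cost is $0$. If $r_t\in\bT$, the algorithm performs exactly two swaps (evict $j$ and load $r_t$, then evict $r_t$ and reload $j$), so the cost is $2$. Since $r_t$ is drawn from $\cD$ independently of the past, the expected cost of request $t$ is $2\Pr[r_t\in\bT]=2\sum_{i=k+1}^n p_i=2\tailp$. By linearity of expectation, the total expected cost over $T$ requests is $2T\tailp$, plus the initial loading.

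There is essentially no obstacle here. The only point needing care is the convention for counting swaps: a single "swap" could be read as evict-plus-load counted once, which would give the factor $2$. The statement's constant $2$ matches the description of \topk as two swap operations per tail request. We will state that convention explicitly, and also that the initial $k$ loads are excluded from the per-request cost, as in the paper's remark that $k$ is an unavoidable initialization cost for all algorithms.
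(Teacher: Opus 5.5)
Your proposal is correct and follows the paper's argument. The paper's proof is the one-line observation that a request misses the cache exactly when it falls in the tail, which happens with probability $\tailp$, and each miss costs two evictions; your explicit cache invariant and your exclusion of the initial $k$ loads simply spell out what the paper leaves implicit.
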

\begin{proof}
    For each request, the probability that the requested page is not in the cache is exactly $\tailp$, in which case the algorithm performs two evictions.
\end{proof}

The harder challenge is to lower bound the expected cost of an offline optimal solution. To this end, we fit a dual to the LP relaxation of the knapsack-cover (or KC) integer programming formulation for caching (see, e.g., \cite{BBN08})\footnote{Note that our choice of the KC inequality strengthened formulation is for convenience; it is not strictly necessary because the standard LP formulation is already integral.}. As noted in the introduction, while focusing on LP duality is a reasonable instinct, the main challenge is that the LP is defined via a distribution over coefficients, i.e., both the primal and dual LPs are random. We have to produce a feasible dual solution for \emph{every} realization of the random instance, and then bound the expected objective value across the realizations.

Fortunately, since we are in an i.i.d.\ setting, the distribution over dual constraints that apply at any given time $t$ is time-invariant, i.e., is independent of $t$. We view this distribution over dual constraints in two parts---the set of constraints that apply within a high probability interval, and the remaining constraints that are in the tail of the distribution. For the former, we set dual variables corresponding to the ``expected'' constraints and these are approximately feasible. For the latter, there are a large number of possibilities and it is onerous to track feasibility across all these realizations. Instead, we carefully zero out a subset of dual variables to ensure feasibility for the constraints in the tail of the distribution. Of course, the set of dual constraints that fall in the tail region is random, and therefore the set of zeroed out variables is also random. Nevertheless, we show that since a constraint is in the tail with low probability (by standard concentration inequalities), this last step of zeroing out variables to ensure feasibility has limited impact on the expected value of the dual objective. 
\begin{restatable}{lemma}{optlb}
    \label{lem:opt_lb}
    The expected cost of an offline optimal solution is $\Omega\left(\frac{\tailp}{\hsub}\right)$.
\end{restatable}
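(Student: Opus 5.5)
The plan is to lower-bound the expected offline optimum by LP duality, producing for every realization a dual solution that is feasible for that realization. We then show that its expected value per request is $\Omega(\tailp/\hsub)$; since weak duality holds realization by realization, taking expectations gives the lemma. We use the knapsack-cover relaxation. Let $x_{i,j}\in[0,1]$ be the fraction of page $i$ evicted between its $j$-th and $(j+1)$-st request. For every time $t$ and every set $S\ni r_t$ with $|S|>k$ there is a constraint $\sum_{i\in S\setminus\{r_t\}} x_{i,j(i,t)}\ge |S|-k$, where $j(i,t)$ is the index of the current request interval of page $i$ at time $t$. The dual is
\begin{align*}
\max \sum_{t,S}(|S|-k)\,y_{t,S}\quad\text{s.t.}\quad \sum_{t\in I_{i,j}}\ \sum_{S\ni i,\,S\ni r_t,\, r_t\neq i} y_{t,S}\le 1 \ \text{ for every interval } I_{i,j},\qquad y\ge 0,
\end{align*}
where $I_{i,j}$ is the $j$-th request interval of page $i$. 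We only put dual mass on times $t$ with $r_t\in\bT$, and only on sets $S$ of size $k+1$. Each such $t$ then contributes exactly $\sum_S y_{t,S}$ to the objective.

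\textbf{Step 1: the ``expected'' dual.} At each tail request $t$ we put total dual mass $y=\Theta(1)$ on the family $\{S=\{r_t\}\cup A\}$. Here $A$ ranges over $k$-sets drawn from a fixed distribution with marginals $q_i=\Pr[i\in A]$, $\sum_i q_i=k$, $q_i\le 1$. The constraint for an interval of page $i$ has length about $1/p_i$. In expectation that interval contains about $\tailp/p_i$ tail requests, so its expected left-hand side is about $y\,q_i\,\tailp/p_i$. We therefore choose $q_i\approx\min\{1, c\,p_i/\tailp\}$, adjusted (e.g.\ water-filling) so that $\sum_i q_i=k$. The point of the weighting is that frequently requested pages have short intervals, so they can appear in many sets. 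One has to check that there is enough mass: if $\sum_i\min\{1,c\,p_i/\tailp\}<k$, we fill the remaining marginal with the pages of largest $p_i$. This can violate the expected constraints only by an $O(1)$ factor, because every page outside $\bH$ has $p_i$ at most $p_k$. With this choice, every constraint holds in expectation up to a constant, and the expected objective is $\Theta(\tailp)$ per request.

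\textbf{Step 2: repairing realized constraints (the main obstacle).} A realized interval of page $i$ may be much longer than $1/p_i$, or may contain many more tail requests than expected; its constraint is then violated. For each such interval we zero out the dual mass it receives. Concretely, at each tail request $t$ we keep $y_{t,S}$ only if every page $i\in S$ is ``typical'' at $t$: its current interval length and tail-request count are within an $O(\hsub)$ factor of the expectation. Separately, we scale all of $y$ down by $\Theta(\hsub)$. With the slack from this scaling, any typical interval satisfies its constraint deterministically, and atypical intervals receive nothing. So the dual is feasible for every realization. The cost is the expected mass lost at a tail request $t$, namely $\sum_i q_i\Pr[i \text{ atypical at } t]$. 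We would bound this using the fact that interval lengths are geometric with tail $e^{-\Theta(x)}$ beyond $x/p_i$, together with \Cref{lem:dominate}. Grouping pages into the classes $S_\ell$, each of size at most $2^{O(\hsub)}$, we apply a union bound over a class with threshold $\Theta(\hsub)\cdot$(expectation). This makes the probability that some page in the class is atypical a small constant. The geometric decay of class weights $q_i\propto p_i$ across $\ell$ then keeps the total lost mass a constant fraction.

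\textbf{Step 3: conclusion.} The surviving dual has expected value $\Omega(\tailp/\hsub)$ per request and is feasible in every realization. Weak duality and linearity of expectation then give $\EE[\opt]=\Omega(T\,\tailp/\hsub)$, i.e.\ $\Omega(\tailp/\hsub)$ per request. The delicate part is Step 2. The right notion of ``typical'' must be chosen so that feasibility is deterministic while the union bound only pays $\log$ of a class size; this is exactly where $\hsub$, rather than $\log k$, enters.
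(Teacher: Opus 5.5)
There is a genuine gap, and it is in Step~1, not in the repair. Your Step~2 repair (zero out mass touching atypical intervals, rescale by $\Theta(\hsub)$, union bound over probability classes so you pay only the log of a class size) is essentially the paper's. The problem is the shape of the dual itself.

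\textbf{Why $(k+1)$-sets cannot work.} Supporting $y$ only on sets with $|S|=k+1$ caps the dual value far below what the lemma needs. Sum your dual constraints over all request intervals. Each $y_{t,S}$ is counted once for every page of $S\setminus\{r_t\}$, i.e.\ $k$ times, and there are at most $T+n$ intervals. So any feasible dual of your form has value at most $(T+n)/k$.
\begin{itemize}
\item For the uniform distribution on $2k$ pages, $\tailp=1/2$ and $\hsub=\log k$. The lemma asserts an expected optimum of $\Omega(T/\log k)$, which no such dual can certify.
\item The failure is not confined to large $\tailp$. Take one heavy page carrying all but about $\tailp$ of the mass, plus $k^2+k-1$ light pages of probability $\tailp/k^2$ each. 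Every $(k+1)$-set has at least $k-1$ light pages besides $r_t$. The light pages have only about $T\tailp(1+1/k)$ intervals in total. So your dual is at most about $T\tailp/(k-1)$, against the required $\Omega(T\tailp/\log k)$.
\end{itemize}
The false step is the fill. A page filled to $q_i=1$ has expected load about $y\,\tailp/p_i$, which is $\Theta(yk)$ in the first example. The fact that tail pages satisfy $p_i\le p_k$ says nothing about $\tailp/p_i$ for the filled head pages. Equivalently, objective $\Theta(\tailp)$ per step on $(k+1)$-sets consumes $\Theta(k\,\tailp)$ units of interval capacity per step, while only about one unit per step exists.

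\textbf{The fix.} Use sets whose excess $|S|-k$ is proportional to their size, and place mass at every time step. The paper lets $G(t)$ be the tail pages whose current interval has length at most $3/p_i$, sorted as $\sigma_1,\dots,\sigma_m$ by decreasing probability. It then puts $y^t_{S_j}=p_{\sigma_j}-p_{\sigma_{j+1}}$ (and $y^t_{S_m}=p_{\sigma_m}$) on the nested sets $S_j=\bH\cup\{\sigma_1,\dots,\sigma_j\}$. This telescopes:
\begin{itemize}
\item the objective at time $t$ is $\sum_{i\in G(t)}p_i$, which is $\Omega(\tailp)$ in expectation;
\item each page of $G(t)$ receives load exactly $p_i$ per step, so its constraint holds deterministically up to a factor $3$;
\item each head page receives at most $p_{k+1}\le p_k$ per step.
\end{itemize}
Only head constraints can fail, with probability about $(2\eta)^{-3p_i/p_k}$, where $\eta=\min\{k,n_{\max}\}$. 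The paper handles these with a union bound over doubling classes of $p_i/p_k$, each of size at most $2\eta$. It then zeroes out the affected time steps and rescales by $3\ln(2\eta)$, losing only the $O(\hsub)$ factor. This is exactly where your Step~2 ideas apply.
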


\begin{proof}[Proof Sketch]
Let $x(i, r)$ be the indicator variable for the event that page $i$ is evicted in the time interval $I(i,r) := [t(i,r)+1,t(i,r+1)-1]$, where $t(i,r)$ is the time of the $r$th request to page $i$.  Let $\rho(i, t)$ be the number of requests to page $i$ up to and including time $t$.  Consider the formulation: 
\[
{\small
	\boxed{\begin{array}{cc}
			 {\displaystyle  \min \sum_{i =1}^n \sum_{r=1}^{\rho(i,T)} x(i,r)} & \text{subject to }\\
			\\
			 {\displaystyle \sum_{i \in S\setminus\{r_t\}} x(i, \rho(i,t)) \geq |S| - k} & \begin{array}{c}
			     \forall t \in [T] \\
			     \forall S \subseteq [n]
			\end{array}\\
			\\
   x(i,r) \geq 0 & \begin{array}{c}
			     \forall i \in [n] \\
        \forall r \in [\rho(i,T)]
			\end{array}
	\end{array}}\qquad\boxed{\begin{array}{cc}
	{\displaystyle \max\sum_{t=1}^T\sum_{S \subseteq [n]} (|S| - k) \cdot y_S^t} & \text{subject to}\\
	\\
        {\displaystyle \sum_{t \in I(i,r)} \sum_{S: \, i \in S} y_S^t\leq 1} & \begin{array}{c}
             \forall i \in [n]  \\
             \forall r\in [\rho(i,T)] 
        \end{array}\\
	\\
        y_S^t \geq 0 & \begin{array}{c}
            \forall t \in [T] \\
             \forall S \subseteq[n]
        \end{array}
\end{array}}
}
\]

This formulation is standard, but our dual fitting is not (see \cite{BBN08} for the usual way the online dual is fit using multiplicative weights).
We start by finding a stationary solution to the \emph{expected} dual, i.e., find a vector $y$ that is the same for each $t \in [T]$, such that the expected dual constraint is satisfied. 

Intuitively, we should expect every time $t$ to lie in an interval of length $1/p_i$ between subsequent requests to page $i$, and hence these expected constraints would ask that there be less than $1$ total $y$ mass on page subsets containing $i$ on these intervals, i.e., that there be less than $p_i$ mass on these subsets per time step $t$.  Define 
\[y_{S}^t = \begin{cases}
    p_{i} - p_{i+1} & \text{if } S = [i] \text{ and } i \in \bT \setminus \{n\}, \\
    p_{n} & \text{if } S = [n], \\
    0 & \text{otherwise}.
\end{cases}\]
  Then, by design, for all $i \in \bT$ we have
    $\sum_{S: \, i \in S} y_S^{t} = \sum_{j=i}^{n-1} \left(p_{j} - p_{j+1}\right) + p_{n} = p_{i}$,
    as desired.
    For the remaining $i \in \bH$, we have
    $\sum_{S: \, i \in S} y_S^{t} = \sum_{j=k+1}^{n-1} \left(p_j - p_{j+1}\right) +  p_n = p_{k+1} \leq p_i$,
    Furthermore, the dual objective (per time step) is $\sum_{S} (|S| - k) \cdot y_S^t = n \cdot p_{n} + \sum_{j=1}^{n} j \cdot (p_j  - p_{j+1}) = \sum_{i \in \bT} p_{i}$.
    
However, this is too good to be true!  If this were a legal dual solution, weak duality would imply that \topk in fact achieves competitive ratio $O(1)$, which we know is impossible, as demonstrated by the uniform distribution case. Indeed, for the dual to witness a true lower bound on the primal program, it must be feasible with probability $1$, and our stationary solution, as described, might violate dual constraints corresponding to intervals that happen to be longer than their expectation.

To construct our actual dual, we start with the stationary $y$ described above. We then show how to carefully drop some coordinates to account for the random fluctuation in interval lengths. Since the dual is a packing problem, we can always make a solution feasible this way. The crux is to do this while approximately preserving the objective value. In \Cref{sec:topk_proofs}, we give the details of our modified dual construction and show that we only lose an $O(\hsub)$ factor in the objective. 
\end{proof}

\section{The LRU Algorithm}\label{sec:unknown2}In this section, we analyze the popular \emph{least recently used} (\lru) algorithm for stochastic caching. Note that \lru does not use distributional information, and hence can be used in the unknown distribution setting. Our main result is:

\begin{theorem}
	\label{thm:lru}
	The \lru algorithm has a competitive ratio $O(\hsub^3)$ for the stochastic caching problem.
\end{theorem}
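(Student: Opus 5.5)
We will compare the expected per-request cost of \lru with the lower bound of \Cref{lem:opt_lb}, which says that the offline optimum pays $\Omega(\tailp/\hsub)$ per request. It therefore suffices to show that the expected number of \lru faults per request is $O(\tailp\cdot\hsub^2)$, up to an additive $O(k)$ for warm-up, which is absorbed as in the rest of the paper.

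Requests to tail pages contribute at most $\tailp$ per step, so only faults on head pages need work. A request at time $t$ to page $i\in\bH$ faults exactly when at least $k$ distinct other pages were requested since the previous request to $i$. By stationarity and time reversal, this probability equals the probability that, starting from a fixed time, at least $k$ distinct pages other than $i$ appear before the first request to $i$. The contribution of head pages is therefore
\[
\sum_{i\in\bH} p_i \cdot \Pr[\text{at least $k$ distinct pages $\ne i$ appear before $i$}].
\]

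To estimate this, we Poissonize: each page $j$ is an independent Poisson process of rate $p_j$. This changes the ordering of first arrivals only by constant factors in the relevant probabilities, and there is an exact coupling between the i.i.d. sequence and the embedded jump chain. Let $E\sim\mathrm{Exp}(p_i)$ be the first arrival time of $i$. Conditioned on $E=s$, the number of other pages that have appeared is a sum of independent Bernoulli$(1-e^{-p_js})$ variables. Since $|\bH|=k$, at least $k$ distinct other pages appearing forces at least one tail page or one head page $j$ with $p_j\le p_i$ to appear. We then bucket pages by the dyadic classes $S_\ell$ of \Cref{lem:dominate}, each of size at most $n_{\max}=2^{O(\hsub)}$. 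Pages with $p_j\gg p_i$ almost surely appear, but there are fewer than $k$ of them in total. The event can only occur if enough pages of probability comparable to or smaller than $p_i$, including tail pages, fire before $i$.

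Concretely, we split the event according to $s$. If $s\le c\,\hsub/p_i$, then for at least $k$ pages to appear, the tail must fire, which occurs with probability at most $\tailp\, s$. Alternatively, many pages with probability at most $p_i\cdot 2^{O(\hsub)}$ must fire, and we bound this by a Chernoff estimate on their counts. Because each dyadic class is small, an $\hsub$ factor arises from the window length and another from the number of relevant dyadic levels above $p_i$. This is the step I expect to be the main obstacle. The aim is to show that, for $i\in\bH$, the contributions from heavier levels near $p_i$ are dominated by $\tailp\cdot\hsub^2/p_i$ times $p_i$, yielding a per-page bound of $p_i\Pr[\cdot]\le O(\hsub^2)\cdot(\text{tail-related mass})$. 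When summed over the head using a charging argument to tail pages with comparable probability, this gives $O(\tailp\,\hsub^2)$. The exponentially small event $s>c\,\hsub/p_i$ is handled by choosing $c$ large, so that $e^{-c\hsub}$ is at most $1/n_{\max}$, and charging each head page against a comparable tail page via $p_{k+1}$. Combining these estimates with \Cref{lem:opt_lb} then gives the $O(\hsub^3)$ ratio.
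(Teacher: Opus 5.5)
Your reduction matches the paper's (show $\sum_i p_iq_i=O(\hsub^2\tailp)$, then divide by \Cref{lem:opt_lb}). The time-reversal step is sound, and so is the Poissonization, which is in fact exact: the superposed process has rate $1$ and i.i.d.\ marks. The gap is the step you flag, and it is structural: every estimate you propose is \emph{per page}, and per-page estimates cannot be summed over the head.

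Your small-$s$ bound gives $p_i\Pr[\cdot]\le p_i\cdot\tailp\cdot c\hsub/p_i=c\hsub\,\tailp$ for each $i\in\bH$. This is weaker even than the trivial bound $p_iq_i\le p_i\tailp/(\tailp+p_i)\le\tailp$, and summing over the $k$ head pages gives $\Theta(k\hsub\,\tailp)$. For example, take $k$ head pages of probability about $1/k$ and tail mass $\varepsilon$ spread over many tiny pages. The true \lru cost is $O(\varepsilon\log k)$, but your sum is $\Theta(\varepsilon k\log k)$.

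Charging to ``tail pages with comparable probability'' cannot fix this. In that example no tail page is comparable to any head page. More fundamentally, the number of head pages excluded from the first $k$ distinct pages equals the number of tail pages among them. So one tail arrival excludes at most one head page, whereas a per-page bound charges the same tail mass to every head page of similar probability.

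Your large-$s$ term has a separate defect. $\Pr[E>c\hsub/p_i]=e^{-c\hsub}$ carries no factor of $\tailp$, so $\sum_{i\in\bH}p_ie^{-c\hsub}\approx e^{-c\hsub}$. This is not $O(\hsub^2\tailp)$ when $\tailp\to0$ with $\hsub$ fixed. In that regime you cannot drop the requirement that a tail page arrive.

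The missing idea is to count misses \emph{jointly} within a dyadic level. You note that a miss forces at least one tail page; what is needed is the counting version. Suppose $X_\ell$ pages of $L_\ell$ are absent from the first $k$ distinct pages. Since $|\bH|=k$, at least $X_\ell$ tail pages are among them, all requested before $L_\ell$ is fully collected. Hence $X_\ell\le Z_\ell$, and
\[
\sum_{i\in L_\ell}p_iq_i\le 2^{-\ell}\EE[X_\ell]\le 2^{-\ell}\EE[Z_\ell].
\]
By Wald's identity, $\EE[Z_\ell]\le 2n_\ell H_{n_\ell}\cdot\tailp\,2^{\ell+1}/n_\ell$. The factor $n_\ell$ cancels, giving $O(\hsub\,\tailp)$ per \emph{level} rather than per page.

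The paper then exploits the level structure in three ranges:
\begin{itemize}
\item Levels with $p_i<\tailp/2^{\hsub}$ are handled trivially, since $n_\ell\le 2^{\hsub}$ by \Cref{lem:dominate}.
\item Only $O(\hsub)$ levels lie between $\tailp/2^{\hsub}$ and $4e^2\hsub\,\tailp$, each costing $O(\hsub\,\tailp)$.
\item Heavier levels need a geometrically decaying bound. With a single tail request, a level-$\ell$ page is missed only if it is the last head page to arrive, so the lighter page $r_1$ must arrive first; this has probability $O(\hsub\,2^{\ell-\ell_1})$. This is combined with $\Pr[Z_\ell\ge t]\le(O(\tailp\,2^{\ell}\hsub))^t$ for $t\ge 2$.
\end{itemize}
Your remark that many lighter pages must fire points toward that last ingredient. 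Without the aggregation $X_\ell\le Z_\ell$, however, it still loses a factor of $n_\ell$, which can be $2^{\Theta(\hsub)}$.
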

We now provide the main ideas needed for proving this and defer the technical claims to \Cref{sec:lru-proofs}.  We saw in \Cref{known distribution} that the 
expected cost of an offline optimal solution is $\Omega\left(\frac{\tailp}{\hsub}\right)$.  
Hence, it suffices to bound the expected cost of \lru as $O(\hsub^2 \cdot \tailp)$.
Note that \lru has a cache miss when the requested page is not among the last $k$ distinct pages requested. Thus, for any page $i$, the probability that $i$ incurs a cache miss (at any point of time) is the same as that of page $i$ not being among the first $k$ distinct pages requested in a sequence of pages drawn from $\cD$. We denote this probability $q_i$ for page $i$. With this notation, the expected cost of the \lru algorithm for any request is $\sum_{i \in [n]} p_i q_i$. Our goal from here on is to show 
\begin{equation}\label{eq:main}
    \sum_{i \in[n]} p_i q_i \le O(\hsub^2\cdot \tailp).
\end{equation}

Before we begin, note that when $\tailp > 1/2$, any algorithm, even one that incurs a cache miss at every time step, has cost $O(\tailp)$. Thus, we can assume without loss of generality that: 
\begin{assumption}
	\label{assmp:tail}
	The tail has total probability $\tailp \leq 1/2$.
\end{assumption}

Note that the contribution of the tail to the LHS of \eqref{eq:main} is easy to handle, since the $q_i$'s are probabilities.
\begin{observation}[Bounding $\bT$]
	\label{obs:tail}
	The contribution of $\bT$ is bounded as $\sum_{i \in \bT} p_i q_i \leq \tailp$.
\end{observation}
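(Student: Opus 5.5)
The claim is immediate once we use that each $q_i$ is a probability. Recall that $q_i$ is the probability that page $i$ is not among the first $k$ distinct pages requested in an i.i.d.\ sequence drawn from $\cD$. Since it is a probability, $0 \le q_i \le 1$ for every $i \in [n]$. I would therefore bound each term of the tail sum separately,
\[
\sum_{i \in \bT} p_i q_i \;\le\; \sum_{i \in \bT} p_i \cdot 1 \;=\; \sum_{i=k+1}^n p_i \;=\; \tailp,
\]
where the last equality is the definition of $\tailp$ given in \Cref{sec:prelim}.

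Nothing else about $q_i$ is needed: neither the time-reversal interpretation nor the Poisson approximation, and neither \Cref{assmp:tail} nor any subset-entropy bound. The only point to confirm is that $q_i$ really is a probability and not, for example, an expected count. That holds by the way the expected per-request cost of \lru was written as $\sum_i p_i q_i$: conditioned on the request being page $i$, the cost is the indicator of a miss, and $q_i$ is the probability of that miss.

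I expect no real obstacle here. The substantive work toward \eqref{eq:main} is bounding the head contribution $\sum_{i\in\bH} p_i q_i$ by $O(\hsub^2 \cdot \tailp)$, and this observation simply removes the tail pages from that task.
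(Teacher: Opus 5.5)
Your proposal is correct and is essentially the paper's argument: the paper justifies the observation only by noting that each $q_i$ is a probability. Hence $q_i \le 1$, and $\sum_{i\in\bT} p_i q_i \le \sum_{i\in\bT} p_i = \tailp$.
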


This leaves us with the task of bounding the contribution of the head $\bH$ to the LHS of \eqref{eq:main}.

\subsection{Probability Levels}

We first group the pages in the head $\bH$ into doubling intervals by probability values $p_i$. In particular, for $\ell\ge 0$, we use $L_\ell$ to denote the set of all pages $i$ with $p_i\in [\nf{1}{2^{\ell+1}}, \nf{1}{2^{\ell}})$; we say that the pages in $p_\ell$ are at level $\ell$ and denote $|L_\ell| = n_\ell$. 
Furthermore, we say that $\ell$ is a {\em dominant level} if $\sum_{i\in L_\ell} p_i > \nf 12$; else, $\ell$ is a {\em non-dominant level}.  Note that there is either no dominant level or a single unique dominant level.

Let $X_\ell$ be the random variable denoting the number of pages in $L_\ell$ that do not appear among the first $k$ distinct pages in any particular sequence of page requests. Let $M_\ell$ be the multiset of page requests up to and including the first time all the pages in $L_\ell$ are requested. We denote $Y_\ell = |M_\ell|$ and $Z_\ell$ to be the number of page requests from $M_\ell$ to pages in $\bT$.

We need certain structural properties of these random variables.

\begin{restatable}{lemma}{xyzstructure}
    \label{lem:prop}
    The following three properties hold for $X_\ell$, $Y_\ell$, and $Z_\ell$:
    \begin{enumerate}[(i)]
        \item For any level $\ell$ with $n_\ell > 0$,
        \begin{equation}\label{eq:indicator}
            \sum_{i\in L_\ell} p_i q_i
           \le \frac{1}{2^\ell}\cdot \sum_{i\in L_\ell} q_i
            = \frac{1}{2^\ell}\cdot \EE[X_\ell].
        \end{equation}
        \item Let $\ell$ be a level with $n_\ell > 0$, and
        let $H_{n_\ell}$ denote the $n_\ell$th Harmonic number. Then,
        \begin{align}
\EE[Y_\ell] \le 2^{\ell+1}\cdot H_{n_\ell}. \label{eq:Yell_bound}
\end{align}
\item   Suppose $\ell$ is a non-dominant level. Then, we have 
    \begin{numcases}
    {\Pr[Z_\ell \ge t] \le  } 
            2\cdot \tailp \cdot 2^{\ell+1}\cdot \hsub & for $t = 1$,\label{eq:tail-alt-1}
            \\
            \left(2e^2\cdot \tailp \cdot 2^{\ell+1}\cdot \hsub\right)^t & for $t \ge 2$.\label{eq:tail-alt}
        \end{numcases}    
\end{enumerate}
\end{restatable}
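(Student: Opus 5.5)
For (i), every $i\in L_\ell$ has $p_i<1/2^\ell$, which gives the inequality term by term. Since $q_i$ is the probability that page $i$ is not among the first $k$ distinct pages requested, linearity of expectation applied to the indicators summing to $X_\ell$ gives $\sum_{i\in L_\ell} q_i=\EE[X_\ell]$.

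For (ii), I will use a coupon-collector argument with nonuniform probabilities. Split the request sequence into phases, where phase $j$ is the time during which exactly $j$ pages of $L_\ell$ remain unrequested. Each page of $L_\ell$ has probability at least $1/2^{\ell+1}$. So in phase $j$, each request hits a new page of $L_\ell$ with probability at least $j/2^{\ell+1}$, independently of the past. Hence the length of phase $j$ is stochastically dominated by a geometric variable with mean $2^{\ell+1}/j$. Summing over $j=1,\dots,n_\ell$ gives $\EE[Y_\ell]\le 2^{\ell+1}H_{n_\ell}$. I will record the stronger fact that $Y_\ell$ is dominated by a sum $G$ of independent geometrics $G_j$ with success probabilities $j/2^{\ell+1}$, because (iii) needs higher moments. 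Since $L_\ell\subseteq\bH$, we have $n_\ell\le k$. By \Cref{lem:dominate}, $H_{n_\ell}=O(\log\min\{k,n_{\max}\})=O(\hsub)$; more precisely, $H_{n_\ell}\le 1+\ln n_\ell\le c\cdot\hsub$ for a small absolute $c$.

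For (iii), the approach is to condition on the $L_\ell$ versus non-$L_\ell$ pattern of the requests. Given that a request is not in $L_\ell$, it is a tail request with probability $p(\bT)/(1-p(L_\ell))$. Because $\ell$ is non-dominant, $p(L_\ell)\le 1/2$, so this probability is at most $2\tailp$. These conditional choices are i.i.d. and independent of the $L_\ell$-pattern, and $Y_\ell$ is determined by that pattern together with which $L_\ell$ pages appear. Therefore, conditioned on $Y_\ell$, $Z_\ell$ is dominated by $\mathrm{Bin}(Y_\ell,2\tailp)$. A union bound then gives
\[
\Pr[Z_\ell\ge t\mid Y_\ell]\le \binom{Y_\ell}{t}(2\tailp)^t\le\Bigl(\tfrac{2e\,\tailp\,Y_\ell}{t}\Bigr)^t .
\]
For $t=1$, taking expectations and applying (ii) gives $2\tailp\cdot 2^{\ell+1}H_{n_\ell}$, which is \eqref{eq:tail-alt-1} once $H_{n_\ell}$ is replaced by $\hsub$, with constants adjusted via \Cref{lem:dominate}. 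For $t\ge2$ it remains to show $\EE[Y_\ell^t]\le (e\,t\,2^{\ell+1}\hsub)^t$. Plugging this in gives $(2e^2\,\tailp\,2^{\ell+1}\hsub)^t$.

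The main obstacle is this moment bound on $Y_\ell\le G=\sum_j G_j$. I would first try it via the moment generating function. For $\lambda=1/(c\,2^{\ell+1}H_{n_\ell})$, where the constant $c$ is chosen so that $\lambda$ stays below every phase's success rate $j/2^{\ell+1}$ with room to spare, each factor satisfies $\EE[e^{\lambda G_j}]\le \exp(O(\lambda\,2^{\ell+1}/j))$, so $\EE[e^{\lambda G}]=O(1)$. Then $E[G^t]\le t!\,\lambda^{-t}\EE[e^{\lambda G}]$, combined with $t!\le t^t$, yields the bound up to constants. The constants must be tracked carefully to land exactly on $2e^2$; if they do not fit, I would absorb the slack into the relation between $H_{n_\ell}$ and $\hsub$. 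Alternatively, I would use the simpler estimate $\EE[G^t]\le t^t(\EE G)^t$, which holds for sums of independent geometrics and can be proved by the same MGF argument.
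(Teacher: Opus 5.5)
Your proposal is correct. Parts (i), (ii) and the $t=1$ case of (iii) coincide with the paper's arguments. The $t\ge 2$ case of (iii) follows a genuinely different route.

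The paper uses nothing about $Y_\ell$ beyond $\mu=\EE[Y_\ell]$ and a restart property. It splits the range of $Y_\ell$ into buckets $(e(\gamma-1)\mu,e\gamma\mu]$. It gets the exponential tail $\Pr[Y_\ell>e(\gamma-1)\mu]\le e^{-(\gamma-1)}$ from Markov's inequality applied to independent blocks of $e\mu$ requests. It then bounds $Z_\ell$ on each bucket by a binomial tail with $e\gamma\mu$ trials and sums $\gamma^t e^{-(\gamma-1)}$.

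You instead condition on all the $L_\ell$-information. This makes the conditional law of $Z_\ell$ an explicit binomial and gives $\Pr[Z_\ell\ge t]\le(2e\tailp/t)^t\,\EE[Y_\ell^t]$ directly. You then bound the moment via the MGF of the dominating sum of independent geometrics from the coupon-collector phases.

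Your conditioning step is the rigorous form of the domination $Z_\ell\preceq\mathrm{Bin}(Y_\ell,2\tailp)$ given $Y_\ell$. The paper's bucket step uses this only implicitly: it is written as an equality but is really an upper bound. The paper's tail argument, in turn, has the advantage of needing only the first moment of $Y_\ell$, not the phase structure.

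Your hedge about constants is unnecessary. Take $\lambda^{-1}=2\cdot 2^{\ell+1}H_{n_\ell}$ and use $-\ln(1-x)\le x+x^2$ on $[0,1/2]$. Then $\EE[e^{\lambda G}]\le\prod_j(1-2^{\ell+1}\lambda/j)^{-1}\le e^{1/2+\pi^2/24}<2.5$. Since $t!\,2^t\cdot 2.5\le(et)^t$ for all $t\ge 2$, you land exactly on $(2e^2\tailp\,2^{\ell+1}H_{n_\ell})^t$.

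Two minor points. First, your fallback $\EE[G^t]\le t^t(\EE G)^t$ does not follow from the same MGF computation for small $t$. That computation only yields $t!\,2^t\,O(1)\,(\EE G)^t$, so drop it; you don't need it. Second, the final passage from $H_{n_\ell}$ to $\hsub$ requires $H_{n_\ell}\le\hsub$. The paper also assumes this without comment, and \Cref{lem:dominate} supplies it only up to constant factors. So both arguments really establish the bounds with $H_{n_\ell}$ in place of $\hsub$.
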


Next, using these properties, we bound the total contribution of any single level $L_\ell$ to $\sum_i p_i q_i$ (i.e., to the LHS in \eqref{eq:main}):
\begin{claim}\label{cl:single-level}
    For any level $\ell$, we have $~\displaystyle\sum_{i \in L_\ell} p_i q_i \leq 4 \hsub \cdot \tailp$.
\end{claim}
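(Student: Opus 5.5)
The plan is to pass from $X_\ell$ to the number of tail requests $Z_\ell$, and then bound $\EE[Z_\ell]$ via Wald's identity and \Cref{lem:prop}(ii). By \Cref{lem:prop}(i), $\sum_{i\in L_\ell} p_i q_i \le 2^{-\ell}\,\EE[X_\ell]$, so it suffices to show $\EE[X_\ell] \le 2^{\ell+2}\,\hsub\cdot\tailp$.

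\textbf{Step 1: the pointwise inequality $X_\ell \le Z_\ell$.} Fix a realization and look at the prefix $M_\ell$. Every page of $L_\ell$ appears in $M_\ell$. A page of $L_\ell$ that is not among the first $k$ distinct pages must therefore be a distinct page of $M_\ell$ whose first occurrence is at distinct-position $>k$. Hence
\[
X_\ell \le \max\{0,\ \#\text{distinct pages in } M_\ell - k\}.
\]
The distinct pages of $M_\ell$ consist of at most $|\bH| = k$ head pages plus at most $Z_\ell$ distinct tail pages. So $X_\ell \le Z_\ell$ deterministically. In particular $X_\ell = 0$ whenever $M_\ell$ contains no tail request.

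\textbf{Step 2: Wald's identity.} $Y_\ell$ is a stopping time for the i.i.d.\ request sequence, and it has finite expectation by \Cref{lem:prop}(ii). Each request lands in $\bT$ with probability $\tailp$, independently of the past. Wald's identity then gives
\[
\EE[Z_\ell] = \tailp\cdot\EE[Y_\ell] \le \tailp\cdot 2^{\ell+1} H_{n_\ell}.
\]
Combining with Step 1 and \Cref{lem:prop}(i) yields $\sum_{i\in L_\ell}p_iq_i \le 2\,H_{n_\ell}\,\tailp$. This argument needs neither the dominant/non-dominant distinction nor \Cref{lem:prop}(iii). If one prefers to stay within the paper's toolkit, the alternative for non-dominant levels is $\EE[Z_\ell]=\sum_{t\ge1}\Pr[Z_\ell\ge t]$ bounded via \eqref{eq:tail-alt-1}--\eqref{eq:tail-alt}.

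\textbf{Step 3: comparing $H_{n_\ell}$ with $\hsub$ (the main obstacle).} We need $H_{n_\ell}\le 2\hsub$, and this is where the constant $4$ has to be checked. Since $\ell$ is a level of the head, $L_\ell\subseteq\bH$ and so $n_\ell\le k$. Taking $S=L_\ell$ in \eqref{eq:subsetH}, every conditional probability is at most $2/n_\ell$ because the probabilities on $L_\ell$ lie within a factor $2$ of each other. Hence $\hsub\ge \ent(\cD_{L_\ell})\ge \log(n_\ell/2)$. On the other side, $H_{n_\ell}\le 1+\ln n_\ell$. For large $n_\ell$ these two bounds give $H_{n_\ell}\le 2\hsub$ comfortably.

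For small $n_\ell$ this lower bound on $\hsub$ is too weak, so I would handle that case separately. I would use the trivial bound $X_\ell\le\min\{n_\ell, Z_\ell\}$, or replace $\log(n_\ell/2)$ by a sharper entropy lower bound for near-uniform sets, for instance the entropy of two pages within a factor $2$, which is at least $\ent(1/3,2/3)$. I would also use that $\hsub$ is bounded below by a constant in the regime of interest, i.e.\ $k\ge2$ and at least two pages. Together these should absorb the constants and give the stated bound $4\hsub\cdot\tailp$.
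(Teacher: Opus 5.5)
\textbf{Steps 1 and 2.} These follow the paper's proof. The paper also bounds $\sum_{i\in L_\ell}p_iq_i\le 2^{-\ell}\EE[Z_\ell]$ via $X_\ell\le Z_\ell$. It asserts that inequality without argument; your count of the distinct pages in $M_\ell$ is a correct justification. The paper then uses Wald's identity, but applies it to the tail requests falling between consecutive requests to $L_\ell$. It writes $\EE[Z_\ell]=\EE[\Gamma]\,\EE[N]$ with $\EE[N]\le 2n_\ell H_{n_\ell}$ and $\EE[\Gamma]\le \tailp\cdot 2^{\ell+1}/n_\ell$. That route loses a factor of $2$ twice and gives $4H_{n_\ell}\tailp$. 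Your direct application to the stopping time $Y_\ell$, combined with \Cref{lem:prop}(ii), gives the sharper $2H_{n_\ell}\tailp$.

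\textbf{Step 3.} This is where the proposal does not close, and it cannot close: the claim as literally stated is false when $\hsub$ is small. The paper hides this in the unjustified inequality $H_{n_\ell}\le\hsub$.

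Your argument is fine when $n_\ell\ge 2$. Since $L_\ell\subseteq\bH$, it is admissible in \eqref{eq:subsetH}, so $\hsub\ge\ent(\cD_{L_\ell})$. This is at least $\ent(1/3,2/3)$ for $n_\ell=2$ and at least $\log((n_\ell+1)/2)$ for $n_\ell\ge 3$, and either bound yields $H_{n_\ell}\le 2\hsub$.

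The failure is at $n_\ell=1$, where you need $\hsub\ge 1/2$. Your fallback, that $\hsub$ is bounded below by a constant once $k\ge 2$ and $n\ge 2$, is false. Take $k=2$, $n=3$ and $(p_1,p_2,p_3)=(1-\eps-\eps^2,\eps,\eps^2)$.
\begin{itemize}
\item Every set of size at most $2$ has conditional entropy at most about $\ent(\eps,1-\eps)$, so $\hsub=O(\eps\log(1/\eps))$.
\item Meanwhile $\tailp=\eps^2$ and page $2$ is alone in its level.
\item Also $q_2\approx\eps$: page $1$ almost surely comes first, and then page $3$ precedes page $2$ with probability $\eps/(1+\eps)$.
\end{itemize}
Hence $p_2q_2\approx\eps^2$, far above $4\hsub\cdot\tailp=O(\eps^3\log(1/\eps))$. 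Already for $\eps=1/64$ one gets $p_2q_2\approx 2.4\cdot 10^{-4}$ against $4\hsub\cdot\tailp\approx 1.1\cdot 10^{-4}$. Neither $X_\ell\le\min\{n_\ell,Z_\ell\}$ nor a sharper entropy estimate can repair this.

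What your argument actually proves is $\sum_{i\in L_\ell}p_iq_i\le 2H_{n_\ell}\tailp\le 4\max\{\hsub,1/2\}\cdot\tailp$. That is all the downstream $O(\hsub^3)$ analysis needs under the implicit convention $\hsub=\Omega(1)$. State the claim in that form rather than trying to force the constant.
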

\begin{proof}
 By \eqref{eq:indicator},
    $\sum_{i\in L_\ell} p_i q_i
        \le \frac{\EE[X_{\ell}]}{2^{\ell}}
         \le \frac{\EE[Z_{\ell}]}{2^{\ell}}$ since $X_\ell \le Z_\ell$ for any level $\ell$.
    
    So, it suffices to bound $\EE[Z_{\ell}]$. Recall that $M_\ell$ denotes the sequence of requests until all pages in $L_\ell$ are requested. Color every request in $M_\ell$ red, blue, or green depending on whether it is for a page in $\TT$, for a page in $\bH$ but not in $L_\ell$, and for a page in $L_\ell$, respectively. Variable $Z_\ell$ counts the number of red requests in $M_\ell$. 
    
    Let $N$ denote the number of green requests in $M_\ell$ and let $\Gamma_i$, for $i\ge 1$, be the number of red requests between the $(i-1)$st and $i$th green request. Note that the random variables $\Gamma_i$ are independent and identically distributed; let $\Gamma$ denote the common distribution. Furthermore, $N$ is independent of the sequence $\Gamma_1, \Gamma_2, \ldots$. Therefore, by Wald's identity we have
    \begin{equation}\label{eq:zell}
        \EE[Z_\ell] = \sum_i \EE[\Gamma_i] = \EE[\Gamma] \cdot \EE[N]. 
    \end{equation}
    We bound the two terms in the RHS of \eqref{eq:zell} separately. For $\EE[N]$, we use standard coupon collector bounds since the probabilities of pages in level $\ell$ are uniform up to a factor of $2$:
    $\EE[N] \le 2n_{\ell} H_{n_{\ell}} \le 2n_{\ell} \cdot \hsub$.        

    To bound $\EE[\Gamma]$, note that $\Gamma$ denotes the number of red requests before the first green request. Denote $p(\ell) := \sum_{i\in L_{\ell}} p_i$ as the probability of a green request and recall that $p(\TT)$ is the probability of a red request. Then,
    $\EE[\Gamma] =
        \left(\frac{p(\ell)}{p(\ell)+p(\TT)}\right)^{-1} - 1 
        = \frac{\tailp}{p(\ell)} \leq \frac{\tailp}{n_{\ell}\cdot 2^{-(\ell+1)}}$.
   
     Therefore, by \eqref{eq:zell},
    $\EE[Z_{\ell}] = \EE[\Gamma] \cdot \EE[N] < \frac{\tailp}{n_{\ell}\cdot 2^{-(\ell+1)}} \cdot 2n_{\ell} \cdot \hsub
        = 4\hsub\cdot \tailp\cdot 2^{\ell}$,
thus yielding the claim. \qedhere
\end{proof}

The above bound is sufficiently tight for levels $\ell$ where the probabilities $p_i$ are comparable to (or smaller than) the value of the tail probability $\tailp$. However, in higher levels with larger probabilities, the above bound is quite loose and we need a stronger guarantee. 

Let $r_1$ be the largest index $i$ such that $p_i \geq 4e^2\cdot  \hsub \cdot \tailp$ and let $\ell_1$ be the level that contains page $p_{r_1}$.
Roughly speaking, our tighter bound for levels $\ell < \ell_1$ geometrically shrinks as we go to levels with larger probabilities; this will later allow us to add these bounds across levels in a lossless manner. 

\begin{restatable}{claim}{singlelevelhone}
    \label{cl:single-level-h1}
    For any non-dominant level $\ell < \ell_1$, we have
    $
        \sum_{i\in L_\ell} p_i q_i 
        \le \frac{80 \hsub^2\cdot \tailp}{2^{\ell_1-\ell}}.$
    
\end{restatable}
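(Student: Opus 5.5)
Start from \eqref{eq:indicator}: $\sum_{i\in L_\ell} p_i q_i \le 2^{-\ell}\,\EE[X_\ell]$. Then bound $\EE[X_\ell]$ more carefully than in \Cref{cl:single-level}, using the tail estimates \eqref{eq:tail-alt-1}--\eqref{eq:tail-alt} for non-dominant levels instead of Wald's identity.

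The key observation is that a page of $L_\ell$ can fail to be among the first $k$ distinct pages only if some tail page is also among the distinct pages requested before it. Indeed, before all of $L_\ell$ has appeared, at most $k$ distinct pages have been requested unless a tail page has shown up, since $|\bH|=k$. More precisely, if $X_\ell\ge 1$ then at least one tail page is requested within $M_\ell$. Moreover $X_\ell \le Z_\ell$, because each page of $L_\ell$ pushed beyond the first $k$ distinct pages needs a distinct tail request before it. Hence
\[
\EE[X_\ell] \le \EE[Z_\ell] = \sum_{t\ge 1}\Pr[Z_\ell\ge t].
\]

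Set $\beta := 2\tailp\cdot 2^{\ell+1}\hsub$. For $\ell<\ell_1$ every page in $L_\ell$ has $p_i\ge 1/2^{\ell+1} \ge p_{r_1}/2\ge 2e^2\hsub\tailp$ (up to the level-boundary constant), so $2e^2\tailp 2^{\ell+1}\hsub$ is small. Specifically, the definition of $r_1$ gives $2^{-\ell_1}\ge 4e^2\hsub\tailp$, so
\[
e^2\beta = 2e^2\tailp\, 2^{\ell+1}\hsub \le 2^{\ell+1}\cdot 2^{-\ell_1-1} = 2^{\ell-\ell_1},
\]
which is at most $1/2$ for $\ell<\ell_1$. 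The sum is then dominated by its $t=1$ term:
\[
\sum_t \Pr[Z_\ell\ge t]\le \beta + \sum_{t\ge2}(e^2\beta)^t \le \beta + 2(e^2\beta)^2 \le \beta\,(1+2e^2\cdot 2^{\ell-\ell_1}) = O(\beta).
\]
Multiplying by $2^{-\ell}$ gives $O(\hsub\tailp)$. This is not yet geometric in $\ell_1-\ell$, so the $t=1$ term needs sharpening.

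The main obstacle is extracting the extra factor $2^{-(\ell_1-\ell)}$ at a cost of one more $\hsub$ factor, matching the $80\hsub^2$ in the claim. My plan is to bound $X_\ell$ by counting only those pages of $L_\ell$ that are actually delayed, rather than bounding by $Z_\ell$ directly. Condition on the first tail request occurring within $M_\ell$, which has probability at most $\beta$. Given this, the expected number of $L_\ell$ pages still missing at that time should be about $n_\ell\cdot\Pr[\text{a given page is unseen}]$. Since tail requests arrive at rate $\tailp$ while $L_\ell$ pages arrive at rate at least $2^{-\ell-1}$, a coupon-collector argument suggests that only $O(\hsub\cdot \tailp 2^{\ell+1})$ pages of $L_\ell$ remain uncollected in expectation when the first tail page arrives; equivalently, the expected number of green requests after the first red one is small.

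Combining these, $\EE[X_\ell] \lesssim \beta\cdot(1+ O(\hsub\tailp 2^{\ell})) + \sum_{t\ge 2}(e^2\beta)^t$. Alternatively, rerun the $t\ge 2$ tail bound with $X_\ell\le Z_\ell$ and use $\Pr[X_\ell\ge 1]\le \Pr[Z_\ell\ge 1]\cdot(\text{expected leftover})$. Either way, one factor of $\tailp 2^{\ell}\hsub\le 2^{\ell-\ell_1}/e^2$ is saved, and the result is
\[
2^{-\ell}\,\EE[X_\ell] \le O\!\left(\hsub^2\tailp\right)2^{\ell-\ell_1}.
\]
Collecting the constants from \eqref{eq:tail-alt-1}, \eqref{eq:tail-alt} and $e^2\le 8$ should give the stated constant $80$.
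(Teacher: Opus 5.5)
Your treatment of the $t\ge 2$ terms is correct and matches the paper: since $e^2\beta\le 2^{\ell-\ell_1}\le 1/2$, that part is already geometric in $\ell_1-\ell$. The gap is the $t=1$ term, which is the whole content of the claim, and the mechanism you propose for it cannot work. It only looks at the race between $L_\ell$ pages and tail pages. On the event $Z_\ell\ge 1$, however, the first tail request comes before the last $L_\ell$ page's first request by definition, so at least one $L_\ell$ page is always still missing at that moment. Bounding $X_\ell$ by the number of $L_\ell$ pages missing at the first tail request therefore gives at best $\Pr[X_\ell\ge 1]\le\Pr[Z_\ell\ge 1]$, i.e., nothing beyond $X_\ell\le Z_\ell$. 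You can see this in your own combined bound $\beta\,(1+O(\hsub\tailp 2^{\ell}))$: its leading term $\beta$ is not reduced at all. The alternative $\Pr[X_\ell\ge 1]\le\Pr[Z_\ell\ge 1]\cdot(\text{expected leftover})$ multiplies two bounds on essentially the same event and is unjustified.

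In fact the saving you aim for, a factor $\tailp 2^{\ell}\hsub$, is false. Take $m$ pages at level $\ell$ of total mass about $0.3$, one head page $u$ at level $\ell+1$ (so $r_1=u$ and $\ell_1=\ell+1$), a tail of mass $\tailp\to 0$, and one dominant page carrying the remaining mass (about $0.7$). Given $Z_\ell=1$, you get $X_\ell=1$ as soon as the dominant page and $u$ both occur in $M_\ell$, which has probability $\Omega(1)$. Hence $2^{-\ell}\EE[X_\ell]=\Omega(\hsub\tailp)$, however small $\tailp 2^{\ell}\hsub$ is. The claim holds there only because $2^{\ell-\ell_1}=1/2$.

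The missing idea is to use the \emph{other head pages}, which your argument never touches. Since $|\bH|=k$, on $Z_\ell=1$ a page of $L_\ell$ can be the $(k+1)$-st distinct page only if all $k-1$ other head pages are requested before it. In particular, page $r_1$, whose probability is below $2^{-\ell_1}$, must be requested within $M_\ell$; this is the paper's event $\E$. Conditioning on $Y_\ell=t$, having fewer tail requests among the $t$ requests only makes $r_1$ more likely. This gives $\Pr[\E\mid Z_\ell=1,Y_\ell=t]\le t\,2^{-\ell_1}/(1-\tailp)$, and averaging with $\EE[Y_\ell]\le 2^{\ell+1}\hsub$ yields $\Pr[X_\ell=1\mid Z_\ell=1]\le 4\hsub\,2^{\ell-\ell_1}$. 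Multiplying by $\Pr[Z_\ell\ge 1]\le\beta$ and by $2^{-\ell}$ produces both the factor $2^{\ell-\ell_1}$ and the second factor of $\hsub$. Combined with your $t\ge 2$ estimate, which also absorbs the case $X_\ell=1$, $Z_\ell\ge 2$, this gives a bound of the claimed form $O(\hsub^2\tailp\, 2^{\ell-\ell_1})$.
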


\begin{proof}[Proof Sketch]
    Since $X_\ell \le Z_\ell$, we get from \Cref{lem:prop}{\em (i)} that
\begin{equation}\label{eq:h1-main}
    \sum_{i\in L_\ell} p_i q_i \le \frac{1}{2^\ell} \cdot \EE[X_\ell] \le
    \Pr[X_\ell = 1 \mid Z_\ell = 1]\cdot \Pr[Z_\ell \ge 1] + \sum_{t\ge 2} \Pr[Z_\ell \ge t],
\end{equation}
where we split between $X_\ell = 1$ and $X_\ell \ge 2$, and replaced $X_\ell$ with $Z_\ell$. 
We bound $\sum_{t\ge 2} \Pr[Z_\ell \ge t]$ and $\Pr[Z_\ell \ge 1]$ using \Cref{lem:prop} {\em (iii)}. The interesting part is to bound $\Pr[X_\ell = 1 \mid Z_\ell = 1]$. 

Note that since $Z_\ell = 1$, there is only one request from $\bT$ before all pages in level $\ell$ are requested. Therefore, a necessary event for $X_\ell = 1 \wedge Z_\ell = 1$ is that the last page requested in $\bH$ must be from level $\ell$; in particular, page $r_1$ (in level $\ell_1 > \ell$) is requested before some page in $L_\ell$. Let $\E$ denote this event; we bound
\begin{equation}\label{eq:cond-sketch}
    \Pr[X_\ell = 1 \mid Z_\ell = 1] \le \Pr[\E \mid Z_\ell = 1] = \sum_{t\ge 1} \Pr[\E \mid Z_\ell = 1, Y_\ell = t]\cdot \Pr[Y_\ell = t].
\end{equation}
Consider any fixed number of requests, say $t$. The conditional probability of any page in the head $\bH$ increases when we decrease the number of requests for pages in the tail $\bT$ among these $t$ requests. Therefore, 
\[
    \Pr[\E \mid Z_\ell = 1, Y_\ell = t] 
    \le \Pr[\E \mid Z_\ell = 0, Y_\ell = t]
    = \frac{1/2^{\ell_1}}{1-\tailp}\cdot t
    \le \frac{t}{2^{\ell_1-1}}, \text{ since } \tailp < \nf 12.
\]
Here, we used that given $Z_\ell = 0$, the probability that any request is for page $r_1$ is $\frac{1/2^{\ell_1}}{1-\tailp}$. The claim can now be established by putting together the bounds we have obtained in \eqref{eq:h1-main}.
\end{proof}

\subsection{Splitting the Head}

We have established bounds on the contribution of individual levels to $\sum_i p_i q_i$. We are now ready to put these together to bound the total contribution of the head $\bH$. 
To do this, we subdivide the head $\bH = [k]$ into three intervals $\bH_1 := [1, r_1]$, $\bH_2 = [r_1 + 1, r_2]$, and $\bH_3 := [r_2 + 1, k]$ as follows: we define $r_1$ to be the largest index $i$ such that $p_i \geq 4 e^2 \hsub \cdot \tailp$ (this is the same definition of $r_1$ that we used to define the level $\ell_1$), and $r_2$ to be the largest index $i$ such that $p_i \geq \tailp / 2^{\hsub}$. We will separately bound the contribution of each of these intervals to $\sum_{i=1}^k p_iq_i$.

\begin{restatable}[Bounding $\bH_1$]{lemma}{boundhone}
	\label{lem:h1}
	The contribution of $\bH_1$ is bounded as
	$\sum_{i \in \bH_1} p_i q_i \le 88 \hsub^2 \cdot \tailp$.   
\end{restatable}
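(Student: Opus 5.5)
The plan is to split $\bH_1$ by level and sum per-level bounds, using the geometric decay in \Cref{cl:single-level-h1} so the sum does not pick up an extra $\hsub$ or $\log$ factor. The levels meeting $\bH_1$ are $\ell \le \ell_1$, since every page in $\bH_1$ has probability at least $p_{r_1}$. Among these, at most one is dominant, and level $\ell_1$ itself may be only partially contained in $\bH_1$. Since $q_i \ge 0$, I will overcount by including the whole level in both cases. This leaves three groups: the non-dominant levels $\ell < \ell_1$, the boundary level $\ell_1$, and the (at most one) dominant level.

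\textbf{Non-dominant levels $\ell<\ell_1$.} By \Cref{cl:single-level-h1}, level $\ell$ contributes at most $80\hsub^2\tailp/2^{\ell_1-\ell}$. Summing over $\ell<\ell_1$ gives a geometric series bounded by $80\hsub^2\tailp\sum_{j\ge1}2^{-j} = 80\hsub^2\tailp$.

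\textbf{Level $\ell_1$ and the dominant level.} For each of these, \Cref{cl:single-level} gives a contribution of at most $4\hsub\tailp \le 4\hsub^2\tailp$. This uses $\hsub\ge 1$, which is harmless: if $\hsub<1$, then $k$ is tiny and constants absorb it. These two levels add at most $8\hsub^2\tailp$.

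\textbf{Total.} Adding the three groups gives $80+4+4 = 88$, so the bound is $88\hsub^2\tailp$, matching the constant in the statement. This strongly suggests the intended accounting is exactly this split.

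The main obstacle is the dominant level. \Cref{cl:single-level-h1} and \Cref{lem:prop}(iii) only cover non-dominant levels, because the tail bound on $Z_\ell$ uses $p(\ell)\le 1/2$ to control how many other head pages interleave. Fortunately \Cref{cl:single-level} holds for every level, since its proof via Wald's identity needed no such assumption. So the dominant level simply costs one extra $4\hsub\tailp$ term.

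The only remaining care is bookkeeping. If the dominant level equals $\ell_1$, it is counted once, which only improves the bound. If $\ell_1$ is non-dominant, I still use \Cref{cl:single-level} for it rather than \Cref{cl:single-level-h1}, because the latter was stated only for $\ell<\ell_1$.
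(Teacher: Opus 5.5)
Your proposal is correct and is essentially the paper's proof: \Cref{cl:single-level} for the dominant level and for level $\ell_1$, plus the geometric sum of \Cref{cl:single-level-h1} over the non-dominant levels $\ell<\ell_1$. Your $80+4+4$ accounting is, if anything, more explicit than the paper's, which writes only one $4\hsub\tailp$ term. One small correction: the step $4\hsub\tailp \le 4\hsub^2\tailp$ needs $\hsub$ bounded below by a constant, which the paper also assumes implicitly. However, $\hsub<1$ does not force $k$ to be tiny, since a distribution with very rapidly decaying probabilities has $\hsub \to 0$ for every $k$. So this is a normalization convention, not something constants absorb.
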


\begin{proof}[Proof Sketch]
    This follows by using \Cref{cl:single-level} for a dominant level, if it exists, and level $\ell_1$, the last level in $\bH_1$, and using \Cref{cl:single-level-h1} for the remaining levels.
\end{proof}

\begin{lemma}[Bounding $\bH_2$]
	\label{lem:h2}
	The contribution of $\bH_2$ is bounded as
        $\sum_{i \in \bH_2} p_i q_i \le 12 \hsub^2 \cdot \tailp$.
\end{lemma}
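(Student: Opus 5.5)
The plan is to bound $\bH_2$ level by level with \Cref{cl:single-level} and to show that $\bH_2$ meets only $O(\hsub)$ levels.

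By definition of $r_1$ and $r_2$, every page $i\in\bH_2$ satisfies
\[
\tailp/2^{\hsub} \le p_i < 4e^2\hsub\cdot\tailp .
\]
So the ratio of the largest to the smallest probability in $\bH_2$ is below $4e^2\hsub\cdot 2^{\hsub}$. The doubling levels $L_\ell$ that intersect $\bH_2$ therefore form a consecutive range of at most
\[
\log_2\!\left(4e^2\hsub\, 2^{\hsub}\right)+2 \;=\; \hsub + \log_2\hsub + \log_2(4e^2) + 2
\]
levels. This is at most $3\hsub$ once $\hsub$ is at least a suitable constant. We may assume this: if $k\ge 2$ and $\tailp>0$, then $\hsub\ge 1$ (take two pages in some set $S$), and for small $\hsub$ we can absorb constants.

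For each such level $\ell$, \Cref{cl:single-level} gives $\sum_{i\in L_\ell} p_iq_i \le 4\hsub\cdot\tailp$. Pages of $\bH_2$ are a subset of the union of these levels, and all terms are nonnegative. Summing over at most $3\hsub$ levels therefore gives
\[
\sum_{i\in\bH_2}p_iq_i \le 3\hsub\cdot 4\hsub\cdot\tailp = 12\hsub^2\cdot\tailp .
\]

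The main obstacle is the level count, specifically getting the constant $3$ rather than something like $\hsub+O(\log\hsub)$ with a worse additive constant. The level count is $\hsub+\log\hsub+O(1)$, so some care is needed when $\hsub$ is small. If the clean bound fails for small $\hsub$, I would argue directly: for $\hsub=O(1)$, every level has $n_\ell=O(1)$ by \Cref{lem:dominate}, but the number of levels is still $O(1)$, so the bound holds up to the constant. I would phrase the count as $\lceil \hsub + \log(4e^2\hsub)\rceil+1 \le 3\hsub$ for $\hsub\ge$ a small constant, noting that $\hsub$ is bounded below by a constant in any nontrivial instance.

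A separate point to check: \Cref{cl:single-level} as stated applies to every level, dominant or not, so nothing special is needed if a dominant level intersects $\bH_2$.
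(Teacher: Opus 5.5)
Your proof is the paper's. You bound the number of doubling levels meeting $\bH_2$ via the ratio $4e^2\hsub\cdot 2^{\hsub}$ between the two thresholds and apply \Cref{cl:single-level} to each level; your count (with the $+2$ for the partial end levels, giving $3\hsub$ levels) is more careful than the paper's, which asserts $6\hsub$ levels and would therefore only yield $24\hsub^2\cdot\tailp$. One correction: $k\ge 2$ does not force $\hsub\ge 1$, since two pages give a full bit only when they are equiprobable, so the constant $12$ holds only once $\hsub$ exceeds a small absolute constant — an assumption the paper's proof also makes tacitly.
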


\begin{proof} 
    By construction, the number of levels in $\bH_2$ is at most $\log (4e^2\hsub\cdot 2^{\hsub}) \le 6\hsub$.
    The claim now follows from \Cref{cl:single-level}.
\end{proof}

\begin{restatable}[Bounding $\bH_3$]{lemma}{boundhthree}
	\label{lem:h3}
	The contribution of $\bH_3$ is bounded as
	$\sum_{i \in \bH_3} p_i q_i \le 4\tailp$.
\end{restatable}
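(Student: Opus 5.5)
Since each $q_i\le 1$, it is enough to show that the total probability of $\bH_3$ is at most a constant times $\tailp$, that is, $\sum_{i\in\bH_3} p_i \le 4\tailp$. Every page $i\in\bH_3$ has $p_i < \tailp/2^{\hsub}$ by the definition of $r_2$. We group $\bH_3$ into the doubling levels $L_\ell$ introduced above. The plan is to bound the number of pages in each level by \Cref{lem:dominate}, and then sum the resulting geometric series.

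\textbf{Step 1: a per-level bound.} Every level $L_\ell$ is contained in one of the sets $S_{\ell}$ of \Cref{lem:dominate}, up to the harmless shift of the half-open endpoint. Also $|L_\ell\cap \bH|\le k$. Hence, by the ``only if'' direction of \Cref{lem:dominate}, each head level has $n_\ell \le \min\{k,n_{\max}\} \le 2^{c\hsub}$ for an absolute constant $c$, namely the one hidden in the $\Theta(\cdot)$. The pages in $\bH_3$ at level $\ell$ have probability less than $2^{-\ell}$, so the level contributes at most $n_\ell 2^{-\ell} \le 2^{c\hsub-\ell}$.

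\textbf{Step 2: summing.} The top level meeting $\bH_3$ is $\ell_0$, where $2^{-\ell_0-1}\le \tailp/2^{\hsub}$ roughly, i.e.\ $2^{-\ell_0}\le 2\tailp 2^{-\hsub}$. Summing the geometric series over $\ell\ge \ell_0$ gives
\[
\sum_{i\in\bH_3}p_i \le 2\cdot 2^{c\hsub}\cdot 2^{-\ell_0} \le 4\tailp\, 2^{(c-1)\hsub}.
\]
This is $O(\tailp)$ only if $c\le 1$.

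\textbf{Main obstacle: the constant $c$.} The constant in \Cref{lem:dominate} need not be $1$. I see two ways to handle this, and I will try the first one first.

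\emph{Option 1: avoid the $\Theta$ entirely.} Use the direct entropy fact that a set of $m$ pages whose probabilities lie within a factor $2$ of each other has conditional entropy at least $\log m - 1$. Then $\log n_\ell \le \hsub+1$, so $n_\ell\le 2\cdot 2^{\hsub}$, which makes $c=1$ up to an additive constant. The stated constant $4$ then follows from a careful accounting of the half-open level endpoints and of the factor $2$ in the geometric sum.

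\emph{Option 2: a sharper direct argument.} Alternatively, $n_\ell\le 2^{\hsub+1}$ together with $p_i<\tailp 2^{-\hsub}$ gives level mass at most $2\tailp$ times $2^{-(\ell-\ell_0)}$. Summing over $\ell\ge\ell_0$ again gives $4\tailp$.

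In either case, multiplying by $q_i\le 1$ yields the lemma.
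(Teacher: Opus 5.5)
Your proposal is correct and is essentially the paper's argument. Both use $q_i\le 1$, group $\bH_3$ into doubling bands, cap each band's size through subset entropy, and sum a geometric series against $p_i<\tailp/2^{\hsub}$. Your direct count $n_\ell\le 2^{\hsub+1}$ comes from $\ent\ge\log m-1$ for at most $k$ head pages whose probabilities are within a factor $2$. It is in fact the rigorous version of the paper's appeal to \Cref{lem:dominate} for $n_\ell\le 2^{\hsub}$, which that $\Theta$-statement does not literally give.

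To recover exactly $4\tailp$ despite this extra factor $2$, align the bands at $\tailp\,2^{-\hsub}$. That is, use $\{i\in\bH_3 : \tailp\,2^{-\hsub-j-1}\le p_i<\tailp\,2^{-\hsub-j}\}$ for $j\ge 0$; each has mass below $2\tailp\,2^{-j}$, and these sum to below $4\tailp$. With the dyadic levels $L_\ell$, your Option~2 bookkeeping only gives $6\tailp$. That weaker constant is harmless for \Cref{thm:lru}.
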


\begin{proof}[Proof Sketch]
    To bound the contribution of $\bH_3$, we note that $q_i \le 1$ 
    for all pages $i$ and $\sum_{i\in L_\ell} p_i$ for any level $\ell$
    is at most $n_\ell \cdot \frac{1}{2^\ell} \le 2^{\hsub}\cdot \frac{1}{2^\ell}$
    by \Cref{lem:dominate}. The lemma now follows by adding over all levels in $\bH_3$
    and using $p_i \le \tailp / 2^{\hsub}$ for any page $i\in \bH_3$.
\end{proof}

We conclude with the proof of the main theorem.

\begin{proof}[Proof of \Cref{thm:lru}]
	By adding up the bounds from \Cref{obs:tail} and \Cref{lem:h2,lem:h3,lem:h1}, we get 
     \[
	\sum_{i\in [n]} p_i q_i \le (100 \hsub^2 + 5) \cdot \tailp = O(\hsub^2 \tailp).
	\]
	By \Cref{lem:opt_lb}, this implies that \lru is $O(\hsub^3)$-competitive for stochastic caching.
\end{proof}

\section{The LFU Algorithm}\label{sec:lfu}In this section, we analyze the Least Frequently Used (LFU) algorithm for stochastic caching.  Recall that in LFU, we maintain a count for every page of the number of times that page was requested, and we keep in cache the $k$ pages with the highest count. When a page outside of cache is requested, we evict the lowest count page in order to serve the request, then revert back to the $k$ pages with highest count if necessary. Clearly, in the limit, the empirical count distribution will converge to the true distribution, and \lfu will eventually behave identically to \topk. However, the ``burn-in'' period until convergence could be arbitrarily long as a function of $n$; e.g., if $p_k - p_{k+1} = \epsilon$, then to even distinguish these two pages with probability $\delta$ requires $\Omega(\log(1 / \delta)/\epsilon^2)$ samples (e.g.,  \cite{canonne2022short}). 

Despite this, we show that \lfu achieves an additive loss of $O(\hsub \cdot k)$ in addition to the $O(\tailp)$ cost paid by \topk. Since $k$ is incurred as an initialization cost for any algorithm, we show:

\begin{theorem}
The \lfu algorithm has a competitive ratio $O(\hsub)$ for the stochastic caching problem.
\end{theorem}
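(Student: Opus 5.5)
The plan is to show that the expected total cost of \lfu over $T$ requests is at most $O(T\cdot\tailp) + O(k\hsub)$. We then compare this against the offline optimum. By \Cref{lem:opt_lb}, the optimum pays $\Omega(T\tailp/\hsub)$ in expectation. Any algorithm also pays $\Omega(\min\{k, \text{number of distinct pages requested}\})$ for initialization. So the additive $O(k\hsub)$ is absorbed as $O(\hsub)\cdot\opt$, and $O(T\tailp)$ becomes $O(\hsub)\cdot\opt$. (A little care is needed when few distinct pages are requested. In that case we would charge only against pages actually requested, or note that the head pages with $p_i$ large enough to be requested are collected anyway.)

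To bound \lfu's cost, we charge each miss at time $t$ on page $i=r_t$. The miss happens only if $i$ is not among the $k$ highest counts before $t$. If $i\in\bT$, we charge it to the $\tailp$ term directly. If $i\in\bH$, there exist at least $k$ pages whose count is at least $i$'s count, so at least one of them is some $j\in\bT$ ranked above $i$; call this an \emph{inversion} $(i,j)$ with $p_i\ge p_j$.

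\textbf{Close inversions.} Suppose $p_j\ge p_i/2$. We charge the miss on $i$ to the tail page $j$. At most $k$ head pages can sit below a given $j$, but we want a charge proportional to $p_j$. The idea is an exchange argument: in a pair with probabilities within factor $2$, a request to $i$ while $i$ trails $j$ is at most twice as likely as a request to $j$ at the same moment. So the expected number of misses on $i$ charged to $j$ is at most $2\times$ the expected number of requests to $j$, and summing over $j$ gives $O(T\tailp)$. To avoid charging the same tail page $k$ times, we map each missed head page to a distinct tail page ranked above it. This is possible because the cache holds $k$ pages, so a Hall-type matching exists between the missed head page and the out-of-order tail pages at each time. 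We then amortize using the matched $j$.

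\textbf{Far inversions.} Suppose $p_j<p_i/2$. Here we show that once $i$ has been requested $c\hsub$ times for a suitable constant $c$, with good probability no page of probability below $p_i/2$ ever again has count $\ge$ $i$'s count. This follows from a Chernoff bound on the counts at each geometric time scale, combined with a union bound over the competing pages $j$. The key point is that the union bound must cost only $\hsub$, not $\log n$. We organize the competitors by the levels $S_\ell$ of \Cref{lem:dominate}: a level $\ell$ below $i$'s level has at most $2^{O(\hsub)}$ pages, and the failure probability decays like $e^{-\Omega(m 2^{\ell-\ell_i})}$ after $m$ requests to $i$. So the sum over levels and pages is small once $m=\Theta(\hsub)$. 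Before $i$ reaches that count it incurs at most $O(\hsub)$ misses, and the expected extra misses afterward form a geometric tail. Summing over the $k$ head pages gives $O(k\hsub)$.

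The main obstacle is the far-inversion union bound. We must control \emph{all} future times, not a single time, and we must let the count of $j$ grow as well. I would first try a martingale/maximal inequality (Ville's inequality on the likelihood ratio of $i$ versus $j$ counts, which is a supermartingale at rate depending on $p_j/p_i\le 1/2$). This gives $\Pr[\exists t: c_j(t)\ge c_i(t)\text{ after }i\text{ has }m\text{ requests}]\le 2^{-\Omega(m)}\cdot(\text{poly in }p_j/p_i)$, which can then be summed via the level-size bound $2^{O(\hsub)}$.
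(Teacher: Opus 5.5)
The overall architecture is the paper's. A head page $i$ becomes ``separated'' once no page of probability at most $p_i/2$ can ever again outrank it. At each step, the out-of-cache separated head pages are matched to distinct in-cache tail pages, each tail page having more than half the probability of its partner, which gives at most $2\tailp$ expected cost per step. The gap is in the far-inversion step, which you yourself flagged as the main obstacle.

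You assert that every level below $i$'s contains at most $2^{O(\hsub)}$ pages. By \Cref{lem:dominate}, $\hsub=\Theta(\log\min\{k,n_{\max}\})$, and always $\hsub\le\log k$, so this holds only when $n_{\max}\le k$. When $n_{\max}>k$, a tail level can contain arbitrarily many pages, and a page-by-page union bound cannot be paid for with $m=\Theta(\log k)$ requests.

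In fact your intermediate claim, that every head page suffers only $O(\hsub)$ expected far-inversion misses, is false. Take $k-1$ heavy head pages, one head page $i$ of probability $\delta$, and $N$ tail pages of probability $\delta/4$ each. Then $\hsub=\Theta(\log k)$. When $i$ has count $c$, each tail count is roughly Poisson$(c/4)$, so the expected number of tail pages with count at least $c$ is $N\cdot 2^{-\Theta(c)}$. Hence $i$ misses essentially its first $\Theta(\log N)$ requests, all of them far inversions. With $N=2^{k^2}$ this page alone exceeds $O(k\hsub)$. The theorem survives only because $p_i\ll\tailp$ in such examples.

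Two ingredients are missing (see \Cref{cl:Ri_gamma}).
\begin{enumerate}
\item Discard the set $D$ of head pages with $p_i<2\tailp/k$. Since $p(D)\le 2\tailp$, all their misses can be charged to the $O(T\tailp)$ term.
\item For the remaining head pages, control the competitors through the tail mass rather than through $n_{\max}$. Split the tail beyond index $2k$ into $k$ interleaved groups (indices congruent mod $k$), each of mass at most $\tailp/k\le p_i/2$. Use \Cref{cl:rw-set} to treat each group as a single page. Together with pages $k+1,\dots,2k$, this gives a union bound over $O(k)$ events, each of probability $k^{-2}$ when $\gamma=16\log k$.
\end{enumerate}
Alternatively, in step 2, when $p_i\ge 2\tailp/k$ at most $2^{h+1}\tailp/p_i\le k2^h$ tail pages have probability in $(p_i2^{-h-1},p_i2^{-h}]$, and a per-page bound can absorb that many. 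Your level-wise union bound is exactly the paper's argument for the case $n_{\max}\le k$.

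Separately, the per-page rate $e^{-\Omega(m2^{\ell-\ell_i})}$ you state is too strong. The first $2m$ requests to $\{i,j\}$ split evenly with probability roughly $(4p_j/p_i)^m$, so the correct rate is $2^{-\Theta(m(\ell-\ell_i))}$. This is what your gambler's-ruin/Ville bound actually gives, and it suffices once the number of competitors is controlled as above.
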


The general proof idea is to show that every page in $\bH$ (i.e., top $k$) might be requested while it is outside of cache many times, but that after $O(\hsub)$ such cache misses, its count will forever be higher (from that point on, with good probability) than the count of any page with less than half its probability. If this is the case, we can charge any cache miss due to a request to $\bT$ to $T \cdot \tailp$. Each of the $k$ pages in $\bH$ will only miss $O(\hsub)$ times before the only pages displacing it from cache are ones with very similar probability, and hence distinguishing between these does not matter; the total contribution from these pages in $\bH$ is $O(k \hsub + T \cdot \tailp)$.  Below, we formalize this argument.
\begin{proof}
Define a parameter $\gamma = 16 \log (\min(k, n_{\max})) = O(\hsub)$.  For every $i\in \bH$ and $j \in [n]$ such that $p_j \leq p_i / 2$, define $\E_{ij}$ to be the bad event that if the first time $t$ that the empirical count of $i$ is at least $\gamma$, there is at least one time $t' \geq t$ such that the empirical count of $j$ exceeds the count of $i$. We show that $\E_{ij}$ is extremely unlikely for well separated $p_i$ and $p_j$.

\begin{restatable}[]{claim}{rwsingle}
\label{cl:rw-single}
    For all $i \in \bH$, $j \in [n]$ such that $p_j \leq p_i / 2^h$ for some $h \geq 1$, we have $\prob{\E_{ij}} \leq 2^{-h\gamma / 8}$.    
\end{restatable}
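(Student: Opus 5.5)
We want to bound the probability that, after page $i$ has been requested $\gamma$ times, the count of page $j$ ever catches up with it. Only requests to $i$ or $j$ change the difference of their counts, so we discard all other requests. The subsequence of requests to $\{i,j\}$ is then i.i.d.: each is for $i$ with probability $\alpha = p_i/(p_i+p_j) \ge 2^h/(2^h+1)$ and for $j$ with probability $1-\alpha$. Let $D_s$ be (count of $i$) minus (count of $j$) after $s$ such requests. This is a biased random walk with drift toward $+\infty$. The event $\E_{ij}$ is contained in the union of two events. The first is $A$: at the moment $i$ reaches count $\gamma$, the count of $j$ is already at least $\gamma/2$. The second is $B$: the count of $j$ is below $\gamma/2$ at that moment, so $D \ge \gamma/2$, and afterwards the walk $D$ hits $\le 0$ (really $<0$, but $\le 0$ only weakens the bound).

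\textbf{Bounding $B$.} By the gambler's-ruin formula, a walk with up-probability $\alpha$ started at height $m$ ever reaches $0$ with probability $((1-\alpha)/\alpha)^m$. Here $(1-\alpha)/\alpha = p_j/p_i \le 2^{-h}$. By the strong Markov property at the time $i$ reaches count $\gamma$, we get $\Pr[B] \le 2^{-h\gamma/2}$.

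\textbf{Bounding $A$.} Consider the first $\gamma + \lceil\gamma/2\rceil$ requests in the $\{i,j\}$-subsequence. Event $A$ requires at least $\gamma/2$ of the $j$-requests to occur before the $\gamma$-th $i$-request, so among the first $\lfloor 3\gamma/2\rfloor$ subsequence requests at least $\gamma/2$ must be $j$'s. Each is a $j$ independently with probability at most $2^{-h}$. By a union bound over which positions are $j$'s,
\[\Pr[A] \le \binom{3\gamma/2}{\gamma/2} 2^{-h\gamma/2} \le 2^{3\gamma/2 \cdot H(1/3)} 2^{-h\gamma/2} \le 2^{1.38\gamma - h\gamma/2}.\]
This is where the main obstacle lies: for $h=1$ the binomial coefficient swamps $2^{-h\gamma/2}$. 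The fix is to choose the threshold more carefully, replacing $\gamma/2$ by a smaller fraction $c\gamma$. The count of $j$ is then at most $c\gamma$ when $i$ hits $\gamma$, so the ruin probability becomes $2^{-h(1-c)\gamma}$. For $A$ we use a Chernoff bound: the expected number of $j$'s among $(1+c)\gamma$ trials is at most $(1+c)\gamma 2^{-h}$, and we need $c\gamma$ of them. With $c=1/2$ and $h=1$ the mean is $3\gamma/4 > \gamma/2$, so the threshold fails.

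I therefore expect to redo the split as follows. Note that $D$ after $i$'s $\gamma$-th request is itself a walk of length at least $\gamma$ steps, and treat the whole process as a single walk argument. The event $\E_{ij}$ means that the walk $D_s$ returns to $\le 0$ at some time after $i$ has been requested $\gamma$ times; in particular, after time $\gamma$, $D_s \le 0$ at some $s \ge \gamma$. For each $s\ge\gamma$, $\Pr[D_s\le 0] \le (4\alpha(1-\alpha))^{s/2}$ by the standard exponential-moment (Chernoff) bound. Also $4\alpha(1-\alpha) \le 4\cdot 2^{-h}$, which for $h \ge 3$ yields $2^{-(h-2)s/2}$. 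Summing the geometric series over $s\ge\gamma$ gives roughly $2^{-(h-2)\gamma/2}$, which is at most $2^{-h\gamma/8}$ once $h\ge 3$. The small cases $h\in\{1,2\}$ are the real obstacle. For them I would use the exact bound $\Pr[D_s\le 0]\le(2\sqrt{\alpha(1-\alpha)})^s$ with $\alpha\ge 2/3$, giving base $2\sqrt{2}/3 \approx 0.943$ per step. This only yields roughly $2^{-0.085 s}$, i.e.\ a decay of about $2^{-\gamma/12}$ rather than $2^{-\gamma/8}$ when $h=1$. So the constant $1/8$ must be recovered by a sharper argument. The best candidate is a ruin bound from the first time $D$ reaches height $\gamma/4$, using that $D_\gamma$ has drift $\ge\gamma/3$, with the Chernoff deviation for $D_\gamma<\gamma/4$ and ruin $2^{-h\gamma/4}$ from there. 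The constants $16$ and $1/8$ in $\gamma$ suggest that the authors used a slack of this kind.
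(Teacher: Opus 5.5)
Your proposal does not yet prove the claim. It settles $h\ge 3$ up to constants, and the ``best candidate'' you end with for small $h$ cannot work.

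That candidate (Chernoff on $D_\gamma$, then gambler's ruin from height $\gamma/4$) only bounds the event $\{D_s\le 0\text{ for some } s\ge\gamma\}$. This event contains $\{D_\gamma\le 0\}$. When $p_j=p_i/2$, the latter has probability $2^{-(0.085+o(1))\gamma}$, since your Chernoff base $2\sqrt2/3$ is the true large-deviation rate there. That is far above $2^{-\gamma/8}$. Concretely, $D_\gamma$ has mean $\gamma/3$, so $\Pr[D_\gamma<\gamma/4]$ decays only like $2^{-0.006\gamma}$.

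This is the route of the paper's written proof, and it has the same defect there. The paper identifies $\E_{ij}$ with the walk returning to $0$ after $\gamma$ \emph{walk steps}. Its Chernoff step (at most $\gamma/4$ down-steps among the first $\gamma$, with probability $1-2^{-h\gamma/4}$) is false at $h=1$ with $p_j=p_i/2$, because the down-step probability $1/3$ exceeds $1/4$. Moreover, the displayed bound $e^{\gamma/4}(4q)^{-\gamma/4}$ has the wrong sign in the exponent, and the corrected $(4eq)^{\gamma/4}$ is never below $2^{-h\gamma/4}$. So matching the authors' slack will not close the gap.

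Tuning the threshold $c\gamma$ in your first split $A\cup B$ does not close it either. At $h=1$ you cannot make both $\Pr[A]\le 2^{-\gamma/8}$ and $2^{-(1-c)\gamma}\le 2^{-\gamma/8}$ hold; the best choice $c\approx 0.89$ gives only about $2^{-0.11\gamma}$.

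The missing idea is to keep the $i$-count, as in your first split, and to \emph{average} the exact ruin probability over the $j$-count rather than cutting it at one threshold. On $\E_{ij}$ the $j$-count exceeds an $i$-count that is already at least $\gamma$. So the walk is at step at least $2\gamma+1$, not merely $\gamma$, and this is what doubles the exponent.

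Let $C$ be the number of $j$-requests before the $\gamma$-th $i$-request, and let $r=p_j/p_i\le 2^{-h}$. By the strong Markov property and gambler's ruin, $\Pr[\E_{ij}\mid C]\le\min\{1,r^{\gamma+1-C}\}\le r^{(\gamma+1-C)/2}$. Also, $C$ is negative binomial with $\EE[z^C]=(1+r-rz)^{-\gamma}$ for $rz<1+r$. Taking $z=r^{-1/2}$,
\[
\Pr[\E_{ij}]\le r^{(\gamma+1)/2}(1+r-\sqrt{r})^{-\gamma}\le\Bigl(\frac{\sqrt{r}}{1-\sqrt{r}+r}\Bigr)^{\gamma}.
\]
The base is increasing in $r\in(0,1)$. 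At $r=2^{-h}$ it is at most $2^{-h/8}$ for every $h\ge 1$; at $h=1$ it is about $0.892<2^{-1/8}\approx 0.917$. This proves the claim with constant $1$.

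Alternatively, you can sum your bound $(4\alpha(1-\alpha))^{s/2}$ over $s\ge 2\gamma+1$ instead of $s\ge\gamma$. This also yields an exponent better than $h/8$, but only up to a constant factor.
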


The proof uses standard facts and we leave it to \Cref{sec:lfu-proofs}. The last claim only gets easier if instead of considering a single page $j$ with $p_j \leq p_i/2^h$, we consider a group of pages whose cumulative probability is at most $p_i/2^h$, and ask what are the odds that any of these pages pass $i$ in count. (Since, even as we grouped these pages, their total count would still not exceed that of $i$ with the same probability as before.) 

\begin{corollary}
\label{cl:rw-set}
    For all $i \in \bH$ and any subset $S \subseteq [n]$ such that $\sum_{j \in S} p_j \leq p_i / 2^{h}$ for some $h \geq 1$, we have $\prob{\bigcup_{j \in S} \E_{ij}} \leq 2^{-h\gamma / 8}$.
\end{corollary}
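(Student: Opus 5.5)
We reduce the corollary to \Cref{cl:rw-single} by merging the pages of $S$ into a single virtual page. Define a modified request process in which every request to a page of $S$ is relabeled as a request to a new page $j^\ast$. The requests stay i.i.d., $p_{j^\ast}=\sum_{j\in S}p_j\le p_i/2^h$, and the request process of page $i$ is unchanged. At every time $t'$, the empirical count of $j^\ast$ equals $\sum_{j\in S}c_j(t')$, where $c_j(t')$ is the count of page $j$.

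The key step is the event inclusion
\[
\bigcup_{j\in S}\E_{ij}\subseteq \E_{ij^\ast},
\]
where $\E_{ij^\ast}$ is the bad event of \Cref{cl:rw-single} in the merged process, with the same reference time $t$ (the first time the count of $i$ reaches $\gamma$). Suppose some $j\in S$ and some $t'\ge t$ satisfy $c_j(t')>c_i(t')$. Counts are nonnegative, so $\sum_{j'\in S}c_{j'}(t')\ge c_j(t')>c_i(t')$, and hence $\E_{ij^\ast}$ occurs at the same $t'$. The time $t$ is determined only by the requests to $i$, so it is identical in both processes.

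Applying \Cref{cl:rw-single} to the merged process then gives $\Pr[\E_{ij^\ast}]\le 2^{-h\gamma/8}$, and the corollary follows.

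The one step to verify is that \Cref{cl:rw-single} applies to the merged process. Its hypothesis is stated for $j\in[n]$ in the original distribution. Its proof, however, only uses that $i$ and $j$ are requested i.i.d.\ with per-step probabilities satisfying $p_j\le p_i/2^h$: the analysis is a random walk on $c_i-c_j$ driven by i.i.d.\ steps. It does not use $p_j$ being a single atom, nor that $j$ is distinct from the head, so it applies verbatim to the distribution in which $S$ is collapsed into the atom $j^\ast$. If the argument in \Cref{sec:lfu-proofs} secretly uses something like $p_j\le p_k$, I would check that it still holds, or rederive the bound directly for the two-type walk (steps $+1$ with probability $p_i$, $-1$ with probability $p(S)$).
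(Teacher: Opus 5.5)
Your proposal is correct and follows the same route as the paper. The paper justifies the corollary only by an informal remark: group the pages of $S$ into a single page of probability $p(S)\le p_i/2^h$, observe that the group's total count dominates each individual count, and invoke \Cref{cl:rw-single}. Your version makes explicit the two points that remark leaves implicit: the event inclusion $\bigcup_{j\in S}\E_{ij}\subseteq\E_{ij^\ast}$ (with the same stopping time $t$ and the original $\gamma$), and the fact that the proof of \Cref{cl:rw-single} uses only the two-type random walk, so it applies to the merged page.
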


For every $i\in \bH$, let $R_i$ be the number of requests to page $i$ until the count of $i$ never goes below that of any page $j$ with $p_j \leq p_i / 2$. Using \Cref{cl:rw-single} and \Cref{cl:rw-set}, we can show that the expected number of requests until we achieve this count separation can be upper bounded.

\begin{restatable}{claim}{rigamma}
    \label{cl:Ri_gamma}
    For every $i\in \bH \setminus D$ for some set $D$ such that $\sum_{i \in D} p_i = 2 \tailp$, we have $\expect{R_i} \leq  2\gamma$.
\end{restatable}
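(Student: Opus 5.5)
The plan is to write $R_i$ as the deterministic threshold $\gamma$ plus an overshoot that appears only when a bad event happens. By definition of $\E_{ij}$, once page $i$ has been requested $\gamma$ times, if no $\E_{ij}$ with $p_j \le p_i/2$ occurs, the count of $i$ is never again below that of any such $j$, so $R_i \le \gamma$ on that event. I would then use a restart argument. If some $\E_{ij}$ occurs, consider the first later moment at which count separation is recovered relative to a new threshold: wait until $i$ has $\gamma$ additional requests. By the memorylessness of the i.i.d. sequence, the increments of the counts after any stopping time form a fresh random walk. The same argument as in \Cref{cl:rw-single} therefore bounds the probability that the separation fails again, and this gives a geometric number of phases. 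Hence $\expect{R_i} \le \gamma/(1-P_i)$, where $P_i$ is the probability that some dominated page ever catches up in one phase. It then suffices to show $P_i \le 1/2$.

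To bound $P_i$, I would partition the dominated pages $\{j : p_j \le p_i/2\}$ into dyadic groups $G_h = \{j : p_i/2^{h+1} < p_j \le p_i/2^h\}$ for $h \ge 1$. Applying \Cref{cl:rw-set} directly needs the total mass of a group to be at most $p_i/2^{h'}$, so I would split each $G_h$ into chunks of total probability at most $p_i/2^{h-c}$ and choose the chunk sizes so that each chunk has mass about $p_i/2$ or less. A chunk of mass at most $p_i/2^{h'}$ fails with probability $2^{-h'\gamma/8}$. The number of chunks of $G_h$ is at most $|G_h|\cdot 2^{-h}+1$. Since $G_h$ lies within one dyadic level, \Cref{lem:dominate} gives $|G_h|\le n_{\max}$. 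Restricting to competitors that actually matter, meaning those that could occupy one of the $k$ cache slots, effectively caps the count at $\min(k,n_{\max})$. Because $\gamma = 16\log(\min(k,n_{\max}))$, the union bound over $G_h$ gives at most $\min(k,n_{\max})\cdot 2^{-2h\log\min(k,n_{\max})}\le 2^{-h}/4$. Summing over $h\ge 1$ yields $P_i \le 1/2$.

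The main obstacle, and the role of the exceptional set $D$, is that the union bound can fail for a page $i$ whose own probability is tiny: the dominated pages may then include the entire tail, which could consist of very many pages beyond the $\min(k,n_{\max})$ cap. For these pages I would instead group \emph{all} dominated tail pages together. When $p(\bT)\le p_i/4$, \Cref{cl:rw-set} handles them as a single set with failure probability $2^{-\gamma/4}$. The pages where this fails satisfy $p_i < 4p(\bT)$. I would put into $D$ the smallest-probability head pages until their mass reaches $2\tailp$, and then check that every $i\in\bH\setminus D$ has $p_i$ large enough relative to $\tailp$ (up to constants) that the grouped-tail argument applies. Getting this threshold to line up exactly with the mass $2\tailp$ is the step I expect to be delicate. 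If it fails, I would enlarge $D$ by a constant factor and adjust the constants in $\gamma$.
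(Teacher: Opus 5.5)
Your argument has a genuine gap: it breaks down on tail competitors when $k\le n_{\max}$, which is exactly the case that needs the exceptional set $D$. The ``cap'' of $\min(k,n_{\max})$ competitors per dyadic group is not justified. Under LFU any page whose count gets large enough can enter the cache, and when $k\le n_{\max}$ a single dyadic group can contain far more than $k$ tail pages. So a per-page union bound paid for by $\gamma=16\log k$ does not go through on cardinality grounds. (When $n_{\max}<k$ your per-group bound is fine. There $|G_h|\le 2n_{\max}$, with a factor $2$ because $G_h$ is dyadic relative to $p_i$ and can straddle two levels; this is the paper's Case 2 and needs no $D$.)

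Your fallback is to lump all dominated tail pages into one set for \Cref{cl:rw-set}. That requires $p_i\ge 4\tailp$, so $D$ would have to contain every head page with $p_i<4\tailp$, and these can carry far more than $2\tailp$ mass. For instance, take $k$ head pages of probability $(1-\tailp)/k$ each and $\tailp=1/\sqrt{k}$ spread over $N\gg k$ equal tail pages. No head page satisfies $p_i\ge 4\tailp$, yet $D$ may contain only about $2\sqrt{k}$ of the $k$ head pages. The mismatch grows with $k$, so enlarging $D$ by a constant factor or retuning the constant in $\gamma$ cannot repair it.

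The missing idea is to split the tail into about $k$ groups of mass $O(\tailp/k)$ rather than one group of mass $\tailp$. This lowers the threshold for membership in $D$ from $\Theta(\tailp)$ to $\Theta(\tailp/k)$. The paper does this via the sorted order. It treats pages $k+1,\dots,2k$ individually and groups the rest into residue classes $G_\ell=\{j\ge 2k+1: j\equiv \ell \bmod k\}$. Each class has mass at most $\tailp/k$: every $j\ge 2k+1$ has $p_j$ at most the average of the $k$ tail pages $j-k,\dots,j-1$, and these blocks are disjoint within a class. Every head page with $p_i\ge 2\tailp/k$ then beats each group by a factor $2$. \Cref{cl:rw-set} gives failure probability $2^{-\gamma/8}=k^{-2}$ per group, and a union bound over the $2k$ groups gives $O(1/k)$. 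The set $D=\{i\in\bH: p_i<2\tailp/k\}$ has mass at most $k\cdot 2\tailp/k=2\tailp$. (Alternatively, your per-page union bound works if you count the tail pages of $G_h$ by mass, $|G_h\cap\bT|<2^{h+1}\tailp/p_i$, instead of by a cap.)

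A smaller point: your restart step is loose. Counts are not reset after a failure, so the next phase begins with $j$ ahead of $i$, and a failed phase can contain more than $\gamma$ requests to $i$. It is cleaner to apply the same separation bound at thresholds $m\gamma$ and sum $\Pr[R_i>m\gamma]$ over $m$.
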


Note that if we had set $\gamma = \Theta(\log n)$, we could guarantee by a union bound over all pairs of pages $i$ and $j$ that every page $i \in \bH$ sees at most $\gamma$ requests before no page $j$ with $p_j \leq p_i/2$ ever has count higher than $i$ (with high probability). However, we can get away with smaller $\gamma$ using a more careful analysis, which we again defer to \Cref{sec:lfu-proofs}.  We now show how to conclude the proof.

Let $S_t$ be the subset of pages $i$ in $\bH$ that are both outside of cache at time $t$ and that have been requested at least $R_i$ times by time $t$. In other words, these are the pages $i\in \bH$ outside of cache at time $t$ such that no page $j$ with $p_j \leq p_i/2$ will have a count exceeding that of page $i$ after time $t$. Define an arbitrary perfect matching $M$ between the pages of $S_t$ and a subset of the pages from $\bT$ in cache at time $t$. If page $i \in S_t$ is matched to page $M(i)$, then $p_{M(i)} \geq p_i/2$, since the count of $M(i)$ must be higher than that of $i$, and $i \in S_t$.

Let $C^t$ be the $k$ pages with highest counts at time $t$. Then we can write the total cost of the \lfu algorithm as
\begin{align}
    \sum_t \sum_{i =1}^n p_i \prob{i \not \in C^t} &\leq \sum_t \sum_{i \not \in S_t} p_i \prob{i \not \in C^t} + \sum_t \sum_{i \in S_t} p_i \prob{i \not \in C^t} + \sum_t \tailp \nonumber \\
    &\leq  2 T \cdot \tailp + \sum_{i \in \bH} \expect{R_i} +  \sum_t \sum_{i \in S_t} p_i \prob{i \not \in C^t} + T \cdot \tailp \label{line:use_Ri} \\
    &\leq 2\gamma \cdot k + 2 \cdot \sum_t \sum_{i \in S_t} p_{M(i)} + 3T \cdot \tailp \label{line:matchtotail}\\
    &\leq 2\gamma \cdot k + 5T \cdot \tailp. \label{line:totalmatching}
\end{align}
Step \eqref{line:use_Ri} follows because the expected number of misses for each page $i \in \bH$ until it belongs to $S_t$ is $\expect{R_i}$, step \eqref{line:matchtotail} follows by \Cref{cl:Ri_gamma} and since $p_i \leq 2 p_{M(i)}$, and step \eqref{line:totalmatching} follows since the matched pages in cache form a subset of $\bT$, and hence their total probability is less than $\tailp$.
\end{proof}

\section{Closing Remarks}

In this work, we study the stochastic caching problem as a way to move beyond worst-case competitive ratio analysis for online algorithms.  We give new evidence for the effectiveness of the popular \lru and \lfu caching policies, by showing that for many distributions, they incur approximation factors significantly better than classical bounds predict. En route, we identify subset entropy as an important parameter of the distribution governing the achievable competitive ratio of stochastic caching; this information-theoretic quantity might be of independent interest. 

There are several avenues for future work. 
The first obvious direction is to either improve the competitive ratio of \lru from $O(\hsub^3)$ to $O(\hsub)$, or show that this is not possible. In \Cref{sec:lru_lb}, we show that there are distributions for which the cost of LRU is $\Omega(\hsub \cdot p(\bT))$, meaning that our techniques can at best prove an $O(\hsub^2)$ bound.

Another direction is to generalize the result to more complex stochastic models, e.g., the Markov paging model \cite{KarlinPR00} (this would entail developing an analogous quantity to subset entropy). 
Going beyond caching, we often prove lower bounds for randomized online algorithms by exhibiting pathological input distributions. Parameterizing what makes these distributions difficult and, more generally, understanding and quantifying the performance of an online algorithm through a fine-grained measure such as subset entropy are promising directions of future research.

\section*{Appendix}

\appendix

\section{Missing Proofs from \Cref{sec:prelim}}\label{sec:prelim_proofs}
\begin{proof}[Proof of \Cref{lem:dominate}]
    Note that for every $\ell$, the distribution $\cD_{S_{\ell}}$ is uniform up to a factor of 2 (i.e., the maximum probability is at most twice the minimum probability). As noted, the Shannon entropy of such a distribution is $\Theta(\log n_{\ell})$, where $n_{\ell}$ is the size of its support.  We choose $S$ as follows: if $n_{\max} \le k$, then $S = \arg\max_\ell |S_\ell|$ and hence $|S| = n_{\max}$; else, $S$ is any $k$-sized subset of $\arg\max_\ell |S_\ell|$ and hence $|S| = k$. Then, we get
    \[
        \hsub(\cD, k) \ge \ent(\cD_S) = \Omega(\log |S|) = \Omega(\log (\min \{k, n_{\max} \})).
    \]

    We now show the other direction, by proving that for any distribution $\cD'$ with $\max_\ell |\{u\in U: 1/2^{\ell+1} < \cD'(u) \le 1/2^{\ell})\}| \leq N$, we have that $H(\cD') \leq O(\log N))$. Note that $N \le \min\{2n_{\max}, k\}$ for $\cD_S$, where $S \subseteq [n]$ is any subset of cardinality at most $k$, since pages in $S$ whose probabilities in $\cD_S$ differ by at most a factor of two can come from at most two adjacent levels of the original distribution $\cD$.

    Letting $S'_\ell := \{u: 1/2^{\ell+1} < \cD'(u) \le 1/2^\ell\}$, we have,
    \[
        \ent(\cD') 
        = -\sum_{u} \cD'(u) \cdot \log \cD'(u)
        = -\sum_\ell \sum_{u\in S'_\ell} \cD'(u) \cdot \log \cD'(u).
    \]
    Now, for any $u\in S'_\ell$, we have $\cD'(u) \le \nf{1}{2^\ell}$ and $-\log \cD'(u) = \log \nf{1}{\cD'(u)} < \log 2^{\ell+1} = \ell+1$. 

    Therefore,
    $\ent(\cD') 
        \le \sum_\ell n_\ell\cdot \frac{\ell+1}{2^\ell}$.
    where $n_\ell := |S'_{\ell}|$.
    
    To bound the RHS of this inequality, we use a factor-revealing LP.     
    We note that $\ent(\cD')$ is at most the optimal primal objective in the following LP:
    \[
        \text{maximize } \sum_{\ell \ge 1} n_\ell\cdot \frac{\ell+1}{2^\ell} 
        \mbox{ subject to } 
                 \sum_\ell \frac{n_\ell}{2^\ell} \le 1
                \mbox{ and } 
            \{ n_\ell \le N\}_{\forall \ell}.
\]
The dual LP is
    \[
        \text{minimize } \alpha + N \sum_\ell \beta_\ell, \text{ subject to }
            \left\{ \frac{\alpha}{2^\ell} + \beta_\ell \ge \frac{\ell+1}{2^\ell} \right\}_{\forall  \ell}.
        \]
    Now, a feasible dual solution is given by:
    \[
        \alpha = \log N \quad\quad
        \beta_\ell = \begin{cases}
                    \frac{\ell+1}{2^\ell} & \text{if } \ell > \log N -1\\
                    0 & \text{if } \ell \le \log N-1.
                    \end{cases}
    \]
    The dual objective for this solution is $O(\log N) = O(\log (\min\{k, n_{\max}\}))$. Therefore, $\hsub(\cD, k) = \ent(\cD_S) \le O(\log (\min \{k, n_{\max} \})) = O(\log (\min \{k, \max_\ell |S_\ell| \}))$.
\end{proof}

\section{Missing Proofs from \Cref{known distribution}}\label{sec:topk_proofs}
We first define our true dual vector $y$. For any time $t$, let $G(t) \subseteq[n]$ be the set of pages $i \not \in [k]$, for which the interval $I(i,\rho(i,t))$ has $|I(i,\rho(i,t))| \leq 3 / p_i$. Let $\sigma^t_1, \ldots, \sigma^t_{m^t}$ be the pages of $G(t)$ in decreasing order of $p_i$. Then define $S^t_j = [k] \cup \{\sigma^t_1, \ldots, \sigma^t_j\}$, and 
\[y_{S}^t = \begin{cases}
    p_{\sigma^t_j} - p_{\sigma^t_{j+1}} & \text{if } S = S^t_j \text{ and } j < m^t. \\
    p_{\sigma^t_{m^t}} & \text{if } S = S^t_j \text{ and } j = m^t. \\
    0 & \text{otherwise}.
\end{cases}\]
In words, we define dual variables only for prefixes of the pages in decreasing order of probability (after removing pages whose intervals exceed their expected length by a constant), and include the $k$ pages in $\bH$ in all these prefixes.

We argue that this dual is approximately feasible with constant probability, and that its objective value is sufficiently large. We start with the first claim. Let $n_{\max} := \max_\ell |S_\ell|$ in \Cref{lem:dominate} and let $\eta := \min \{n_{\max}, k\}$. Recall that by \Cref{lem:dominate}, we have that $\hsub = \Theta(\log \eta)$.

\begin{lemma}
    \label{lem:dual_feas}
    Fix a time step $t$. With probability at least $3/4$, for all $i \in [n]$, the solution $y$ satisfies the dual constraints associated with $(i,\rho(i,t))$ up to a factor of $3 \ln (2\eta)$.
\end{lemma}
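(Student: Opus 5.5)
Fix a time $t$ and a page $i$, and let $I = I(i,\rho(i,t))$ be the interval containing $t$. The left side of the dual constraint for $(i,\rho(i,t))$ is $\sum_{t'\in I}\sum_{S\ni i} y^{t'}_S$. The plan is to compute the inner sum exactly at each $t'$, then control the length $|I|$. By construction, for $t'$ fixed the sets $S^{t'}_j$ are nested prefixes. If $i\in G(t')$, say $i=\sigma^{t'}_{j_0}$, the sum telescopes to $p_i$. If $i\in[k]$, it telescopes to $p_{\sigma^{t'}_1}\le p_{k+1}\le p_i$. If $i\notin [k]\cup G(t')$, the sum is $0$. So in all cases $\sum_{S\ni i}y^{t'}_S\le p_i$ deterministically, and the left side of the constraint is at most $p_i\cdot|I|$.

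\emph{Tail pages.} For $i\notin[k]$ the contribution is nonzero only at times $t'\in I$ with $i\in G(t')$. But $G(t')$ membership depends only on the interval of $i$ containing $t'$, which is $I$ itself. So either $|I|\le 3/p_i$ and the constraint holds with value at most $3$, or the contribution is $0$. Hence tail constraints are always satisfied up to factor $3\le 3\ln(2\eta)$, deterministically. This is exactly the role of the zeroing out.

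\emph{Head pages.} Here no zeroing applies to $i$ itself, so we need $p_{k+1}|I|$ small, but only with probability $3/4$ over the realization, simultaneously for all $i\in[k]$. Using $\sum_{S\ni i}y^{t'}_S\le p_{k+1}$, the left side is at most $p_{k+1}|I(i,\rho(i,t))|$. The interval of $i$ containing $t$ splits into a backward part and a forward part, each (stochastically) geometric with parameter $p_i$. So $\Pr[p_{k+1}|I|> 3\ln(2\eta)]\le 2(1-p_i)^{1.5\ln(2\eta)/p_{k+1}}\le 2(2\eta)^{-1.5p_i/p_{k+1}}$. We then union bound over $i\in[k]$, grouping by level.

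\begin{itemize}
\item Pages with $p_i\in(2^{m-1}p_{k+1},2^m p_{k+1}]$ number at most about $2\eta$, using the level structure behind \Cref{lem:dominate} and the fact that at most $k$ pages are involved. Each fails with probability at most $2(2\eta)^{-1.5\cdot 2^{m-1}}$.
\item Summing over $m\ge 0$ gives a geometric-type series dominated by the first terms, of order $\eta\cdot\eta^{-1.5}$ plus smaller terms.
\item This is small once constants are tuned, possibly by using $|I|$'s two sides more carefully or replacing $2$ by a better constant, to get failure probability at most $1/4$.
\end{itemize}

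The main obstacle is this union bound for pages near $p_{k+1}$. At level $m=0$ the per-page bound is only about $\eta^{-3/4}$ per side, times up to $2\eta$ pages. I would first try to sharpen the bound to the exact two-sided tail, $\Pr[|I|>x]\approx (1+p_ix)e^{-p_ix}$, and count carefully: at most $\eta$ pages per level, since $\min$ with $k$ applies. If the constants still fail, I would absorb the slack by slightly enlarging the $\ln(2\eta)$ factor's constant, since only $O(\log\eta)=O(\hsub)$ matters downstream.
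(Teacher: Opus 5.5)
\textbf{Verdict: genuine gap.} You never close the head-page union bound, so you have not proved the statement with factor $3\ln(2\eta)$ and probability $3/4$. However, those constants are in fact unattainable in general, and the paper reaches them only through an error you avoided.

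\textbf{Where you match the paper.} Your outline follows the paper's route:
\begin{itemize}
\item the telescoping computation of the per-step load: exactly $p_i$ for $i\in G(t')$, at most $p_{k+1}\le p_k$ for $i\in[k]$, and zero otherwise;
\item the observation that a tail page's membership in $G(\cdot)$ is constant on its own interval, so tail constraints have value at most $3$ deterministically;
\item for head pages, a tail bound on $|I(i,\rho(i,t))|$ followed by a union bound over factor-$2$ buckets of $p_i/p_{k+1}$, with at most $2\eta$ pages per bucket. It is $2\eta$, not $\eta$ as you suggest at the end, because such a window can straddle two of the dyadic levels of \Cref{lem:dominate}.
\end{itemize}

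\textbf{The gap.} With your per-page bound $2(2\eta)^{-1.5p_i/p_{k+1}}$, the first bucket alone contributes up to $4(2\eta)^{-1/2}$. That exceeds $1/4$ unless $\eta\ge 128$, and ``once constants are tuned'' is not an argument. So as written, the proposal does not prove the stated bound.

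\textbf{Why the paper's argument does not rescue the constants.} The paper bounds $\Pr[|I(i,\rho(i,t))|\ge x]\le(1-p_i)^x$, treating the interval \emph{containing the fixed time} $t$ as Geometric$(p_i)$. As you noticed, this interval is size-biased: it is a backward part on $\{0,1,\dots\}$ plus a forward part on $\{1,2,\dots\}$. This gives exactly $\Pr[|I|\ge x]=(1+p_ix)(1-p_i)^x$.

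\textbf{The stated lemma is false at $\eta=1$.} Take $k=1$ and the uniform distribution on $n\to\infty$ pages.
\begin{itemize}
\item The head page's load is essentially $|I|/n$. This exceeds $3\ln 2$ with probability tending to $(1+3\ln 2)/8\approx 0.385>1/4$.
\item Worse, tail loads can reach $3>3\ln 2$. Your step $3\le 3\ln(2\eta)$, which is also implicit in the paper, needs $\eta\ge 2$. Here some tail page has load in $(3\ln 2,3]$ with probability tending to $1$.
\end{itemize}

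\textbf{The repair.} Your fallback of enlarging the constant is the right fix, but you should carry it out. Replace $3\ln(2\eta)$ by, say, $6\ln(2\eta)$.
\begin{itemize}
\item Tail loads of at most $3$ are then covered for all $\eta\ge 1$, since $6\ln 2>3$.
\item Using the exact tail and the monotonicity of $(1+u)e^{-u}$, each head page in bucket $j$ (that is, $p_i/p_{k+1}\ge 2^j$) fails with probability at most $(1+6\cdot 2^j\ln(2\eta))(2\eta)^{-6\cdot 2^j}$.
\item With at most $2\eta$ pages per bucket, the union bound is about $0.17$ at $\eta=1$ and decreases for larger $\eta$.
\end{itemize}
This only changes constants in \Cref{lem:opt_lb}. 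That proof needs just an $O(\log\eta)$ scaling and a per-step failure probability below $2/3$.
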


\begin{proof}
    Since $t$ is fixed, for convenience we define $I(i) := I(i,\rho(i,t))$.

    For pages $i \in [n] \setminus ([k] \cup G(t))$, for all $t' \in I(i)$ and $S$ such that $i \in S$, note that $y_S^{t'} = 0$. So, the corresponding dual constraint is satisfied vacuously.

    Next, consider constraints associated with pages $i \in G(t)$. For all $t' \in I(i)$, if $j^*(t')$ is the index of $i$ in $\sigma^{t'}_1, \ldots, \sigma^{t'}_{m^{t'}}$, then by design
    \begin{align*}\sum_{S: \, i \in S} y_S^{t'} &= \sum_{j=j^*(t)}^{m^{t'}-1} \left(p_{\sigma^{t'}_j} - p_{\sigma^{t'}_{j+1}}\right) + p_{\sigma^{t'}_{m^{t'}}} = p_{\sigma^{t'}_{j^*(t')}} = p_i,
    \intertext{which means}
        \sum_{t' \in I(i)} \sum_{S :\, i \in S} y_S^{t'} &= p_i \cdot |I(i)| \leq p_i \cdot \frac{3}{p_i} = 3.
    \end{align*}
    It remains to handle constraints associated with $i \in [k]$. This time, for all $t' \in I(i)$, we have
    \begin{align*}\sum_{S: \, i \in S} y_S^{t'} &= \sum_{j=1}^{m^{t'}-1} \left(p_{\sigma^{t'}_j} - p_{\sigma^{t'}_{j+1}}\right) +  p_{\sigma^{t'}_{m^{t'}}} = p_{\sigma^{t'}_{1}} \leq p_k,
    \end{align*}
    and hence
    \[
    \prob{\sum_{t' \in I(i)} \sum_{S: \, i \in S} y_S^{t'} 
 \geq 3 \ln (2\eta)} \leq \prob{|I(i)| \cdot p_k \geq 3 \ln (2\eta)} 
 \leq \prob{|I(i)| 
 \geq E[|I(i)|] \cdot 3 \ln (2\eta) \cdot \frac{p_{i}}{p_k}}.
\]
 Since $|I(i,j)|$ is a geometric random variable with parameter $p_i$, we get that
 \[
 \prob{\sum_{t' \in I(i)} \sum_{S: \, i \in S} y_S^{t'} \geq 3 \ln(2\eta)}
 \leq (1-p_i)^{E[|I(i)|] \cdot 3 \ln (2\eta) \cdot \frac{p_{i}}{p_k}} 
 \leq e^{-3 \ln (2\eta) \cdot \frac{p_{i}}{p_k}}
 \leq (2\eta)^{-3\cdot \frac{p_{i}}{p_k}}.
\]

Let $L_{\ell}$ be the pages $i$ in $[k]$ such that $p_k \cdot 2^{\ell+1} > p_i \ge p_k \cdot 2^\ell$. Note that $|L_\ell| \le 2 n_{\max}$ since it contains pages from at most two sets $S_{\ell'}, S_{\ell'+1}$ for some $\ell'$. Moreover, $|L_\ell| \le k$. Hence, $|L_\ell| \le 2\eta$. 

Now, let $\E$ be the event that there exists an $i \in [k]$ for which $\sum_{t' \in I(i)} \sum_{S: \, i \in S} y_S^{t'} 
 \geq 3 \ln (2\eta)$, i.e., the dual constraint for page $i$ is violated beyond a factor of $3\ln (2\eta)$. By a union bound, 
\begin{align*}
    &\prob{\E}
    \leq \sum_{\ell \ge 0} \sum_{\substack{i \in [k]: \\ i \in L_\ell}} 
    (2\eta)^{-3 \cdot \frac{p_i}{p_k}} 
    \leq \sum_{\ell \ge 0} \sum_{\substack{i \in [k]: \\ i \in L_\ell}} 
    (2\eta)^{-3 \cdot 2^\ell} 
    \leq \sum_{\ell \ge 0} 2\eta\cdot (2\eta)^{-3 \cdot 2^\ell} 
    \leq (2\eta)^{-2}
    \leq \frac{1}{4}.\qedhere
    \end{align*}
\end{proof}

\begin{lemma}
    \label{lem:dual_profit}
    The expected objective of the dual $y$ at time $t$ is at least $2\tailp / 3$.
\end{lemma}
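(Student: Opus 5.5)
The plan is to compute the objective at time $t$ exactly in terms of the realized set $G(t)$ and then take expectations term by term. For a fixed realization, the nonzero dual variables at time $t$ sit on the nested prefixes $S^t_1 \subset \cdots \subset S^t_{m^t}$ with $|S^t_j| = k + j$. Hence the objective telescopes:
\[
\sum_{S} (|S|-k)\, y^t_S = \sum_{j=1}^{m^t-1} j\left(p_{\sigma^t_j} - p_{\sigma^t_{j+1}}\right) + m^t\, p_{\sigma^t_{m^t}} = \sum_{j=1}^{m^t} p_{\sigma^t_j} = \sum_{i \in G(t)} p_i .
\]
This is just the per-step computation already done for the stationary dual, now carried out on the random subset $G(t)\subseteq \bT$. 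It follows that the expected objective equals $\sum_{i\in\bT} p_i \Pr[i \in G(t)]$ by linearity. It therefore suffices to show $\Pr[i\in G(t)] \ge 2/3$ for every tail page $i$.

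Next, bound $\Pr[|I(i,\rho(i,t))| > 3/p_i]$ for a fixed tail page $i$. The interval containing $t$ is length-biased, so we cannot simply use the marginal law of a geometric interval. Instead, write $I(i,\rho(i,t))$ as the gap between the last request to $i$ strictly before $t$ (or, for the current interval, the request at or before $t$, following the convention of $\rho$) and the next request to $i$ after $t$. Its length is at most $A + B$, where $A$ is the number of steps looking backward from $t$ until a request to $i$ and $B$ is the number of steps looking forward. Since requests are i.i.d., $A$ and $B$ are each dominated by independent geometric variables with parameter $p_i$, with truncation at the sequence boundaries only helping. So $\EE[|I|] \le 2/p_i$, and Markov's inequality gives only $\Pr[|I| > 3/p_i] \le 2/3$, which is too weak.

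The main obstacle is getting the constant down to $1/3$. A sharper tail computation is needed: $A + B$ is a sum of two geometrics, i.e.\ negative-binomial-like, and
\[
\Pr[A+B > 3/p_i] \approx e^{-3}(1+3) = 4e^{-3} \approx 0.2 ,
\]
which is below $1/3$. For the discrete version, I would use $\Pr[A > a] \le (1-p_i)^a \le e^{-p_i a}$ and convolve, obtaining $\Pr[A+B > s] \le (1+p_i s)e^{-p_i s}$ up to an additive $O(p_i)$ correction. This gives at most $4e^{-3} + O(p_i)$. Because $i\in\bT$ and $\tailp \le 1/2$, we can handle large $p_i$ separately or note that $p_i \le 1/(k+1)$. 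This yields $\Pr[i\in G(t)] \ge 2/3$ and hence an expected objective of at least $\tfrac{2}{3}\tailp$.

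If the endpoint conventions of $I$ (which excludes both request times) cause off-by-one issues, they only shorten $|I|$, so they go in the favorable direction.
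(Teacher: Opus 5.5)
Your proposal is correct, and its skeleton is the paper's: telescope the time-$t$ objective to $\sum_{i\in G(t)}p_i$, apply linearity, and show $\Pr[i\in G(t)]\ge 2/3$ for every tail page. You differ from the paper only in that last estimate, and your version is the more careful one.

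The paper just says ``by a Markov bound.'' That implicitly treats $|I(i,\rho(i,t))|$ as a geometric variable with mean $1/p_i$. Your length-bias observation is right: for $t$ away from the ends of the sequence, $\EE|I(i,\rho(i,t))| = 2(1-p_i)/p_i$. So Markov only gives $\Pr[i\in G(t)]\ge 1-2(1-p_i)/3\approx 1/3$. That weaker bound would still make \Cref{lem:opt_lb} go through with worse constants, since $\tailp/3-\tailp/4>0$, but it does not give the stated $2/3$. Your backward/forward geometric decomposition with a negative-binomial tail is what actually proves the lemma as stated.

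Two small tightenings.

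First, you can replace the hand-waved ``$O(p_i)$ correction'' with the exact tail. If $r_t\neq i$, then $|I|=A+B-1$ with $A,B$ i.i.d.\ geometric on $\{1,2,\dots\}$. Let $s=\lfloor 3/p_i\rfloor+1$. Then
\[
\Pr\bigl[|I|>3/p_i\bigr]=\Pr[A+B>s]=(1-p_i)^{s-1}\bigl(1+(s-1)p_i\bigr)\le (1+u)e^{-u},\qquad u=p_i\lfloor 3/p_i\rfloor> 3-p_i .
\]
For $p_i\le 1/2$ this is at most $3.5e^{-2.5}<0.29$. If $r_t=i$, the interval starts at $t+1$ and the bound is $(1-p_i)^{3/p_i}\le e^{-3}$. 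Truncation at the sequence ends only shortens the interval, as you say.

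Second, $p_i\le 1/2$ holds for every tail page simply because $i\ge k+1\ge 2$. You should not invoke $\tailp\le 1/2$: that is an assumption made only in the LRU section and is not in force here.
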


\begin{proof}
    By construction, the dual objective at time $t$ is 
    \begin{align*}\sum_{S} (|S| - k) \cdot y_S^t &= m^t \cdot p_{\sigma_{m^t}^t} + \sum_{j=1}^{m^t-1} j \cdot (p_{\sigma_{j}^t}  - p_{\sigma_{j+1}^t}) = \sum_{i \in G(t)} p_{i}.\end{align*}

    By a Markov bound, every page $i \in [n] \setminus [k]$ appears in $G(t)$ with probability at least $2/3$, hence the claim.
\end{proof}

Using these, we can conclude the proof of \Cref{lem:opt_lb}, which in turn implies \Cref{thm:ratio} by weak duality.

\begin{proof}[Proof of \Cref{lem:opt_lb}]
    From \Cref{lem:dual_feas} and \Cref{lem:dual_profit}, there exists a dual solution $y$ such that:
    \begin{enumerate}[(a)]
        \item At every time $t$, with probability $\geq 3/4$, solution $y$ simultaneously satisfies all constraints $(i,j)$ such that $t \in I(i,j)$ up to a multiplicative factor of $3 \ln(2 \eta)$.
        \item The expected dual objective value of this solution $y$ is $2\tailp/3$.
    \end{enumerate}

    Define a modified dual solution $\widehat y$ as follows. For every time $t$, if there exists a dual constraint $(i,j)$ such that $t \in I(i,j)$ that is violated by a multiplicative factor greater than $3 \ln(2\eta)$ (call this event $\E(t)$), then set $\widehat y_S^t = 0$ for all $S \subseteq [n]$. Otherwise, set $\widehat y_S^t = y_S^t / (3\ln(2\eta))$.

    The new dual is always feasible. Furthermore, when $y^t_S > 0$, we have $\widehat y^t_S = y^t_S / (3\ln(2\eta))$ with probability at least $3/4$. Hence, by the linearity of expectation, the dual objective of $\widehat y$ is
      \begin{align}
        \sum_{S} (|S| - k) \cdot \expect*{\widehat y^t_S} &\geq \frac{1}{3 \ln(2\eta)} \sum_S (|S| - k) \cdot \left(\expect*{y_S^t} - \expect*{y_S^t |  \E(t)} \cdot \prob{\E(t)}\right) \nonumber\\
        &\geq
        \frac{2\tailp/3 - \tailp/4}{3 \ln (2\eta)} \label{eq:ub_on_yts}
         \\
         &\geq \frac{5\tailp / 12}{3 \ln (2\cdot \min \{k, n_{\max} \})} \nonumber,
    \end{align}
    where in \eqref{eq:ub_on_yts} we used $\sum_{S} (|S| - k) \cdot y_S^t \leq \tailp$ with probability $1$, together with \Cref{lem:dual_feas,lem:dual_profit}.
    
    The characterization of $\hsub = \Theta(\log \min \{k, n_{\max}\})$ from \Cref{lem:dominate} concludes the proof.
\end{proof}

\section{Missing Proofs from \Cref{sec:unknown2}}\label{sec:lru-proofs} 
\subsection{Proof of \Cref{lem:prop}} \label{Lemma 4.4}

\begin{proof}[Proof of \Cref{lem:prop}]
We prove the three parts separately.

{\em (i)} It follows from the fact that $0\le X_\ell \le n_\ell$, and the fact that $\EE[X_\ell] = \sum_{i\in L_\ell} q_i$. 

    {\em (ii)}
    This fact follows from standard coupon collector bounds, since each request is for a page in $L_\ell$ with probability $\ge \nf{1}{2^{\ell+1}}$ and there are $n_\ell$ distinct pages in $L_\ell$. 

    {\em (iii)} 
    Since $\ell$ is a non-dominant level, even after conditioning on the page requests being from level $\ell$, the probability that any request is for a page in $\TT$ is at most $2\tailp$. We use this fact in the rest of this proof.
    
    For $t = 1$, we have
\begin{align*}
    \Pr[Z_\ell \ge 1] &\le \EE[Z_\ell] = \sum_{L \ge 1} \EE[Z_\ell \mid Y_\ell = L]\cdot \Pr[Y_\ell = L] \le  2\tailp \sum_{L \ge 1} L \cdot  \Pr[Y_\ell = L] \\
    &= 2 \tailp \cdot \EE[L] \le \tailp \cdot 2^{\ell+1}\cdot \hsub.
\end{align*}

For $t\ge 2$, let $\mu := \EE[Y_\ell] \le 2^{\ell+1}\cdot \hsub$. Now, the value of $Y_\ell$ can be partitioned into intervals $(e(\gamma-1)\mu, e\gamma\mu]$ for integers $\gamma \ge 1$. Using this partitioning, we get 

\begin{align*}
    \Pr[Z_\ell \ge t] 
    &= \sum_{\gamma \ge 1} \Pr[Z_\ell \ge t \mid e(\gamma -1)\mu < Y_\ell \le e \gamma \mu]\cdot \Pr[e(\gamma -1)\mu < Y_\ell \le e \gamma \mu]  \\
    &= \sum_{\gamma \ge 1} \Pr[Z_\ell \ge t \mid Y_\ell \le e \gamma \mu]\cdot \Pr[Y_\ell > e(\gamma -1)\mu].
\end{align*}
Using the Binomial distribution, we get $\Pr[Z_\ell \ge t \mid Y_\ell \le e \gamma \mu]
    \le {e\gamma\mu \choose t} (2\tailp)^t \le \left(\frac{e^2\gamma\mu}{t}\right)^t (2\tailp)^t$,
where the last step used Stirling's approximation.

We now bound the second term, $\Pr[Y_\ell > e(\gamma -1)\mu]$. Observe that for $Y_\ell > e(\gamma-1)\mu$ to hold, each of the first $\gamma-1$ blocks of $e\mu$ requests must fail to request all the pages from level $\ell$. By Markov's inequality, since $\mu = \EE[Y_\ell]$, each block of $e\mu$ requests fails to request all pages from level $\ell$ with probability at most $1/e$, and since these failure events are independent, $\prob{Y_\ell > e(\gamma-1)\mu} \leq e^{-(\gamma-1)}$. Therefore, for $\gamma \ge 1$, we have $\Pr[Y_\ell > e(\gamma -1)\mu] \le e^{-(\gamma-1)}$.
Putting these together, we get
\begin{align}
    \Pr[Z_\ell \ge t]
    &\le \sum_{\gamma \ge 1} \left(\frac{e^2\gamma\mu}{t}\right)^t (2\tailp)^t \cdot e^{-(\gamma -1)}\nonumber\\
    &\le \left(\frac{e^2\mu}{t}\right)^t \cdot (2\tailp)^t \cdot \sum_{\gamma = 1}^{\infty} \frac{\gamma^t}{e^{\gamma-1}} 
    \leq \enspace \left(\frac{e^2\mu}{t}\right)^t \cdot (2\tailp)^t \cdot \left(\sum_{\gamma = 1}^{2t \ln(t) -1} \frac{\gamma^t}{e^{\gamma - 1}} + \sum_{\gamma = 2t \ln t}^{\infty} \frac{\gamma^t}{e^{\gamma-1}}\right) \nonumber\\
    &\le \left(\frac{e^2\mu}{t}\right)^t \cdot (2\tailp)^t \cdot \left((2t \ln (t) -1 ) \cdot \frac{t^t}{e^{t-1}} + 1\right) \label{eq:gamma_bound} \\
    &\le 
    \left(2e^2\mu\right)^t \cdot \tailp^t.  \tag{since $t\ge 2$}
\end{align}
Step \eqref{eq:gamma_bound} follows since the expression $\gamma^t / e^{\gamma-1}$ is maximized at $\gamma = t$, so the first partial sum can be bounded by its length times the maximum $t^t/e^{t-1}$. The suffix sum converges to less than one.
\end{proof}

\subsection{Proof of \Cref{cl:single-level-h1}} \label{Claim 4.6}

\begin{proof}[Proof of \Cref{cl:single-level-h1}]
    
    Recall that by \eqref{eq:indicator}, we have that for any level $\ell$,
$\sum_{i\in L_\ell} p_i q_i \le \frac{1}{2^\ell} \cdot \EE[X_\ell]$,
where $X_\ell$ denotes the number of pages in $L_\ell$ that do not appear among the first $k$ distinct pages. We proceed to bound $\EE[X_\ell]$. By definition, $X_\ell \le Z_\ell$. Therefore, we can write 
\begin{align}
    \EE[X_\ell] 
    &= \sum_{t\ge 1} t\cdot \Pr[X_\ell = t]\nonumber = \Pr[X_\ell = 1] + \sum_{t\ge 2} t\cdot \Pr[X_\ell = t] = \Pr[X_\ell = 1] + \sum_{t\ge 2} \Pr[X_\ell \ge t]
    \nonumber\\
    &\le \Pr[X_\ell = 1 \mid Z_\ell = 1]\cdot \Pr[Z_\ell = 1] + \sum_{t\ge 2} \Pr[Z_\ell \ge t]\nonumber\\
    &\le \Pr[X_\ell = 1 \mid Z_\ell = 1]\cdot \Pr[Z_\ell \ge 1] + \sum_{t\ge 2} \Pr[Z_\ell \ge t].\label{eq:xell}
\end{align}

We can bound factors of \eqref{eq:xell} of the form $\Pr[Z_\ell \ge t]$ using \eqref{eq:tail-alt}. We get
\begin{align*}
    \sum_{t\ge 2} \Pr[Z_\ell \ge t]
    &\le \sum_{t\ge 2} \left(2e^2 \cdot \tailp \cdot 2^{\ell+1}\cdot \hsub\right)^t \le 2 \left(2e^2 \cdot \tailp \cdot 2^{\ell+1}\cdot \hsub\right)^2,
\end{align*}
where in the last step, we used 
\begin{align*}
    2e^2 \cdot \tailp \cdot 2^{\ell+1}\cdot \hsub \le 2e^2 \cdot \tailp \cdot 2^{\ell_1}\cdot \hsub \le \frac{2e^2 \cdot \tailp \cdot \hsub}{4e^2\hsub\cdot \tailp} = \nf 12.
\end{align*}    
Thus, $\sum_{t\ge 2} \Pr[Z_\ell \ge t]
    \le 2 \left(2e^2 \cdot \tailp \cdot 2^{\ell+1}\cdot \hsub\right)^2
    \le \frac{\hsub \cdot \tailp}{2^{\ell_1-2\ell-6}}$, 
since $\tailp \leq 1/(4e^2\hsub\cdot 2^{\ell_1})$.

It remains to bound the first part of \eqref{eq:xell}. From \eqref{eq:tail-alt-1}, we have that $\Pr[Z_\ell \ge 1] \le 2\tailp\cdot 2^{\ell+1}\cdot \hsub$. We are left to bound $\Pr[X_\ell = 1 \mid Z_\ell = 1]$. Since there is only one request from $\bT$ before all pages in level $\ell$ are requested, a necessary event for $X_\ell$ to be $1$ is that the last page requested in $\bH$ must be from level $\ell$. In particular, consider page $r_1$ which is in level $\ell_1$ and therefore not in level $\ell < \ell_1$. Let $\E$ denote the event that page $r_1$ is requested before all pages in level $\ell$ are requested. Then,
\begin{align}
    \Pr[X_\ell = 1 \mid Z_\ell = 1] 
    &\le \Pr[\E \mid Z_\ell = 1] = \sum_{t\ge 1} \Pr[\E \mid Z_\ell = 1, Y_\ell = t]\cdot \Pr[Y_\ell = t].\label{eq:cond}
\end{align}
Consider any fixed number of requests, say $t$. The conditional probability of any page in the head $\bH$ increases when we decrease the number of requests for pages in the tail $\bT$ among these $t$ requests. Therefore, $\Pr[\E \mid Z_\ell = 1, Y_\ell = t] 
    \le \Pr[\E \mid Z_\ell = 0, Y_\ell = t]$.
    
Now, let $I_t(\E)$ denote the number of requests for page $r_1$ among the first $t$ requests. Given that $Z_\ell = 0$, the probability that any request is for page $r_1$ is $2^{-\ell_1}/(1-\tailp)$. Therefore, 
\begin{align*}
    \Pr[\E \mid Z_\ell = 1, Y_\ell = t]
    &\le \EE[I_t(\E) \mid Z_\ell = 0] = \frac{1/2^{\ell_1}}{1-\tailp}\cdot t \le \frac{t}{2^{\ell_1-1}}, 
\end{align*}
since $\tailp < \nf 12$.
Substituting in \eqref{eq:cond}, we get
\begin{align*}
    \Pr[X_\ell = 1 \mid Z_\ell = 1] &= \sum_{t\ge 1} \Pr[\E \mid Z_\ell = 1, Y_\ell = t]\cdot \Pr[Y_\ell = t] \le \sum_{t\ge 1} \frac{t}{2^{\ell_1-1}} \cdot \Pr[Y_\ell = t] \\
    &\le \frac{\EE[Y_\ell]}{2^{\ell_1-1}} \le \frac{\hsub\cdot 2^{\ell+1}}{2^{\ell_1-1}} = \frac{4 \hsub}{2^{\ell_1-\ell}},
\end{align*}
where we again used \eqref{eq:Yell_bound} in the last inequality. Therefore,
\begin{align*}
    \Pr[X_\ell = 1 \mid Z_\ell = 1]\cdot \Pr[Z_\ell \ge 1]
    \le \frac{4 \hsub}{2^{\ell_1-\ell}} \cdot 2\tailp\cdot 2^{\ell+1}\cdot \hsub = \frac{\hsub^2\cdot \tailp}{2^{\ell_1-2\ell-4}}.
\end{align*}
This bounds the first expression in \eqref{eq:xell}.

Substituting our bounds for both expressions in \eqref{eq:xell}, we get
 \begin{align*}
    \EE[X_\ell] \le \frac{\hsub^2\cdot \tailp}{2^{\ell_1-2\ell-4}} + \frac{\hsub \cdot \tailp}{2^{\ell_1-2\ell-6}} \le \frac{80 \hsub^2\cdot \tailp}{2^{\ell_1-2\ell}}.
\end{align*}
Therefore, 
\begin{align*}
    \sum_{i\in L_\ell} p_i q_i \le \frac{1}{2^\ell} \cdot \EE[X_\ell] \le \frac{1}{2^\ell}  \cdot \frac{80 \hsub^2\cdot \tailp}{2^{\ell_1-2\ell}} = \frac{80 \hsub^2\cdot \tailp}{2^{\ell_1-\ell}}.
& \qedhere
\end{align*}
\end{proof}

\subsection{Proof of \Cref{lem:h1}} \label{Lemma 4.7}

\begin{proof}[Proof of \Cref{lem:h1}]
    We use \Cref{cl:single-level} to bound both a dominant level, if it exists, and the last level in $\bH_1$, i.e., $\ell_1$. For both these levels $\ell = \ell^*$ and $\ell = \ell_1$, we get $
        \sum_{i\in L_\ell} p_i q_i \le 4\hsub\cdot \tailp$,
    by \Cref{cl:single-level}.

    For non-dominant levels $\ell < \ell_1$, we use \Cref{cl:single-level-h1}. Thus,
    $\sum_{\ell < \ell_1} \sum_{i\in L_\ell} p_i q_i
        \le \sum_{\ell_1-\ell \ge 1} \frac{80 \hsub^2\cdot \tailp}{2^{\ell_1-\ell}} 
        \le 80 \hsub^2\cdot \tailp$.
    Summing the bounds, we get $
        \sum_{i\in \bH_1} p_i q_i
        \le 4\hsub\cdot \tailp + 80 \hsub^2\cdot \tailp
        \le 88 \hsub^2 \cdot\tailp$.
\end{proof}

\subsection{Proof of \Cref{lem:h3}} \label{Lemma 4.9}

\begin{proof}[Proof of \Cref{lem:h3}]
    Let $\ell_3$ be the first level $\ell$ that contains pages from $\bH_3$, i.e., all pages in $\bH_3$ are in levels $\ell_3$ and above. Then,
    $\sum_{i\in \bH_3} p_i q_i
        \le \sum_{\ell \ge \ell_3} \frac{1}{2^\ell} \sum_{i\in L_\ell} q_i
        \le \sum_{\ell \ge \ell_3} \frac{1}{2^\ell} \cdot n_\ell$,
    since $q_i \le 1$ by virtue of being a probability. 
    
    By \Cref{lem:dominate}, for any level $\ell$, the number pages $n_\ell$ is at most $2^{\hsub}$. 
    Therefore,
    \[
        \sum_{i\in \bH_3} p_i q_i 
        \le \sum_{\ell \ge \ell_3} \frac{1}{2^\ell} \cdot n_\ell
        \le 2^{\hsub} \cdot \sum_{\ell \ge \ell_3} \frac{1}{2^\ell}
        \le 2^{\hsub} \cdot \frac{1}{2^{\ell_3-1}}.
    \]

    Note that $p_i \le \tailp / 2^{\hsub}$ for any page $i\in \bH_3$. If page $i$ is in level $\ell$, then $p_i \ge \nf {1}{2^{\ell+1}}$. Combining the two, we get 
    $\frac{1}{2^{\ell+1}} \le p_i \le \frac{\tailp}{2^{\hsub}}$,  and, in particular, 
        $\frac{1}{2^{\ell_3-1}} \le \frac{4\tailp}{2^{\hsub}}.$ Putting these together, we get
    $\sum_{i\in \bH_3} p_i q_i
        \le 2^{\hsub} \cdot \frac{1}{2^{\ell_3-1}}
        \le 4\tailp$.
\end{proof}

\section{Missing Proofs from \Cref{sec:lfu}}\label{sec:lfu-proofs} 
\subsection{Proof of \Cref{cl:rw-single}} \label{Claim 5.1}

\begin{proof}[Proof of \Cref{cl:rw-single}]
    Fix $i \in \bH$ and $j$ such that $p_j \leq p_i / 2$. 
    Consider the following random walk on the line that starts at $0$, moves $+1$ with probability $p_i/(p_i + p_j) \geq 1-1/(2^h+1)$, and moves $-1$ with probability $p_j/(p_i + p_j) \leq 1/(2^h+1)$.  After $\gamma$ time steps, the position is at least $\gamma/2$ with probability at least $1 - 2^{-h \gamma/4}$.  Indeed, observe that the probability the walk is at position $<\gamma/2$ after $\gamma$ time steps is the probability that $\gamma$ i.i.d. Bernoullis $X_1, \ldots, X_\gamma$ with probability $q = 1/(2^h + 1)$ sum to $\gamma/4$. The sum $X = \sum_i X_i$ has mean $\mu = q\gamma$ and if we set $\delta = \gamma/(4\mu) -1$, by a Chernoff bound,  
    \begin{align*}
        \prob{\sum_i X_i \geq \frac{\gamma}{4}} = \prob{\sum_i X_i \geq (1+\delta)\mu} \leq \frac{e^{(1+\delta)\mu}}{(1+\delta)^{(1+\delta)\mu}} \leq e^{\gamma/4}(4q)^{-\gamma/4} \leq 2^{-h\gamma/4}.
    \end{align*}

    A random walk with bias $1 - 1/(2^h + 1)$ at position $z$ returns to $0$ with probability exactly $2^{-hz}$ (see, e.g.,~\cite[page 6]{steele2001stochastic}). 
    
    Now, it is easy to see that the event $\E_{ij}$ has the same probability as that of this random walk returning to $0$ after $\gamma$ time steps. Therefore, if $\E_{ij}$ holds either (1) the random walk was at position $\leq \gamma/4$ after $\gamma$ steps, or (2) starting at position $\gamma/4$, the walk returned to $0$ after a finite amount of time. By a union bound, $\prob{\E_{ij}} \leq 2^{-h\gamma / 8}$.
\end{proof}

\subsection{Proof of \Cref{cl:Ri_gamma}} \label{Claim 5.3}

\begin{proof}[Proof of \Cref{cl:Ri_gamma}]
    Recall that $\gamma = 16 \hsub = 16\log \min(k, n_{\max})$. We do a case analysis.

    \textbf{Case 1.} Suppose that $k \leq n_{\max}$, so $\gamma = 16 \log k$. Recalling that the pages of $\bT$ are ordered by decreasing probability,  define $G_\ell := \{i \in \bT \mid i \geq 2k+1, \ i = \ell \mod k\}$. 
    Define $q := 2\cdot \sum_{i \in G'_0} p_i$, and let $I = \{i \in \bH \mid p_i \geq q\}$.

    Note that the total probability mass of $\bH \setminus I$ is at most $k \cdot q = O(\tailp)$. Also, for all $i \in I$ and all $\ell$, we have $p_i \geq 2 \sum_{j \in {G_\ell}} p_j$. By \Cref{cl:rw-set}, after $\gamma$ requests to $i$, the probability any page in $G_\ell$ ever has count exceeding that of $i$ is $2^{-\gamma} = 1/\poly(k)$. By the same argument, the probability any page in $\{k+1, \ldots, 2k\}$ exceeds the count of $i$ after $\gamma$ requests to $i$ is also $1/\poly(k)$. By a union bound, no page in the tail exceeds the count of $i$ after $\gamma$ requests to $i$ with probability $1-\poly(k)$, and hence the expected number of requests to $i$ before this condition holds is at most $ 2 \gamma$.     
    
    \textbf{Case 2.} Suppose now that $n_{\max} \leq k$, so $\gamma = 16 \log n_{\max}$. Then let $L_\ell = \{j \in \bT \mid p_i \in [p_{k} 2^{\ell}, p_{k} 2^{\ell-1}) \}$. By definition $|L_\ell| \leq n_{\max}$. For every $i \in \bH$ and every $\ell \geq 1$, by \Cref{cl:rw-set} after $\gamma$ requests to $i$, a page $j$ in $L_\ell$ will have count exceeding $i$ with probability no more than $2^{-h\gamma / 8} = (n_{\max})^{-2 \ell}$. By a union bound, no page in $L_\ell$ has count exceeding $i$ with probability $(n_{\max}^{-\ell}) \leq 2^{-\ell - 1}$, and hence by another union bound, no page in $\bT$ with probability more than $p_i/2$ will have count exceeding $i$ with probability at least $1/2$. Once again, this means the expected number of requests to $i$ before this condition holds is $2 \gamma $.
\end{proof}

\section{Computing Subset Entropy}
\label{sec:subset-algo}

\newcommand{\Smax}{{S_{\max}}}
\newcommand{\pmax}{p_{\max}}
\newcommand{\pmin}{p_{\min}}

We recall the definition of $k$-subset entropy \eqref{eq:subsetH}. We show the following lemma, which immediately implies an efficient algorithm for computing $\hsub(\cD, k)$ (we call this the interval property of $\hsub(\cD, k)$):
\begin{lemma}[Interval Property of $\hsub(\cD, k)$]\label{lem:interval}
    Consider a discrete distribution $\cD: [n] \rightarrow [0, 1]$ such that $\cD(i) \ge \cD(j)$ for any $i \ge j$. Let $k \le n$ be any positive integer. Then, there is always a subset $S \subseteq [n], |S| \le k$ that maximizes $\ent(\cD_S)$ among subsets of $[n]$ of cardinality at most $k$ which is of the form $S = \{r, r+1, \ldots, t\}$.
\end{lemma}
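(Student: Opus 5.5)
The plan is an exchange argument. Among all maximizers $S$ of $\ent(\cD_S)$ with $|S|\le k$, pick one minimizing the potential $\Phi(S):=\max S-\min S-|S|+1$, the number of "holes" in its index range. Suppose $S$ is not an interval. Then there are indices $a<b<c$ with $a,c\in S$ and $b\notin S$, so $p_a\ge p_b\ge p_c$ in the paper's decreasing order. I will show that one of the swaps $S-a+b$ or $S-c+b$ has entropy at least $\ent(\cD_S)$. For a suitable choice of $a,c$ (e.g.\ $a=\min S$, $c=\max S$), such a swap keeps $|S|$ unchanged and strictly decreases $\Phi$, which contradicts the choice of $S$.

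\textbf{Single-coordinate unimodality.} Fix a "rest" set $R$ with $W:=\sum_{j\in R}p_j$ and $A:=\sum_{j\in R}p_j\log p_j$. Define
\[
f_R(x)=\log(x+W)-\frac{x\log x+A}{x+W},
\]
which is the entropy of the conditional distribution on $R$ together with one extra element of mass $x$. Differentiating, the $1/\ln 2$ terms cancel and
\[
f_R'(x)=\frac{A-W\log x}{(x+W)^2}.
\]
So $f_R$ is increasing for $\log x<\theta_R$ and decreasing for $\log x>\theta_R$, where $\theta_R:=A/W$ is the mass-weighted average of $\log p_j$ over $R$. Consequently, if $y<x$ and $f_R(y)<f_R(x)$, then $\log y<\theta_R$. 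Symmetrically, if $y>x$ and $f_R(y)<f_R(x)$, then $\log y>\theta_R$.

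\textbf{Contradiction step.} Suppose both swaps strictly decrease the entropy. For the first swap, apply the criterion with $R_a:=S\setminus\{a\}$, $x=p_a$, $y=p_b$; this gives $\log p_b<\theta_{R_a}$. For the second swap, apply it with $R_c:=S\setminus\{c\}$, $x=p_c$, $y=p_b$; this gives $\log p_b>\theta_{R_c}$. Together these yield $\theta_{R_c}<\theta_{R_a}$. Now $R_a=Q\cup\{c\}$ and $R_c=Q\cup\{a\}$ with $Q:=S\setminus\{a,c\}$, and $p_c\le p_a$. It therefore suffices to show $\theta_{Q\cup\{c\}}\le\theta_{Q\cup\{a\}}$, i.e.\ that the weighted log-mean $g(x)=(A_Q+x\log x)/(W_Q+x)$ is nondecreasing in $x$ on $[p_c,p_a]$, to get a contradiction.

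\textbf{Expected obstacle.} This monotonicity is the step I expect to be delicate. The sign of $g'(x)$ is that of $W_Q(\log x-\theta_Q)+(W_Q+x)/\ln2$. This is clearly positive when $\log x\ge\theta_Q$, but it could fail for $x$ far below the mean of $Q$. If the direct monotonicity fails, I would use more flexibility in the choice of the swap. Either pick $a,c$ adjacent within $S$ around a single hole, or choose which endpoint to replace according to whether $\log p_b$ lies above or below $\theta_Q$, comparing both swaps against the common rest $Q\cup\{b\}$ rather than $R_a$ and $R_c$. Alternatively, I would apply unimodality twice along the one-parameter path that moves mass from $a$'s slot to $c$'s slot. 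The target remains the same: one of the two swaps does not decrease entropy. Since a non-decreasing swap preserves $|S|\le k$ and reduces $\Phi$, iterating it (or the minimal-$\Phi$ choice) yields an optimal interval $\{r,\dots,t\}$.
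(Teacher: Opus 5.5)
\textbf{Gap: the final comparison is never proved.} Up to the reduction, your route is the paper's. You exchange an extreme element of a maximizer for a hole, and you use that entropy, as a function of one mass $x$, peaks where $\log x$ equals the weighted log-mean $\theta_R$ of the rest. Your static $\theta_R$ is a cleaner form of the paper's moving-$\tau$ process, and your potential $\Phi$ handles ties more carefully than the paper does. Everything is correct up to the reduction to $\theta_{Q\cup\{c\}}\le\theta_{Q\cup\{a\}}$. But that inequality is the whole source of the contradiction, and you leave it open.

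The route you propose for it fails. The function $g$ need not be nondecreasing on $[p_c,p_a]$, even with $a,c$ extremal. By your own formula, $g'(p_c)<0$ whenever $W_Q(\theta_Q-\log p_c)>(W_Q+p_c)/\ln 2$; take for example $Q=\{j\}$, $p_j=1/4$, $p_c=2^{-100}$, $p_a=1/2$. Your fallback of taking $a,c$ adjacent in $S$ around a hole is worse. Then $Q$ can contain values outside $[p_c,p_a]$, and the needed inequality itself can fail: with $Q=\{j\}$, $p_j=1/2$, $p_a=2^{-99}$, $p_c=2^{-100}$, one checks $\theta_{Q\cup\{c\}}>\theta_{Q\cup\{a\}}$. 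The other fallbacks (a common rest $Q\cup\{b\}$, a path moving mass) are not carried out.

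\textbf{The fix.} You only need the endpoint comparison $g(p_c)\le g(p_a)$, not monotonicity in between, and extremality gives it in one line. Since $a=\min S$ and $c=\max S$ carry the largest and smallest probabilities in $S$, every $j\in Q$ satisfies $\log p_c\le\log p_j\le\log p_a$. Hence $\log p_c\le\theta_Q\le\log p_a$. Now $\theta_{Q\cup\{c\}}=\frac{W_Q\theta_Q+p_c\log p_c}{W_Q+p_c}$ is a convex combination of $\theta_Q$ and $\log p_c$, so it is at most $\theta_Q$. Likewise $\theta_{Q\cup\{a\}}\ge\theta_Q$. If $Q=\emptyset$, the inequality is just $\log p_c\le\log p_a$. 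This sandwich $\theta_{Q\cup\{c\}}\le\theta_Q\le\theta_{Q\cup\{a\}}$ is exactly the paper's claim that adding the maximum rather than the minimum gives a larger weighted geometric mean. With it, your argument is complete. So the choice $a=\min S$, $c=\max S$ is not one option among several: it is precisely what makes the key inequality true.
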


Next, we prove this lemma. An important parameter in our proof will be the weighted geometric mean defined as $\tau = \prod_{i\in S} p_i^{q_i}$. 

\begin{claim}\label{cl:tau-opt}
    The weighted geometric mean $\tau = \prod_{i\in S} p_i^{q_i}$ satisfies the following condition:
    \[
        \frac{\partial \ent(S)}{\partial p_j} 
        \begin{cases}
            < 0 & \text{ if } p_j > \tau\\ 
            = 0 & \text{ if } p_j = \tau\\
            > 0 & \text{ if } p_j < \tau.
        \end{cases}
    \]   
\end{claim}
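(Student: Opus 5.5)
The plan is a direct computation of the partial derivative. The one modelling choice is to treat $\ent(S)$ as a function of the \emph{unnormalized} weights $p_i$, $i\in S$. The conditional probabilities are $q_i = p_i/P$ with $P := \sum_{i\in S} p_i$, which is the notation the weighted geometric mean $\tau=\prod_{i\in S}p_i^{q_i}$ presupposes. So $\ent(S) = \ent(\cD_S) = -\sum_{i\in S} q_i \log q_i$, and varying a single $p_j$ moves both $q_j$ and the normalizer $P$. The first step is to rewrite the entropy in a form where the normalizer separates out:
\[
\ent(S) \;=\; -\sum_{i\in S}\frac{p_i}{P}\,\log p_i \;+\; \log P,
\]
which uses $\sum_{i\in S} q_i = 1$.

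Next, differentiate term by term with respect to $p_j$, using $\partial P/\partial p_j = 1$ and writing $\log = \ln/\ln 2$. The derivative of $\log P$ is $\frac{1}{P\ln 2}$. The derivative of $-\sum_i \frac{p_i}{P}\log p_i$ splits into two parts. The explicit $p_j$ contributes $-\frac{1}{P}\log p_j - \frac{1}{P\ln 2}$. The dependence through $1/P$ contributes $+\frac{1}{P^2}\sum_i p_i \log p_i = \frac{1}{P}\sum_i q_i\log p_i$. The two $\frac{1}{P\ln 2}$ terms cancel, which leaves
\[
\frac{\partial \ent(S)}{\partial p_j} \;=\; \frac{1}{P}\Bigl(\sum_{i\in S} q_i \log p_i \;-\; \log p_j\Bigr) \;=\; \frac{1}{P}\bigl(\log \tau - \log p_j\bigr).
\]
Here the last equality is the definition $\log\tau = \sum_i q_i\log p_i$.

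Finally, since $P>0$ and $\log$ is strictly increasing, the sign of the derivative is exactly the sign of $\tau - p_j$. It is negative when $p_j>\tau$, zero when $p_j=\tau$, and positive when $p_j<\tau$, as claimed. The only place requiring care is the bookkeeping of the $1/P$ dependence and the cancellation of the $1/\ln 2$ constants. The rest is routine.
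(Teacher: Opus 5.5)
Your proposal is correct and follows the paper's proof essentially step for step. Both rewrite $\ent(S)$ as $-\sum_{i\in S}\frac{p_i}{P}\log p_i + \log P$, differentiate with respect to $p_j$ for $j\in S$, and reach $\frac{1}{P}\bigl(\log\tau - \log p_j\bigr)$, whose sign is that of $\tau - p_j$; your explicit tracking of the $\frac{1}{P\ln 2}$ terms is slightly more careful than the paper, which differentiates as though $\log$ were natural (the constants cancel either way).
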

\begin{proof}
    Let us use the notation $p_i := \cD(i)$, i.e., $p_1 \ge p_2 \ge \dots \ge p_n$. 
    We have:
    \[
        \ent(S) 
        = - \frac{\sum_{i\in S} p_i \log \frac{p_i}{\sum_{i\in S} p_i}}{\sum_{i\in S} p_i}
        = - \frac{\sum_{i\in S} p_i \log p_i}{\sum_{i\in S} p_i} + \log \left(\sum_{i\in S} p_i\right).
    \]
        Note that the entropy of the conditional distribution on $S$ given by $\ent(\cD_S)$ is a function of the values $p_i, i\in S$. 
    We consider the partial derivative of this function with respect to any specific $p_j, j\in S$:
    \begin{align*}
        \frac{\partial \ent(S)}{\partial p_j}
        &= - \frac{\left(\sum_{i\in S} p_i\right) (1+\log p_j)  - \sum_{i\in S} p_i \log p_i}{\left(\sum_{i\in S} p_i\right)^2} + \frac{1}{\sum_{i\in S} p_i} \\
        & = - \frac{\log p_j}{\sum_{i\in S} p_i} + \frac{\sum_{i\in S} p_i \log p_i}{\left(\sum_{i\in S} p_i\right)^2} = \frac{\frac{\sum_{i\in S} p_i\cdot \log p_i}{\sum_{i\in S} p_i} - \log p_j}{\sum_{i\in S} p_i}
        = \frac{\sum_{i\in S} q_i \log p_i - \log p_j}{\sum_{i\in S} p_i},
    \end{align*}
    where $q_i := \frac{p_i}{\sum_{i\in S} p_i}$. 
    Recall that $\tau = \prod_{i\in S} p_i^{q_i}$, i.e., $\log \tau = \sum_{i\in S} q_i \log p_i$. The claim follows.
\end{proof}

Next, we state some simple properties of the weighted geometric mean $\tau$.
\begin{claim}\label{cl:tau-stable}
    The weighted geometric mean $\tau$ over a set $S$ does not change if $\tau$ is added to $S$.
\end{claim}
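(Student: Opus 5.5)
Let $P := \sum_{i\in S} p_i$ and $q_i := p_i/P$ as in the proof of \Cref{cl:tau-opt}, so that $\log \tau = \sum_{i\in S} q_i \log p_i$, i.e., $\log\tau$ is the $q$-weighted average of the $\log p_i$. The statement should be read as: if we enlarge $S$ by one new element $j$ whose probability is exactly $p_j = \tau$, then the weighted geometric mean $\tau'$ of $S' := S\cup\{j\}$ equals $\tau$. The plan is a direct one-line computation of the new weighted average of logarithms.

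Write $P' = P + \tau$ and new weights $q'_i = p_i/P'$ for $i\in S'$. Then
\[
\log \tau' = \sum_{i\in S'} q'_i \log p_i = \frac{\sum_{i\in S} p_i \log p_i + \tau \log \tau}{P+\tau} = \frac{P\log\tau + \tau\log\tau}{P+\tau} = \log\tau,
\]
where the third equality uses $\sum_{i\in S} p_i\log p_i = P\sum_{i\in S} q_i\log p_i = P\log\tau$. Exponentiating gives $\tau'=\tau$. Conceptually, this is just the fact that appending to a weighted average a new point equal to that average, with any positive weight, leaves the average unchanged. The same computation works for adding any positive mass at value $\tau$ (or for removing an element whose probability equals $\tau$), which is presumably how the claim will be used in the interval argument.

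There is no real obstacle here. The only care needed is to fix the interpretation: "adding $\tau$ to $S$" means adding an element with probability exactly $\tau$, and the unnormalized probabilities $p_i$ (not the conditional $q_i$) enter the geometric mean. The normalization $P$ cancels cleanly.
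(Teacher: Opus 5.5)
Your proof is correct and takes the same approach as the paper: both compute the weighted geometric mean of $S\cup\{\tau\}$ directly and use $\sum_{i\in S} p_i\log p_i = P\log\tau$. The paper works multiplicatively, writing $\tau' = \bigl(\prod_{i\in S} p_i^{p_i/P}\bigr)^{P/(P+\tau)}\cdot\tau^{\tau/(P+\tau)} = \tau$, whereas you take logarithms. Your side remark about removing an element of probability exactly $\tau$ is the form in which the claim is used in the proof of the interval lemma.
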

\begin{proof}
    Recall that $\tau = \prod_{i\in S} p_i^{q_i}$ where $q_i = \nf {p_i}{p}$, where $p = \sum_{i\in S} p_i$. Let $q'_i$ be the weighted geometric mean if we include $\tau$, i.e., 
    \[
        q'_i = \begin{cases}
                    \nf{p_i}{p+\tau} & \text{ if } i\in S \\
                    \nf{\tau}{p+\tau} & \text{ if } p_i = \tau.
                \end{cases}
    \]
    The new weighted geometric mean $\tau'$ is given by
    \begin{align*}
        \tau' 
        &= \left(\prod_{i\in S} p_i^{q'_i}\right) \cdot \tau^{\nf{\tau}{p+\tau}} = \left(\prod_{i\in S} p_i^{\nf{p_i}{p+\tau}}\right) \cdot \tau^{\nf{\tau}{p+\tau}} = \left(\prod_{i\in S} p_i^{\nf{p_i}{p}}\right)^{\nf{p}{p+\tau}} \cdot \tau^{\nf{\tau}{p+\tau}} = \tau^{\nf{p}{p+\tau}} \cdot \tau^{\nf{\tau}{p+\tau}} = \tau. \qedhere
    \end{align*}
\end{proof}

\begin{claim}\label{cl:tau-monotone}
    Consider two sets $S_1 = S \cup \{\pmax\}, S_2 = S \cup \{\pmin\}$ such that for any $i\in S$, we have $\pmin \le p_i \le \pmax$. Let $\tau_1, \tau_2$ be the weighted geometric means of $S_1, S_2$ respectively. Then, $\tau_1 \ge \tau_2$.
\end{claim}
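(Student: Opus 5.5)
Write $p = \sum_{i\in S} p_i$ and define $f(x) = \frac{1}{p+x}\left(\sum_{i\in S} p_i\log p_i + x\log x\right)$. Then $\log\tau_1 = f(\pmax)$ and $\log\tau_2 = f(\pmin)$, since the weighted geometric mean of $S\cup\{x\}$ has logarithm equal to the $q$-weighted average of the logarithms. The claim $\tau_1\ge\tau_2$ is therefore $f(\pmax)\ge f(\pmin)$. The plan is to understand how $f$ varies as $x$ ranges over $[\pmin,\pmax]$ and compare its values at the two endpoints.

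First I differentiate:
\[
f'(x) = \frac{(p+x)(\log x + \log e) - \bigl(\sum_{i\in S} p_i\log p_i + x\log x\bigr)}{(p+x)^2}.
\]
Let $\log\tau_S = \frac{1}{p}\sum_{i\in S} p_i\log p_i$. The numerator then simplifies to
\[
p\bigl(\log x - \log\tau_S\bigr) + (p+x)\log e.
\]
Because of the $(p+x)\log e$ term, the sign of $f'$ is not immediately clear, so I will not rely on monotonicity of $f$.

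Instead I compare the endpoints directly, which is cleaner. Write $a = \log\tau_S$. Then $f(x) - a = \frac{x(\log x - a)}{p+x}$. So it suffices to show
\[
\frac{\pmax(\log\pmax - a)}{p+\pmax} \;\ge\; \frac{\pmin(\log\pmin - a)}{p+\pmin}.
\]
Since every $p_i\in[\pmin,\pmax]$ and $a$ is a weighted average of the $\log p_i$, we have $\log\pmin\le a\le\log\pmax$. The left side is therefore $\ge 0$ and the right side is $\le 0$, and the inequality follows. (If $S$ is empty, then $\tau_1 = \pmax \ge \pmin = \tau_2$ directly.)

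The main obstacle is this reduction: rewriting $\log\tau$ of $S\cup\{x\}$ as $a + \frac{x(\log x - a)}{p+x}$. This is also the mechanism behind Claim~\ref{cl:tau-stable}, where $x=\tau$ makes the correction term vanish. Once that identity is in place, the sign argument is immediate.
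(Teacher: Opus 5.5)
Your proof is correct, but it takes a different route from the paper's. You go through the intermediate quantity $\tau_S$ and the identity $\log\tau_{S\cup\{x\}}=\log\tau_S+\frac{x}{p+x}(\log x-\log\tau_S)$, which says that adding $x$ pulls the weighted geometric mean toward $x$. This gives the sandwich $\tau_2\le\tau_S\le\tau_1$, which is slightly stronger than the claim. It also explains Claim~\ref{cl:tau-stable} as the special case $x=\tau_S$.

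The paper instead expands the ratio $\tau_1/\tau_2$ directly and discards the factor $(\pmax/\pmin)^{\pmin/(p+\pmin)}\ge 1$. It then regroups what remains as $\prod_{i\in S}(\pmax/p_i)^{p_i c}$ with $c=\frac{\pmax-\pmin}{(p+\pmax)(p+\pmin)}\ge 0$. That computation is less transparent, but it uses only $p_i\le\pmax$ and $\pmin\le\pmax$, never $p_i\ge\pmin$. It also covers $S=\emptyset$ without a special case, since empty products equal $1$.

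Your sandwich, by contrast, needs the lower bound $\log\pmin\le\log\tau_S$, and it needs $S\neq\emptyset$ for $\tau_S$ to be defined. If you wanted the paper's weaker hypothesis, the derivative computation you set aside would supply it. The numerator $p(\log x-\log\tau_S)+(p+x)\log e$ is positive for $x\ge\tau_S$, so $f$ is increasing there, which handles the case $\pmin>\tau_S$. In the application in Lemma~\ref{lem:interval} both bounds hold, so either argument suffices.
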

\begin{proof}
    Let $p = \sum_{i\in S} p_i$. We have 
    \begin{align*}
        \frac{\tau_1}{\tau_2} 
        &= \frac{\prod_{i\in S} p_i^{\nf{p_i}{p+\pmax}}}{\prod_{i\in S} p_i^{\nf{p_i}{p+\pmin}}}
        \cdot \frac{\pmax^{\nf{\pmax}{p+\pmax}}}{\pmin^{\nf{\pmin}{p+\pmin}}} \ge \prod_{i\in S} p_i^{\frac{p_i(\pmin-\pmax)}{(p+\pmax)(p+\pmin)}}\cdot \pmax^{\frac{p(\pmax-\pmin)}{(p+\pmax)(p+\pmin)}} = \prod_{i\in S} \left(\frac{\pmax}{p_i}\right)^{\frac{p_i(\pmax-\pmin)}{(p+\pmax)(p+\pmin)}} \ge 1.\qedhere
    \end{align*}
\end{proof}
Now, we are ready to prove \Cref{lem:interval}.
\begin{proof}[Proof of \Cref{lem:interval}]
    Let $\Smax$ denote a subset that maximizes $\ent(\cD_S)$.
    Let $\pmin, \pmax$ respectively denote the minimum and maximum values in $\Smax$. Assume for contradiction that there is some $q \notin \Smax$ such that $\pmax < q < \pmin$. We will show that $\max\left(\ent(\Smax \cup\{q\}\setminus \{\pmax\}), \ent(\Smax \cup\{q\}\setminus \{\pmin\})\right) \ge \ent(\Smax)$. Clearly, this would establish the lemma.

    We consider two continuous processes. The first is to decrease the value $\pmax$. As we do so, it might not remaining the maximum value in set $\Smax$ but we still denote the dynamic value by $\pmax$. Note that the value of the weighted geometric mean of $\Smax$, denoted $\tau$, also changes; the termination condition we define below is with respect to the current value of the geometric mean and not its original value. This process ends when either of the following two events happens:
    \begin{itemize}
        \item either, $\pmax$ decreases to the current value of the weighted geometric mean $\tau$.
        \item or, $\pmax$ decreases all the way to $q$ while always staying above the (dynamic) weighted geometric mean $\tau$. Then, by \Cref{cl:tau-opt}, the value of $\ent(\Smax)$ is non-decreasing throughout this entire process. Therefore, we have replaced $\pmax$ by $q$ and not decreased $\ent(\Smax)$, which is a contradiction.
    \end{itemize}

    Next, we consider the second process. This is similar to the first process, except that we now increase $\pmin$ instead of decreasing $\pmax$. As earlier, as we increase the value of $\pmin$, it might not remaining the minimum value in set $\Smax$ but we still denote the dynamic value by $\pmin$. Also, as in the previous process, the value of the weighted geometric mean $\tau$ changes during this process, and the termination condition we define below is with respect to the current value of the geometric mean and not its original value. This process ends when either of the following two events happens:
    \begin{itemize}
        \item either, $\pmin$ increases to the current value of the weighted geometric mean $\tau$.
        \item or, $\pmin$ increases all the way to $q$ while always staying below the (dynamic) weighted geometric mean $\tau$. Then, by \Cref{cl:tau-opt}, the value of $\ent(\Smax)$ is non-decreasing throughout this entire process. Therefore, we have replaced $\pmin$ by $q$ and not decreased $\ent(\Smax)$, which is a contradiction.  
    \end{itemize}
    We are left with the case that in both processes, the first event happened. Let $\tau_1, \tau_2$ be the final values of $\tau$ in these two processes respectively. For the first event to happen in the first process, we must have $q < \tau_1$; similarly, for the first event to happen in the second process, we must have $q > \tau_2$. Thus 
    \begin{equation}\label{eq:tau}
        \tau_2 < q < \tau_1.
    \end{equation}
    By $\Cref{cl:tau-stable}$, the value of $\tau_1$ is the weighted geometric mean of the set $\Smax\setminus \{\pmax\}$; similarly, the value of $\tau_2$ is the weighted geometric mean of the set $\Smax\setminus \{\pmin\}$. Therefore, by \Cref{cl:tau-monotone}, $\tau_1 < \tau_2$. This contradicts \eqref{eq:tau}.
\end{proof}

\section{Lower Bound on \lru}
\label{sec:lru_lb}

\begin{theorem}
	There are distributions $\cD$ such that the cost of LRU is $\Omega(\tailp) \cdot \hsub$.
\end{theorem}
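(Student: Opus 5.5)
We will construct an explicit family of distributions and show that the per-request miss probability of \lru is $\Omega(\hsub\cdot\tailp)$. Here the cost of \lru means its expected cost $\sum_i p_i q_i$ per request, with $q_i$ as in \Cref{sec:unknown2}. The natural candidate is a uniform head plus a light tail. Fix a parameter $m\le k$. Put $k$ pages in the head, each with probability $(1-\eps)/k$, and let the tail $\bT$ consist of many pages (say $n-k\gg k$ of them) sharing total mass $\tailp=\eps$. The number of tail pages is chosen large enough that each individual tail page is very unlikely to be re-requested within a window of length $O(k\log k)$. We will take $\eps = c/(k\ln k)$ for a small constant $c$.

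First we compute $\hsub$. The head pages all lie in one level, so $n_{\max}\ge k$, and \Cref{lem:dominate} gives $\hsub=\Theta(\log k)$. We also need $\tailp$ to be nonzero but small. Note that the lower bound of \Cref{lem:opt_lb} is not needed here: the statement only asks to lower bound the cost of \lru by $\Omega(\hsub\cdot\tailp)$.

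The core step is to lower bound $q_i$ for a head page $i$. Recall that $q_i$ is the probability that $i$ is not among the first $k$ distinct pages in a fresh i.i.d.\ sequence; equivalently, looking backward from a request to $i$, it is the probability that at least $k$ other distinct pages were requested since $i$'s last request. Because every tail request is (whp) to a distinct new page, this event is implied by the following: before the time at which $i$ is requested, the number of distinct other head pages seen plus the number of tail requests reaches $k$. Let $D$ be the number of distinct head pages other than $i$ requested before $i$'s next request. By the symmetry of the uniform head, $i$'s position in the order of first appearances among head pages is uniform on $[k]$. So with probability $\Omega(1)$ page $i$ is among the last half of the head pages to appear, say at position $k-j$ with $j\le k/2$. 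In that case, $i$ is evicted iff at least $j+1$ tail requests occur before $i$ first appears. The expected number of requests before $i$ first appears is $\approx k/(1-\eps)$, so the expected number of tail requests is $\approx \eps k$. This alone is too few to produce a miss when $j$ is large. The $\log$ factor must therefore come from the pages appearing last, in coupon-collector fashion. If $i$ is among the last $j$ head pages to appear, the time until $i$ appears is about $k H_k - k H_j\approx k\ln(k/j)$. The number of tail requests in that window is then about $\eps k\ln(k/j)$, and for $j=1$ this is $\approx\eps k\ln k = c$.

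Summing over positions, we get
\[
\sum_{i\in\bH} p_i q_i \approx \frac{1}{k}\sum_{j=1}^{k}\Pr\bigl[\mathrm{Poisson}(\eps k\ln(k/j))\ge j\bigr],
\]
where the term for $j=1$ is $\Omega(\eps k\ln k)$. This is only $\Omega(\eps\ln k)$ after multiplying by $1/k$, which is exactly $\Omega(\hsub\cdot\tailp)$. So the plan is to make the $j=1$ term rigorous. With probability $1/k$ page $i$ is the last head page to be collected. Conditioned on this, the number of requests before $i$ appears is at least $k\ln k/2$ with constant probability, by coupon-collector concentration applied to the other $k-1$ pages. Hence at least one tail request occurs with probability $\Omega(\min(1,\eps k\ln k))$, and that forces a miss, since all $k-1$ other head pages plus a tail page have then been seen. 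Summing over $i$, the head contribution is $\Omega(\eps\log k)=\Omega(\hsub\cdot\tailp)$.

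The main obstacle is the conditioning. The event that $i$ is last and the long waiting time are both tied to the interleaving of tail requests, so we must argue that they remain essentially independent. We plan to handle this with the Poisson-process embedding described in the introduction: each page runs an independent rate-$p_i$ process, tail arrivals are independent of head arrivals, and $\eps\le c/(k\ln k)$ ensures that tail arrivals do not disturb the coupon-collector time.
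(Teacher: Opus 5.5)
Your proposal is correct and is essentially the paper's argument. Both use a uniform head of $k$ pages plus tail mass $\eps=\Theta(1/(k\log k))$, and in both a single tail request that arrives before the coupon collector on the head finishes forces a miss on the last-collected head page, giving per-request cost $\Omega(1/k)=\Omega(\hsub\cdot\tailp)$. The conditioning you flag is benign: the head subsequence is independent of the head/tail interleaving, and by symmetry the full head-collection time is independent of which head page is last. The paper sidesteps it entirely by using a single tail page and combining, via a union bound, a Markov bound on that page's arrival time with a coupon-collector lower-tail bound.
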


\begin{proof}
	Let $\cD$ be a distribution that has one page with probability $\eps = 4 / (k \log k)$, and $k$ pages with probability $(1-\eps)/k$. By a Markov bound, the $\eps$ probability page appears by time $2 / \eps = k \log k / 2$ with probability at least $1/2$. Furthermore, by lower bounds on the coupon collector time (see, e.g.,  \cite[pages 60-63]{Motwani95}) the last page of the top $k$ appears before time $9/10 \cdot k \log k$ with probability less than $1/4$. Hence, with probability at least $1/4$, the algorithm pays $(1-\eps)/k = \Omega(\tailp \cdot \hsub)$.
\end{proof}

This does not rule out that LRU is $\hsub$-competitive, since this bound is witnessed by a distribution for which $\tailp \approx \opt$, but it means a different proof technique than ours is required.

{\small
\bibliography{refs}
\bibliographystyle{plainurl}
}

\end{document}